\documentclass[10pt,a4paper,reqno]{amsart}
\usepackage{amsthm,amssymb,amstext,graphicx,comment}
\usepackage[foot]{amsaddr}
\usepackage[a4paper, margin=2.5cm]{geometry}
\usepackage{amscd}
\usepackage{mathtools}
\usepackage[bibliography= common]{apxproof}
\usepackage[linesnumbered,ruled,vlined]{algorithm2e}
\SetKwInput{KwInput}{Input}                
\SetKwInput{KwOutput}{Output}              

\usepackage{hyperref}
\usepackage{booktabs}
\usepackage{float}
\usepackage[table]{xcolor}
\usepackage{placeins}
\usepackage{mathabx}
\usepackage{bm}

\usepackage{bbm}
\usepackage{subcaption}
\usepackage{array}
\usepackage{empheq}
\usepackage{siunitx}
\usepackage{textcomp}
\usepackage{multirow}
\hypersetup{
  colorlinks=true,
  pdfauthor={© Sven Karbach},
  pdfsubject={},
  pdfkeywords={}
}
\makeatletter
\newsavebox{\@brx}
\newcommand{\llangle}[1][]{\savebox{\@brx}{\(\m@th{#1\langle}\)}%
  \mathopen{\copy\@brx\kern-0.5\wd\@brx\usebox{\@brx}}}
\newcommand{\rrangle}[1][]{\savebox{\@brx}{\(\m@th{#1\rangle}\)}%
  \mathclose{\copy\@brx\kern-0.5\wd\@brx\usebox{\@brx}}}
\makeatother
\DeclareMathOperator{\diag}{diag}

\usepackage{enumitem}
\usepackage[utf8]{inputenc}
\usepackage[T1]{fontenc}
\usepackage{lmodern}
\usepackage{bibentry}
\usepackage{soul}
\usepackage{tikz}
\usepackage{eso-pic}
\usepackage[capitalise]{cleveref}
\newtheorem{theorem}{Theorem}[section]
\newtheorem{corollary}[theorem]{Corollary}
\newtheorem{assumption}[theorem]{Assumption}
\newtheorem{example}[theorem]{Example}

\newtheorem{definition}[theorem]{Definition}
\newtheorem{remark}[theorem]{Remark}
\newtheorem{proposition}[theorem]{Proposition}
\newtheoremstyle{example}
  {.3\baselineskip}
  {.3\baselineskip}
  {\normalsize}  
  {0pt}       
  {\bfseries} 
  {.}         
  {5pt plus 1pt minus 1pt} 
  {}          
\theoremstyle{example}
\DeclareMathOperator\Tr{Tr}

\newcommand*\D{\mathop{}\mathrm{d}}

\newcommand{\e}{\mathrm{e}}

\newcommand{\ve}{\mathbf{e}}

\begin{document}

\title[Semi-Static Hedging of Covariance Risk in Multi-Asset Derivatives]{Semi-Static Variance-Optimal Hedging of Covariance Risk in Multi-Asset Derivatives}

\address{\today. Korteweg-de Vries Institute for Mathematics and Informatics Institute, University of Amsterdam, The Netherlands}

\thanks{The authors gratefully acknowledge financial support
from the Amsterdam University Fund through the AUF Impact Call (Fall 2024)}

\maketitle
\begin{center}
  \begin{tabular}{cc}
    \textsc{Konstantinos Chatziandreou} & \textsc{Sven Karbach} \\
    \small[\texttt{\MakeLowercase{k.chatziandreou@uva.nl}}] &
                                                              \small[\texttt{\MakeLowercase{sven@karbach.org}}]
    \\ 
  \end{tabular}
\end{center}

\begin{abstract}
We develop a semi-static framework for the variance-optimal hedging of multi-asset derivatives exposed to correlation and covariance risk. The approach combines continuous-time dynamic trading in the underlying assets with a static portfolio of auxiliary contingent claims. Using a multivariate Galtchouk--Kunita--Watanabe decomposition, we show that the resulting global mean-variance problem decouples naturally into an inner continuous-time projection onto the space spanned by the underlying assets and an outer finite-dimensional quadratic optimization over the static hedging instruments. To systematically select suitable auxiliary claims, we leverage multidimensional functional spanning theory, establishing that otherwise unhedgeable cross-gamma exposures can be structurally mitigated through static strips of vanilla, product, and spread options. As a central application, we derive explicit semi-static replication formulas for covariance swaps and geometric dispersion trades. Our framework accommodates a broad class of asset dynamics, including quadratic and stochastic Volterra covariance models, as well as affine stochastic covariance models with jumps, yielding tractable semi-closed-form solutions via Fourier transform techniques. Extensive numerical experiments demonstrate that incorporating optimally weighted static strips of cross-asset instruments substantially reduces the mean-squared hedging error relative to purely dynamic benchmark strategies across various model classes.

\medskip
\noindent \textbf{Keywords:} Correlation risk; covariance swaps; variance-optimal hedging; semi-static hedging; stochastic covariance models; affine processes; multi-asset derivatives; Quanto and Spread Options; Dispersion.
\end{abstract}

\section{Introduction}
\label{sec:introduction}

The valuation and hedging of multi-asset derivatives remains a central challenge in modern financial mathematics, largely driven by the presence of correlation and covariance risk. Across equity, fixed-income, commodity, and foreign-exchange markets, the joint dynamics of asset prices give rise to complex dependence structures and, in particular, non-trivial cross-gamma exposures. These exposures cannot, in general, be eliminated through dynamic trading in the underlying assets alone so that markets with multi-asset contingent claims are inherently incomplete. The need to manage such joint risk factors has both theoretical and practical implications. From a modeling perspective, stochastic dependence substantially increases both analytical and computational complexity, particularly in high-dimensional settings. From a market perspective, this complexity has fostered the emergence of a rich class of correlation-sensitive instruments and trading strategies, including dispersion trades, correlation swaps, and covariance swaps, which play a central role in portfolio hedging, volatility arbitrage, and relative-value trading. Despite their growing importance, the development of analytically tractable and economically effective hedging frameworks for such instruments remains a challenging and, to a large extent, open problem.

Historically, the mathematical finance literature on multi-asset derivatives has largely focused on extending the complete-market paradigm of the Black--Scholes model. Early contributions include Margrabe’s seminal pricing formula for exchange options~\cite{Margabe1978}, along with subsequent generalizations to more complex cross-asset payoffs~\cite{ExchangeOptions}. Building on these foundations, Stulz~\cite{STULZ1982161} and Cox et al.~\cite{COX1979229} derived closed-form solutions for options on the maximum or minimum of two assets. Even for standard instruments such as basket options, where the sum of lognormal random variables fails to remain lognormal, an extensive body of work has developed, ranging from moment-matching approximations~\cite{Kolpakov2014} and spectral methods~\cite{Dempster2002} to static replication bounds~\cite{multiassetCarr}. More recently, Pellegrino~\cite{Pellegrino2016} proposed closed-form approximations for spread options that retain much of the classical tractability. While these classical and approximation-based approaches offer robust pricing mechanisms within deterministic volatility frameworks, they obscure the profound hedging challenges introduced by stochastic co-movements. When volatilities and correlations follow joint stochastic dynamics, the classical replication arguments underlying geometric Brownian motion break down, and market participants must resort to probabilistic frameworks capable of consistently capturing the evolution of the full instantaneous covariance matrix. 

In this paper, we develop a unified framework for the pricing and variance-optimal (VO) hedging of multi-asset contingent claims in incomplete markets driven by stochastic instantaneous covariance processes. Our approach combines multivariate quadratic hedging theory, a broad class of stochastic covariance models (including affine, quadratic, and Volterra specifications), and multidimensional replication formulas within a tractable Fourier-analytic setting. Particular emphasis is placed on correlation- and covariance-linked derivatives, and on the structural role of auxiliary instruments in reducing residual hedging risk. 

\subsection{Related Literature and Methodological Context}

A related perspective on the hedging of correlation risk is provided by Cont et al.~\cite{RamaContEquityCorrelation}, who construct an arbitrage-free multi-asset pricing model consistent with observed index and constituent option prices via a random mixture of reference models. The distribution of mixture weights is determined through a well-posed convex optimization problem, yielding a flexible calibration framework that quantifies the residual model uncertainty associated with implied correlation structures. As a by-product, they derive static hedging strategies that reduce the sensitivity of multi-asset derivatives to this uncertainty. While their objective differs from ours (as they focus on an exponential hedging formulation~\cite{exphedging} via purely static strategies rather than variance-optimal semi-static replication), their work is closely related in spirit, underscoring the structural role of option-based static positions in mitigating unhedgeable correlation risk.

To operationalize the static hedging component of our framework, we rely on the theory of functional payoff replication. In the univariate setting, the inverse problem, i.e. replicating an arbitrary payoff $h(S_T)$ using a portfolio of vanilla options, is well studied. Under mild regularity conditions, payoffs can be decomposed into an affine position in the underlying asset alongside a continuum of out-of-the-money (OTM) calls and puts, as established by Breeden and Litzenberger~\cite{Breeden} and Carr and Madan~\cite{Carr}. This functional representation has two profound consequences: European option surfaces structurally encode the entire risk-neutral marginal density, and highly non-linear functionals (such as the log-contracts underlying variance swaps) admit robust, model-free static representations via OTM option strips. In market practice, this theoretical continuum of strikes is approximated by finite strips, where discrete portfolio weights are optimized to minimize the replication error (see, e.g., Carr and Wu~\cite{CarrStaticHedgingStandardOptions} and Wu~\cite{WuRobustHedging}). A related one-dimensional semi-static perspective is provided by the robust replication and correlation-immunization methodology of Carr--Lee type strategies, where suitably chosen static European claims are combined with dynamic trading in the underlying so as to neutralize first-order leverage effects; in this sense, our multivariate variance-optimal framework may be viewed as an exact projection-based counterpart of that idea for covariance-sensitive claims \cite{Lin04032019}.

However, for derivatives sensitive to correlation and cross-asset co-movement, e.g., dispersion trades, covariance swaps, or best-of/worst-of structure, a univariate option strip is fundamentally insufficient. The payoff structurally depends on the \emph{joint} distribution of the asset vector $(S_T^1,\dots,S_T^d)$. A multivariate spanning theorem is therefore required to provide a systematic methodology for (approximately) completing the market. Several multidimensional extensions have been proposed: Bossu et al.~\cite{Bossu2021,Bossu04052022} derive spanning representations utilizing basket options via integral equations and inverse Radon transforms, while Schmutz~\cite{Schmutz01072011} investigates semi-static replication for Margrabe-type payoffs. Furthermore, Cui et al.~\cite{Cui04052022} and Madan et al. ~\cite{Madan2021Pricing} establish Carr--Madan-type extensions relying on multiple integrals of product options.

\subsection{Summary of Main Contributions}

Building upon this theoretical foundation, our paper makes the following main contributions:

\subsubsection*{General Semi-Static VO Hedging Framework.} 
We formulate a general semi-static variance-optimal hedging framework in a stochastic covariance setting. Using a multivariate Galtchouk--Kunita--Watanabe (GKW) decomposition, we use the fact that the global mean-variance problem decomposes naturally into an \emph{inner} dynamic projection problem and an \emph{outer} static quadratic optimization over auxiliary instruments. The dynamic component corresponds to the orthogonal projection of the target claim onto the space of stochastic integrals generated by the underlying asset prices, while the static component reduces to a finite-dimensional convex problem driven by the covariance structure of the residual martingales. This yields a transparent interpretation of semi-static hedging as a projection in an enlarged space of attainable claims, extending the semi-static framework of~\cite{semistaticsparse} to a genuinely multivariate stochastic covariance setting.

\subsubsection*{Instrument Selection via Multidimensional Spanning.} 
Developed in Section~\ref{sec:replicationtheory}, our second contribution concerns the systematic construction of auxiliary instruments. Making systematic use of the aforementioned multidimensional spanning formulas, we show that sufficiently regular multi-asset payoffs can be decomposed into static portfolios of vanilla and product (quanto) or spread options. This characterization provides a theoretically sound methodology for selecting static hedging instruments, relating the structural features of a multi-asset payoff (most notably its cross-gamma dependence) to concrete classes of OTC-traded options. 

\subsubsection*{Multivariate Fourier Representation.} 
In Sections~\ref{sec:fourierrepresentation} and~\ref{sec:affinecase}, we establish a multivariate Fourier representation of the GKW decomposition. Extending earlier univariate results, we show that Fourier integration and GKW projection commute under suitable integrability conditions. Consequently, both the optimal dynamic hedging strategy and the residual orthogonal martingale admit explicit Fourier representations in terms of exponential basis functions. This provides a tractable analytical alternative to BSDE-based numerical approaches and enables explicit hedging formulas in high-dimensional settings.

\subsubsection*{Analytical Solutions in Affine Models.} 
Our fourth contribution, presented in Section~\ref{sec:affinecase}, is the explicit analytical treatment of the variance-optimal semi-static hedging problem in affine stochastic covariance (ASC) models. Specializing to this class (which includes the Wishart stochastic covariance model alongside jump extensions) we derive semi-closed-form expressions for the variance-optimal dynamic hedge ratios, the minimal mean-squared hedging error, and the covariance matrices governing the optimal static portfolio weights. 

\subsubsection*{Applications to Covariance Swaps and Dispersion Trades and its Numerics} 
As a central application, Section~\ref{sec:replicationtheory} analyzes covariance swaps and geometric dispersion-type products. We derive a pathwise decomposition of realized covariance and establish that the terminal payoff admits a semi-static representation involving dynamic trading in the underlying assets, together with static positions in log-contracts and a continuum of out-of-the-money product options. Embedding this into the variance-optimal framework allows us to characterize the residual unhedgeable variance risk explicitly, these results are complemented by Section~\ref{sec:covswap-semistatic} where we tailor the setting to affine stochastic covariance models. Finally, Section~\ref{sec:numerics-and-examples} provides extensive numerical illustrations. The results demonstrate that appropriately chosen static instruments can substantially reduce the mean-squared hedging error. Owing to the affine structure, all quantities of interest can be computed via low-dimensional matrix-valued Riccati equations and Fourier inversion, ensuring computational tractability for practical implementation and calibration.

\subsection{Organization of the Paper}

The remainder of the paper is organized as follows. Section~\ref{sec:VO-framework} introduces the market setup, formulates the semi-static variance-optimal hedging problem, and derives its solution via a multivariate Galtchouk--Kunita--Watanabe decomposition. Section~\ref{sec:replicationtheory} develops the multidimensional spanning theory for instrument selection and establishes exact static replication formulas for covariance swaps and geometric dispersion trades. In Section~\ref{sec:fourierrepresentation}, we derive explicit Fourier-analytic representations for the semi-static hedging problem for suitably integrable payoffs. Building on these results, Section~\ref{sec:affinecase} specializes the framework to affine stochastic covariance models, yielding semi-closed-form expressions for the optimal dynamic hedge ratios and static portfolio weights, with Section \ref{sec:covswap-semistatic} complementing these results for the semi-static hedging of covariance swaps. Section~\ref{sec:numerics-and-examples} presents comprehensive numerical experiments illustrating the efficacy of the proposed hedging strategies across various model classes. 

\section{Variance-Optimal Pricing and Hedging of Multi-Asset Contingent Claims}\label{sec:VO-framework}

In this section, we formulate and solve the \emph{semi-static} variance-optimal (VO) hedging problem in a multi-asset financial market. The hedging agent is permitted to trade \emph{dynamically} in the underlying risky assets and, simultaneously, to take \emph{static} positions in a finite family of auxiliary hedging instruments (e.g., vanilla options, spread or quanto options, and covariance-linked products). The resulting global mean-variance problem admits an elegant and computationally tractable solution via a multivariate Galtchouk-Kunita-Watanabe (GKW) decomposition. Specifically, the dynamic trading component is determined by the orthogonal projection of the claim onto the stable subspace generated by the underlying assets, while the optimal static weights are obtained by solving a finite-dimensional quadratic program.

\subsection{Market Setup and Model Dynamics}\label{sec:MarketSetup}

We consider a financial market defined on a filtered probability space $(\Omega,\mathcal{F},\mathbb{F},\mathbb{Q})$ satisfying the usual conditions. The market consists of a risk-free cash account $S^0=(S_t^0)_{t\geq 0}$ and $d\in\mathbb{N}$ risky assets with price processes denoted by $\bm{S}_t=(S_t^1,\dots,S_t^d)^\top\in\mathbb{R}^{d}$. The cash account evolves according to
\begin{align}\label{eq:bank-account}
    \D S_t^0 &= S_t^0 r_t\,\D t,\qquad S_0^0=1,
\end{align}
where we assume throughout that the risk-free rate $(r_t)_{t\in[0,T]}$ is a deterministic function of time. 

Let $\mathbb{S}_+^d$ denote the cone of symmetric, positive semidefinite $d\times d$ matrices.  
The central modeling assumption of this paper is that the risky assets evolve under a \emph{stochastic covariance} specification. More precisely, the matrix-valued quadratic covariation of the log-returns is modeled by an adapted process
\begin{align}\label{eq:inst-cov}
    \bm{\Sigma} &= (\bm{\Sigma}_t)_{t\ge 0}, \qquad \bm{\Sigma}_t \in \mathbb{S}_+^d,
\end{align}
which we refer to as the \emph{instantaneous covariance process}. Throughout the present section, our results rely on the semimartingale structure and square-integrability of the price processes only. Nevertheless, for concreteness and to facilitate later explicit computations, we introduce three principal model classes that motivate the subsequent analysis.

\paragraph{Class 1: Affine Stochastic Covariance Models with Jumps.}
In the affine stochastic covariance specification, the instantaneous covariance process $\bm{\Sigma}$ takes values in the positive semidefinite cone $\mathbb{S}_+^d$. The most prominent continuous representative is the Wishart process, which we extend here to accommodate jump dynamics. The covariance state matrix is governed by the stochastic differential equation (SDE)
\begin{align}\label{eq:AffineDynamics}
    \D\bm{\Sigma}_t &= \big(\bm{\Omega}+\mathcal{L}(\bm{\Sigma}_t)\big)\,\D t + \sqrt{\bm{\Sigma}_t}\,\D\bm{W}_t \bm{A} + \bm{A}^\top \D\bm{W}_t^\top \sqrt{\bm{\Sigma}_t} + \D\bm{J}_t,\qquad \bm{\Sigma}_0=\bm{x}_0,
\end{align}
where $\bm{\Omega}\in\mathbb{S}_+^d$, $\mathcal{L}\colon\mathbb{S}^d \to \mathbb{S}^d$ is a linear map (e.g., $\mathcal{L}(\bm{Y})=\bm{M}\bm{Y}+\bm{Y}\bm{M}^\top$ for some $\bm{M} \in \mathbb{R}^{d \times d}$), $\bm{A}\in\mathbb{R}^{d\times d}$, $\bm{W}$ is a $d\times d$ matrix of independent standard Brownian motions, and $\bm{J}$ is an $\mathbb{S}_+^d$-valued pure jump process of finite variation. The risky assets follow
\begin{align}\label{eq:StockDynamicsWishart}
    \D\bm{S}_t &= \diag(\bm{S}_t)\sqrt{\bm{\Sigma}_t}\,\D\bm{B}_t,
\end{align}
yielding the continuous cross quadratic variation structure $\D\llangle\log \bm{S}\rrangle_t= \bm{\Sigma}_t\D t$. This class is highly tractable owing to the availability of closed-form Fourier transforms.

\paragraph{Class 2: Quadratic Covariance Models.}
To avoid the stringent Feller-type existence conditions and numerical complexities associated with matrix square roots, the quadratic class models the covariance via an unconstrained matrix-valued factor process. Let $\bm{X}$ be a $d\times d$ continuous Markovian process, typically an Ornstein-Uhlenbeck (OU) process of the form
\begin{align}\label{eq:QuadraticDynamics}
    \D\bm{X}_t &= (\bm{M} - \bm{\Lambda}\bm{X}_t)\,\D t + \bm{Q}\,\D\bm{W}_t, \qquad \bm{X}_0=\bm{x}_0,
\end{align}
where $\bm{M}, \bm{\Lambda}, \bm{Q}$ are parameter matrices of appropriate dimensions. The instantaneous covariance is defined via the quadratic map $\bm{\Sigma}_t = \bm{X}_t \bm{X}_t^\top$, guaranteeing that $\bm{\Sigma}_t \in \mathbb{S}_+^d$ almost surely without imposing boundary reflections. The asset prices are governed by
\begin{align}\label{eq:StockDynamicsQuad}
    \D\bm{S}_t &= \diag(\bm{S}_t)\,\bm{X}_t\,\D\bm{B}_t,
\end{align}
yielding $\D\llangle \log \bm{S}\rrangle_t = (\bm{X}_t \bm{X}_t^\top)\,\D t$. 

\paragraph{Class 3: Stochastic Volterra Covariance Models.}
To encompass path-dependent and rough volatility phenomena, one can extend the unconstrained state process to the non-Markovian setting. Let $\bm{X}$ be a $d\times d$ matrix-valued Volterra process of the form
\begin{align}\label{eq:VolterraDynamics}
    \bm{X}_t &= \bm{g}_0(t) + \int_0^t \bm{K}(t,s)\,\D\bm{W}_s,
\end{align}
where $\bm{g}_0:[0,T]\to\mathbb{R}^{d\times d}$ is a deterministic initial curve and $\bm{K}:[0,T]^2\to\mathbb{R}^{d\times d}$ is a deterministic Volterra kernel. Crucially, allowing $\bm{K}(t,s)$ to exhibit singularities on the diagonal $t=s$ (e.g., fractional kernels $K(t,s) \propto (t-s)^{H-1/2}$ for Hurst index $H < 1/2$) natively captures the rough behavior of instantaneous covariance. As in the quadratic case, the asset dynamics are given by $\D\bm{S}_t = \diag(\bm{S}_t)\,\bm{X}_t\,\D\bm{B}_t$, ensuring positive semidefinite instantaneous covariance $\bm{\Sigma}_t = \bm{X}_t \bm{X}_t^\top$. 

\paragraph{\textbf{Introducing leverage.}} Within the continuous diffusion setting above, leverage and, more generally, intrinsic market incompleteness are incorporated by allowing return shocks to be correlated with covariance shocks. To this end, define the linear map
\begin{equation}
\label{eq:VW-R-map-clean}
    \mathcal R:\mathbb R^{d\times d}\to\mathbb R^d,
    \qquad
    \mathcal R(\bm U)
    :=
    \bigl(
    \Tr(\bm U\bm\rho_1^\top),\dots,\Tr(\bm U\bm\rho_d^\top)
    \bigr)^\top,
\end{equation}
and its Frobenius adjoint
\begin{equation}
\label{eq:VW-R-adjoint-clean}
    \mathcal R^\ast:\mathbb R^d\to\mathbb R^{d\times d},
    \qquad
    \mathcal R^\ast(\bm a)
    =
    \sum_{j=1}^d a_j \bm\rho_j.
\end{equation}
Next, introduce the Gram matrix
\begin{equation}
\label{eq:VW-Q-clean}
    \bm Q_{ij}:=\Tr(\bm\rho_i\bm\rho_j^\top),
    \qquad i,j=1,\dots,d,
\end{equation}
and assume $\bm Q\preceq \bm I_d$. Let $\bm\Lambda\in\mathbb R^{d\times d}$ satisfy $\bm\Lambda\bm\Lambda^\top=\bm I_d-\bm Q$, let $\bm W^\perp$ be a $d$-dimensional Brownian motion independent of $\bm W$, and define
\begin{equation}
\label{eq:VW-B-clean}
    \D\bm B_t
    =
    \mathcal R(\D\bm W_t)+\bm\Lambda\,\D\bm W_t^\perp.
\end{equation}
This construction yields a $d$-dimensional Brownian motion $\bm B$ whose instantaneous covariation with $\bm W$ is encoded by the family $(\bm\rho_1,\dots,\bm\rho_d)$, and therefore provides a convenient and analytically tractable way to model leverage.

A particularly convenient special case which retains the linear-affinity of the factor model under the affine stochastic covariance models, as shown in \cite{Fonseca2007}, is the vector correlation specification
\begin{align}\label{eq:corr}
    \D\bm{B}_t &= \D\bm{W}_t\,\bm{\rho} + \sqrt{1-\bm{\rho}^\top\bm{\rho}}\,\D\bm{Z}_t,
\end{align}
where the constant vector $\bm{\rho}\in\mathbb{R}^d$ satisfies $\bm{\rho}^\top\bm{\rho}\le 1$, and $\bm{Z}$ is a $d$-dimensional Brownian motion independent of $\bm{W}$.

\subsection{The Semi-Static Variance-Optimal Hedging Problem}

Let $H\in L^2(\mathcal{F}_T,\mathbb{Q})$ denote the target payoff of the contingent claim, where we assume without loss of generality that discounting has been absorbed into the definition of $H$ and the asset prices. Furthermore, let $\bm{\eta}=(\eta^1,\dots,\eta^n)^\top\in \big(L^2(\mathcal{F}_T,\mathbb{Q})\big)^n$ represent the payoffs of $n\in\mathbb{N}$ auxiliary claims, which are restricted to be held \emph{statically} from inception $t=0$ to maturity $T$. 

A semi-static trading strategy is characterized by an initial capital $c\in\mathbb{R}$, a vector of static weights $\bm{u}\in\mathbb{R}^n$ allocated to the auxiliary claims, and a predictable, dynamic trading strategy $\bm{\theta}$ in the underlying risky assets. The objective is to minimize the mean-squared hedging error (MSHE) under the martingale measure $\mathbb{Q}$:
\begin{equation}\label{eq:fulloptimization}
    \epsilon^2 = \min_{c\in\mathbb{R},\, \bm{u}\in\mathbb{R}^n,\, \bm{\theta}\in\Theta_S}
    \mathbb{E}\bigg[\Big(c + \bm{u}^\top(\bm{\eta}-\mathbb{E}[\bm{\eta}]) + \int_0^T \bm{\theta}_t^\top \D\bm{S}_t - H\Big)^2\bigg].
\end{equation}
Note that centering the auxiliary payoffs $\bm{\eta}$ by their expectation is a standard normalization that entails no loss of generality, serving to mathematically decouple the optimal initial capital from the static weights in the optimization. The space of admissible dynamic strategies is defined as the set of square-integrable integrands with respect to $\bm{S}$:
\begin{equation}\label{eq:admissible}
    \Theta_S := L^2(\bm{S})
    = \bigg\{\bm{\theta} \text{ predictable }\, \mathbb{R}^{d}\text{-valued process} \colon \mathbb{E}\bigg[\int_0^T \bm{\theta}_t^\top\,\D\llangle \bm{S},\bm{S}\rrangle_t\,\bm{\theta}_t\bigg] < \infty\bigg\}.
\end{equation}

\subsection{Galtchouk-Kunita-Watanabe Decomposition}

To solve the semi-static problem, we first recall the Galtchouk-Kunita-Watanabe (GKW) decomposition in the purely dynamic setting (i.e., $\bm{u} = \bm{0} \in \mathbb{R}^n$). Let $H_t:=\mathbb{E}[H\mid\mathcal{F}_t] \in \mathbb{R}$ denote the scalar intrinsic value process of the claim. A square-integrable payoff $H$ admits a GKW decomposition with respect to the $d$-dimensional underlying asset process $\bm{S}$ if there exists a constant $H_0\in\mathbb{R}$, an admissible dynamic strategy $\bm{\theta}^H\in\Theta_S$ taking values in $\mathbb{R}^d$, and a square-integrable, scalar martingale $L^H$ strongly orthogonal to $\bm{S}$, such that
\begin{equation}\label{eq:GKW}
    H = H_0 + \int_0^T (\bm{\theta}_t^H)^\top \D\bm{S}_t + L_T^H.
\end{equation}

The optimal $\mathbb{R}^d$-valued integrand $\bm{\theta}^H$ is uniquely characterized by the projection identity arising from the strong orthogonality condition $L^H \perp \bm{S}$. To formulate this rigorously for general semimartingales, let $A = (A_t)_{t\in[0,T]}$ be a scalar, predictable, increasing process dominating the predictable covariations (for instance, $A_t = t + \sum_{i=1}^d \langle S^i, S^i \rangle_t$). By the Kunita-Watanabe inequality, the matrix-valued predictable covariation of the asset prices and the vector-valued predictable covariation between the underlying assets and the claim admit Radon-Nikodym densities with respect to $A$:
\begin{equation}\label{eq:proj}
    \D\llangle \bm{S},\bm{S}\rrangle_t = \bm{c}_t^{\bm{S}} \,\D A_t, \qquad \D\langle \bm{S},H\rangle_t = \bm{c}_t^{\bm{S},H} \,\D A_t,
\end{equation}
where $\bm{c}^{\bm{S}}$ takes values in $\mathbb{S}_+^d$ and $\bm{c}^{\bm{S},H}$ takes values in $\mathbb{R}^d$. The projection identity then yields the vector equality $\bm{c}_t^{\bm{S},H} = \bm{c}_t^{\bm{S}} \bm{\theta}_t^H$ almost everywhere with respect to the measure $\mathbb{Q} \times \D A$. Consequently, the variance-minimal dynamic hedge is given by:
\begin{equation}\label{eq:OptimalStrategy}
    \bm{\theta}_t^* = \big(\bm{c}_t^{\bm{S}}\big)^{\dagger} \bm{c}_t^{\bm{S},H} \in \mathbb{R}^d,
\end{equation}
where $\dagger$ denotes the Moore-Penrose pseudoinverse, effectively accommodating potential rank deficiencies or singularities in the $d \times d$ covariance matrix of the multidimensional asset price process.

\subsubsection*{Decomposition of the Semi-Static Hedging Problem}

We now extend this machinery to incorporate the static instruments. For notational convenience, we set $\eta^0:=H$ and define the associated scalar martingale value processes $H_t^i:=\mathbb{E}[\eta^i\mid\mathcal{F}_t] \in \mathbb{R}$ for $i=0,\dots,n$. Applying the GKW decomposition \eqref{eq:GKW} individually to each payoff $\eta^i$ yields a system of orthogonal decompositions:
\begin{equation}\label{eq:SystemGKW}
    H_t^i = H_0^i + \int_0^t (\bm{\theta}_s^i)^\top \D\bm{S}_s + L_t^i,\qquad i=0,\dots,n,
\end{equation}
where each scalar residual martingale satisfies $L^i \perp \bm{S}$. The corresponding $d$-dimensional dynamic hedge ratios are determined via
\begin{equation}\label{eq:OptimalStrategyAll}
    \bm{\theta}_t^i = \big(\D\llangle \bm{S},\bm{S}\rrangle_t\big)^{\dagger}\,\D\llangle \bm{S},H^i\rrangle_t \in \mathbb{R}^d.
\end{equation}

We aggregate the auxiliary hedge ratios into a time-dependent weighting matrix $\bm{\Theta}_t \in \mathbb{R}^{d\times n}$ and collect the strongly orthogonal scalar residuals into an $n$-dimensional vector $\bm{L}_t \in \mathbb{R}^n$:
\begin{equation}\label{eq:ThetaDef}
    \bm{\Theta}_t := \big[\bm{\theta}_t^1, \bm{\theta}_t^2, \dots, \bm{\theta}_t^n\big] \in \mathbb{R}^{d \times n}, \qquad \bm{L}_t := (L_t^1,\dots,L_t^n)^\top \in \mathbb{R}^n.
\end{equation}
It is crucial to note that even if an individual auxiliary payoff $\eta^i$ is structurally dependent on only a single asset, its variance-optimal hedge $\bm{\theta}^i$ is generally $d$-dimensional. This cross-hedging phenomenon occurs because the orthogonal projection in \eqref{eq:proj} is mediated by the full, $d \times d$ matrix-valued covariation process $\D\llangle \bm{S},\bm{S}\rrangle$, capturing the interconnected correlation dynamics between all market assets.

\vspace{0.2cm}
The general multivariate Galtchouk--Kunita--Watanabe projection in \eqref{eq:OptimalStrategyAll} permits cross-hedging across all traded assets, even when a given auxiliary claim $\eta^k$ depends only on a single component $S^m$. In some of the applications below, however, we intentionally impose the economically more restrictive convention that such a univariate claim is dynamically hedgeable only through its own underlying. In that case, the relevant projection space is no longer the full space
\begin{equation}
    \mathcal G
    :=
    \left\{
        \int_0^T \bm{\theta}_t^\top \D\bm{S}_t
        :
        \bm{\theta}\in\Theta_S
    \right\},
\end{equation}
but the one-dimensional closed subspace
\begin{equation}
    \mathcal G^{m}
    :=
    \left\{
        \int_0^T \vartheta_t\,\D S_t^m
        :
        \vartheta\in L^2(S^m)
    \right\}
    \subseteq \mathcal G.
\end{equation}
Equivalently, the admissible integrands are restricted to
\begin{equation}
    \Theta_{S^m}
    :=
    \left\{
        \vartheta\,\ve_m
        :
        \vartheta\in L^2(S^m)
    \right\}
    \subseteq \Theta_S,
\end{equation}
where $\ve_m\in\mathbb{R}^d$ denotes the $m$-th canonical basis vector. Writing, as in \eqref{eq:proj},
\begin{equation}
    \D\llangle \bm{S},\bm{S}\rrangle_t
    =
    \bm{c}_t^{\bm{S}}\,\D A_t,
    \qquad
    \D\langle \bm{S},H\rangle_t
    =
    \bm{c}_t^{\bm{S},H}\,\D A_t,
\end{equation}
the constrained variance-optimal hedge of a square-integrable claim $H$ is the unique element of $\Theta_{S^m}$ of the form
\begin{equation}
    \bm{\theta}_t^{H,(m)}
    =
    \vartheta_t^{H,(m)}\ve_m,
\end{equation}
with
\begin{equation}
    \vartheta_t^{H,(m)}
    =
    \bigl(\ve_m^\top \bm{c}_t^{\bm{S}}\ve_m\bigr)^\dagger
    \ve_m^\top \bm{c}_t^{\bm{S},H},
\end{equation}
that is,
\begin{equation}
    \bm{\theta}_t^{H,(m)}
    =
    \ve_m
    \bigl(\ve_m^\top \bm{c}_t^{\bm{S}}\ve_m\bigr)^\dagger
    \ve_m^\top \bm{c}_t^{\bm{S},H}.
\end{equation}

\subsection{Solution to the Semi-Static Hedging Problem}

Following the structural approach of \cite{Semi_static_Fourier}, we observe that the joint optimization problem mathematically separates. We decompose the global problem into an inner dynamic projection (for a given, fixed vector of static weights $\bm{u} \in \mathbb{R}^n$) and an outer, finite-dimensional quadratic minimization over those weights:
\begin{align}\label{eq:two-stage}
    \epsilon^2(\bm{u}) &= \min_{\bm{\theta}\in\Theta_S,\, c\in\mathbb{R}} \mathbb{E}\bigg[\Big(c - \bm{u}^\top\mathbb{E}[\bm{\eta}] + \int_0^T \bm{\theta}_t^\top \D\bm{S}_t - (H - \bm{u}^\top\bm{\eta})\Big)^2\bigg], \nonumber \\[1ex]
    \epsilon_*^2 &= \min_{\bm{u}\in\mathbb{R}^n} \epsilon^2(\bm{u}).
\end{align}
By isolating the static allocation from the continuous-time dynamic tracking, \eqref{eq:two-stage} reduces the intrinsically high-dimensional backward stochastic differential equation (BSDE) control problem into a highly tractable two-stage geometry.

\subsubsection*{Inner Optimization: Optimal Dynamic Strategy}

Fix $\bm{u}\in\mathbb{R}^n$ and define the static residual claim as
\begin{align}\label{eq:residual-claim}
    H^{\bm{u}} &:= H - \bm{u}^\top\bm{\eta}.
\end{align}
To analyze this residual, we first apply the GKW decomposition to the vector of auxiliary claims $\bm{\eta}$. There exists an initial value vector $\bm{\eta}_0 \in \mathbb{R}^n$, an admissible $d \times n$ matrix-valued dynamic strategy $\bm{\Theta}^{\bm{\eta}}$, and an $n$-dimensional vector of square-integrable martingales $\bm{L}^{\bm{\eta}}$ strongly orthogonal to $\bm{S}$, such that $\bm{\eta} = \bm{\eta}_0 + \int_0^T (\bm{\Theta}_t^{\bm{\eta}})^\top \D\bm{S}_t + \bm{L}_T^{\bm{\eta}}$. 

By the linearity of the conditional expectation operator, the intrinsic value process of the residual claim is given by
\begin{align}\label{eq:residual-GKW}
    H_t^{\bm{u}} &= \big(H_0 - \bm{u}^\top \bm{\eta}_0\big) + \int_0^t \big(\bm{\theta}_s^H - \bm{\Theta}_s^{\bm{\eta}}\bm{u}\big)^\top \D\bm{S}_s + \big(L_t^H - \bm{u}^\top \bm{L}_t^{\bm{\eta}}\big).
\end{align}
We introduce the residual orthogonal martingale $L^{\bm{u}} := L^H - \bm{u}^\top\bm{L}^{\bm{\eta}}$, which strictly maintains strong orthogonality to $\bm{S}$. Consequently, for any candidate dynamic strategy $\bm{\theta} \in \Theta_S$, the strong orthogonality allows us to apply the Pythagorean theorem in $L^2(\mathcal{F}_T, \mathbb{Q})$ to obtain
\begin{align}\label{eq:pythagoras}
    \mathbb{E}\bigg[\Big(\int_0^T \bm{\theta}_t^\top \D\bm{S}_t - (H_T^{\bm{u}} - H_0^{\bm{u}})\Big)^2\bigg] &= \mathbb{E}\bigg[\Big(\int_0^T (\bm{\theta}_t - \bm{\theta}_t^{\bm{u}})^\top \D\bm{S}_t\Big)^2\bigg] + \mathbb{E}\big[(L_T^{\bm{u}})^2\big],
\end{align}
where $\bm{\theta}^{\bm{u}} := \bm{\theta}^H - \bm{\Theta}^{\bm{\eta}}\bm{u} \in \mathbb{R}^d$ represents the dynamically adjusted hedge ratio. 

The first term on the right-hand side of \eqref{eq:pythagoras} is uniquely minimized to zero by selecting the optimal dynamic tracking strategy $\bm{\theta}^* = \bm{\theta}^{\bm{u}}$, while the optimal initial capital evaluates to $c^*(\bm{u}) = H_0 - \bm{u}^\top(\bm{\eta}_0 - \mathbb{E}[\bm{\eta}])$. The corresponding minimal mean-squared hedging error for a fixed static portfolio $\bm{u}$ is therefore isolated entirely to the variance of the orthogonal residual:
\begin{align}\label{eq:inner-error-variance}
    \epsilon^2(\bm{u}) &= \mathbb{E}\big[(L_T^{\bm{u}})^2\big] = \mathbb{E}\bigg[\Big(L_T^H - \bm{u}^\top \bm{L}_T^{\bm{\eta}}\Big)^2\bigg].
\end{align}
To operationalize this, we define the following structural covariance components of the orthogonal martingales:
\begin{align}\label{eq:maincomponenetsstatic}
    A &:= \mathbb{E}[(L_T^H)^2] \in \mathbb{R}, \nonumber \\
    B &:= \mathbb{E}[\bm{L}_T^{\bm{\eta}} L_T^H] \in \mathbb{R}^n, \\
    C &:= \mathbb{E}[\bm{L}_T^{\bm{\eta}} (\bm{L}_T^{\bm{\eta}})^\top] \in \mathbb{S}_+^n. \nonumber
\end{align}
Expanding \eqref{eq:inner-error-variance} yields a strictly quadratic form in $\bm{u}$:
\begin{align}\label{eq:quadratic-form}
    \epsilon^2(\bm{u}) &= \bm{u}^\top C \bm{u} - 2\bm{u}^\top B + A.
\end{align}

\subsubsection*{Outer Optimization: Optimal Static Weights}

Minimizing the quadratic form in \eqref{eq:quadratic-form} with respect to $\bm{u}$ yields the first-order condition, which takes the form of the standard normal equations:
\begin{align}\label{eq:normal-eq}
    C\,\bm{u}^* &= B.
\end{align}
If the covariance matrix $C$ is strictly positive definite (and thus invertible), the unique optimal static allocation is $\bm{u}^* = C^{-1}B$. In the general case, redundant static instruments or collinearity in the residual risk may render $C$ singular. To ensure a robust, well-posed solution, we select the minimum-norm static allocation via the Moore-Penrose pseudoinverse:
\begin{align}\label{eq:nustar}
    \bm{u}^* &= C^{\dagger}B \in \mathbb{R}^n.
\end{align}
Substituting \eqref{eq:nustar} back into the objective \eqref{eq:quadratic-form}, we obtain the minimal global variance-optimal hedging error:
\begin{align}\label{eq:epsstar}
    \epsilon_*^2 &= A - B^\top C^{\dagger}B.
\end{align}

The singular case is well posed because $B\in\operatorname{Range}(C)$. Indeed, if $x\in\ker(C)$, then
\[
0=x^\top Cx=\mathbb E\big[(x^\top \bm L_T^{\bm\eta})^2\big],
\]
hence $x^\top \bm L_T^{\bm\eta}=0$ in $L^2(\mathcal F_T,\mathbb Q)$ and therefore
\[
x^\top B=\mathbb E\big[(x^\top \bm L_T^{\bm\eta})L_T^H\big]=0.
\]
Thus $\ker(C)\subseteq B^\perp$, equivalently $B\in(\ker C)^\perp=\operatorname{Range}(C)$, so the quadratic problem admits minimizers and its minimum-norm minimizer is exactly $C^\dagger B$.

\section{Multidimensional Spanning and the Static Replication of Covariance Risk}
\label{sec:replicationtheory}

This section complements the abstract semi-static variance-optimal framework developed in Section~\ref{sec:VO-framework} by addressing the corresponding \emph{instrument-selection} problem, namely the identification of static claims that constitute natural candidates for mitigating the residual unhedgeable risk of multi-asset contingent claims. In the one-dimensional setting, the inverse problem of replicating a sufficiently regular payoff by a static strip of vanilla options is classical; see, e.g., \cite{Breeden,Carr}. More precisely, under appropriate differentiability and integrability assumptions, a payoff $h:\mathbb{R}_+\to\mathbb{R}$ admits a representation in terms of a continuum of calls indexed by the strike variable $K\geq 0$, namely
\begin{equation}
\label{eq:spanning-formula}
    h(x)\stackrel{\text{a.e.}}{=}\int_0^\infty \varphi(K)(x-K)^+\,\mathrm{d}K, \qquad x\in\mathbb{R}_+.
\end{equation}
In analytical terms, this means that the ramp family $\bigl(x\mapsto (x-K)^+\bigr)_{K\in\mathbb{R}_+}$ spans a broad class of admissible payoff functions, up to a Lebesgue-null set. Formally differentiating \eqref{eq:spanning-formula} twice shows that the corresponding density is $\varphi(K)=h''(K)$. In particular, fixing an anchor $\alpha\geq 0$ and assuming that $h'$ is absolutely continuous on $\mathbb{R}_+$, so that $h''$ exists almost everywhere and is locally integrable, one obtains the refined Carr--Madan-type decomposition
\begin{equation}
\label{eq:replication1d-refined}
    h(x)
    = h(\alpha)+h'(\alpha)(x-\alpha)
    +\int_0^\alpha h''(K)(K-x)^+\,\mathrm{d}K
    +\int_\alpha^\infty h''(K)(x-K)^+\,\mathrm{d}K,
\end{equation}
valid for every $x\geq 0$. Equivalently, introducing the kernel
\begin{equation}
\label{eq:psi-1d}
    \psi_{\alpha}(x,K)
    :=
    \bigl(x-\alpha+(\alpha-K)+2(K-x)\mathbbm{1}_{\{K\leq \alpha\}}\bigr)^+,
    \qquad (x,K)\in\mathbb{R}_+\times\mathbb{R}_+,
\end{equation}
so that $\psi_\alpha(x,K)=(K-x)^+$ for $K\leq \alpha$ and $\psi_\alpha(x,K)=(x-K)^+$ for $K\geq \alpha$, the same identity can be written in the compact form
\begin{equation}
\label{eq:replication1d-psi}
    h(x)=h(\alpha)+h'(\alpha)(x-\alpha)+\int_0^\infty h''(K)\psi_\alpha(x,K)\,\mathrm{d}K.
\end{equation}
This representation is naturally interpreted as a fixed-point identity generated by a linear spanning operator: for each $\alpha\geq 0$, define
\begin{equation}
\label{eq:operator-1d}
    (\mathcal{S}_{\alpha}h)(x)
    :=
    h(\alpha)+h'(\alpha)(x-\alpha)+\int_0^\infty h''(K)\psi_\alpha(x,K)\,\mathrm{d}K,
\end{equation}
so that \eqref{eq:replication1d-psi} is precisely the operator identity $h=\mathcal{S}_{\alpha}h$ on $\mathbb{R}_+$.

In the multivariate stochastic covariance setting, the relevant source of incompleteness is no longer purely marginal convexity, but rather the joint law of the terminal asset vector and, in particular, the mixed sensitivity carried by cross-variation terms. This structural feature strongly motivates the use of static instruments whose payoffs load directly on cross-asset interactions, or more generally factorize across coordinates. In practice, such payoffs appear as quanto or product options, as well as spread- and basket-type claims. Our framework therefore relies on multidimensional spanning formulas. Related multivariate extensions of the Carr--Madan construction have already been developed in \cite{Cui04052022,Madan2021Pricing}, where sufficiently regular multi-asset payoffs are represented by multiple integrals of products of calls and puts. An alternative high-dimensional approach based on integral equations and inverse Radon transforms is given in \cite{Bossu2021}, where the spanning family is expressed in terms of basket options. 

Here we follow the product-option approach, which is particularly well aligned with the covariance-sensitive structures studied later in this paper. To formalize the multidimensional extension, fix $\bm a=(a_1,\dots,a_n)\in\mathbb{R}_+^n$, and for each coordinate $i\in\{1,\dots,n\}$ define the lifted one-dimensional spanning operator $\mathcal{S}^{(i)}_{a_i}$ acting on $h:\mathbb{R}_+^n\to\mathbb{R}$ by
\begin{equation}
\label{eq:operator-lift}
    (\mathcal{S}^{(i)}_{a_i}h)(\bm x)
    :=
    h(\bm x)\big|_{x_i=a_i}
    +\partial_i h(\bm x)\big|_{x_i=a_i}(x_i-a_i)
    +\int_0^\infty \partial_{ii}h(\bm x)\big|_{x_i=K_i}\,\psi_{a_i}(x_i,K_i)\,\mathrm{d}K_i,
\end{equation}
where the notation $\big|_{x_i=\xi}$ means that the $i$-th coordinate of $\bm x$ is replaced by $\xi$, while all remaining coordinates are kept fixed. Under the regularity assumptions stated below, each $\mathcal{S}^{(i)}_{a_i}$ is a well-defined linear operator on the relevant class of payoff functions, and the full spanning identity is obtained by coordinate-wise iteration:
\begin{equation}
\label{eq:operator-factorization}
    h(\bm x)
    =
    \Bigl(\mathcal{S}^{(1)}_{a_1}\circ\mathcal{S}^{(2)}_{a_2}\circ\cdots\circ\mathcal{S}^{(n)}_{a_n}\Bigr)h(\bm x),
    \qquad \bm x\in\mathbb{R}_+^n.
\end{equation}
Moreover, since mixed partial derivatives commute under the imposed smoothness assumptions, the operators $\mathcal{S}^{(i)}_{a_i}$ commute pairwise, so the order of composition in \eqref{eq:operator-factorization} is immaterial. To operationalize this for our static hedging framework, we rely on the following multidimensional spanning formulation:

\begin{proposition}[Multidimensional Spanning, cf.\ \cite{Cui04052022}]
\label{n-dimensionalspanning}
Let $h:\mathbb{R}_+^n\to\mathbb{R}$ and fix a reference point $\bm a=(a_1,\dots,a_n)\in\mathbb{R}_+^n$. Assume that $h$ admits continuous partial derivatives $\partial^\alpha h$ for all multi-indices $\alpha=(\alpha_1,\dots,\alpha_n)$ in
\begin{equation*}
    J:=\bigl\{\alpha:\alpha_i\in\{0,1,2\}\ \text{for all } i=1,\dots,n\bigr\},
\end{equation*}
and satisfies appropriate growth and decay conditions at infinity such that the required infinite-strike integrals are well-defined (see \cite{Cui04052022} for explicit integrability criteria). For $\bm x=(x_1,\dots,x_n)\in\mathbb{R}_+^n$ and $\alpha\in J$, define
\begin{equation*}
    \partial^\alpha:=\partial_1^{\alpha_1}\cdots\partial_n^{\alpha_n},
    \qquad
    \partial_i^{\alpha_i}:=\frac{\partial^{\alpha_i}}{\partial x_i^{\alpha_i}},
    \qquad
    \bm x^\alpha:=\prod_{i=1}^n x_i^{\alpha_i}.
\end{equation*}
Define the indicator maps $L(\alpha)\in\{0,1\}^n$ and $I(\alpha)\in\{0,1,+\}^n$ componentwise by
\begin{equation}
\label{eq:LI-def}
    L(\alpha_i):=
    \begin{cases}
        1, & \alpha_i=2,\\
        0, & \alpha_i\neq 2,
    \end{cases}
    \qquad
    I(\alpha_i):=
    \begin{cases}
        0, & \alpha_i=0,\\
        1, & \alpha_i=1,\\
        +, & \alpha_i=2,
    \end{cases}
\end{equation}
and let $\bm K=(K_1,\dots,K_n)\in\mathbb{R}_+^n$. Then, with $\odot$ denoting the Hadamard product and $\mathbbm{1}_{\bm K\leq \bm a}$ the componentwise indicator, one has the spanning identity
\begin{equation}
\label{eq:spanning-nd-refined}
    h(\bm x)
    =
    \sum_{\alpha\in J}
    \int_{\mathbb{R}_+^n}
    \left(\partial^\alpha h\right)(\bm u)\Big|_{\bm u=\bm a+(\bm K-\bm a)\odot L(\alpha)}
    \!
    \left[
        \bm x-\bm a
        +
        \left(
            \bm a-\bm K
            +
            2(\bm K-\bm x)\odot\mathbbm{1}_{\bm K\leq \bm a}
        \right)\odot L(\alpha)
    \right]^{I(\alpha)}
    \!
    \left[\mathrm{d}\bm K\right]^{L(\alpha)}.
\end{equation}
By convention, $\left[\mathrm{d}K_i\right]^{L(\alpha_i)}=1$ whenever $L(\alpha_i)=0$, that is, no integration is taken in coordinate $i$ unless $\alpha_i=2$. Moreover, for $\bm z\in\mathbb{R}^n$, the expression $\bm z^{I(\alpha)}$ denotes componentwise application of $I(\alpha)$, namely $\prod_{i:\alpha_i=1} z_i \prod_{i:\alpha_i=2} z_i^+$, with the convention that coordinates with $\alpha_i=0$ contribute the factor $1$.
\end{proposition}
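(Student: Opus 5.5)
The plan is to derive the identity by iterating the one-dimensional Carr--Madan decomposition \eqref{eq:replication1d-psi} one coordinate at a time. This is exactly the operator factorization \eqref{eq:operator-factorization}, after which the result is expanded and read off as a sum over multi-indices in $J$.

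\textbf{Step 1: one coordinate.} Fix all coordinates except $x_i$. Since $h$ has continuous partial derivatives $\partial_i^{\alpha_i}h$ for $\alpha_i\le 2$, the map $x_i\mapsto h(\bm x)$ has an absolutely continuous first derivative. Applying \eqref{eq:replication1d-psi} with anchor $a_i$ therefore gives $h=\mathcal S^{(i)}_{a_i}h$. The right-hand side is a sum of three terms, indexed by $\alpha_i\in\{0,1,2\}$:
\begin{itemize}
\item $\alpha_i=0$: the function evaluated at $x_i=a_i$, with factor $z_i^0=1$;
\item $\alpha_i=1$: $\partial_i h$ at $x_i=a_i$, times $(x_i-a_i)$;
\item $\alpha_i=2$: $\int\partial_{ii}h|_{x_i=K_i}\,\psi_{a_i}(x_i,K_i)\,\mathrm dK_i$.
\end{itemize}
One checks directly from \eqref{eq:psi-1d} that in all three cases the kernel is the $i$-th factor $\bigl[x_i-a_i+(a_i-K_i+2(K_i-x_i)\mathbbm 1_{K_i\le a_i})L(\alpha_i)\bigr]^{I(\alpha_i)}$. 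Likewise, the evaluation point $u_i=a_i+(K_i-a_i)L(\alpha_i)$ reproduces the three cases.

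\textbf{Step 2: induction on $n$.} Suppose the formula holds after applying $\mathcal S^{(1)}_{a_1}\circ\cdots\circ\mathcal S^{(m)}_{a_m}$. Each resulting term has the form $(\partial^{\alpha'}h)(\bm u)$ times a product of kernels in $x_1,\dots,x_m$, with coordinates $m+1,\dots,n$ still free. Now apply Step 1 in coordinate $m+1$ to each function $x_{m+1}\mapsto\partial^{\alpha'}h(\cdot)$. This requires continuity of $\partial_{m+1}^{k}\partial^{\alpha'}h$ for $k\le2$, which is exactly membership of the extended multi-index in $J$. The step also requires pulling the new operations inside the existing $\mathrm dK_1\cdots\mathrm dK_m$ integrals: differentiation in $x_{m+1}$ under those integrals, and evaluation at $x_{m+1}=a_{m+1}$. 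After regrouping, the $3^{m+1}$ terms are indexed by $\alpha\in\{0,1,2\}^{m+1}$. Because the kernels factorize as a Hadamard-product expression, $\bm z^{I(\alpha)}$ is the product of the coordinatewise kernels, and Fubini merges the integrals into $[\mathrm d\bm K]^{L(\alpha)}$. Commutation of mixed partials (Schwarz, valid since all $\partial^\alpha h$ with $\alpha\in J$ are continuous) shows the order of the coordinate operations is irrelevant. The result is \eqref{eq:spanning-nd-refined}.

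\textbf{Main obstacle.} The delicate step is justifying the interchanges in Step 2. These are differentiation under the infinite-strike integrals, and Fubini/Tonelli for the iterated integrals over $\mathbb R_+^{m}$. The key point is that the terms $\int_\alpha^\infty\partial_{ii}h\,(x_i-K_i)^+\,\mathrm dK_i$ converge only under decay of the mixed second derivatives. I would first do everything on a compact box $[0,R]^n$, where all integrands are continuous and bounded so the interchanges are immediate. Equivalently, one can replace $\infty$ by $R$ in \eqref{eq:replication1d-refined}, which holds exactly for $x_i<R$. I would then let $R\to\infty$ using the growth/decay hypotheses cited from \cite{Cui04052022}, namely absolute integrability of $|\partial^\alpha h(\bm u)|\,\prod_{i:\alpha_i=2}(1+K_i)$. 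Dominated convergence then passes to the limit in each of the finitely many terms.
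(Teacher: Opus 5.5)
Your proposal is correct and follows the route the paper indicates. The paper gives no separate proof beyond citing \cite{Cui04052022}, and motivates the identity by the coordinatewise factorization \eqref{eq:operator-factorization} of one-dimensional Carr--Madan operators with commuting mixed partials, which is what your Steps~1--2 carry out. One correction to your ``main obstacle'': for fixed $\bm x$ the kernel $\psi_{a_i}(x_i,\cdot)$ vanishes outside the interval between $a_i$ and $x_i$, so every strike integral is over a compact box with a continuous integrand and all interchanges are immediate; no decay hypothesis or $R\to\infty$ limit is needed for the pointwise identity (as you yourself note, the truncated identity is already exact once $R>\max_i x_i$).
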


For $n=2$, let $\bm a=(\alpha_1,\alpha_2)\in\mathbb{R}_+^2$ and $\bm x=(x_1,x_2)\in\mathbb{R}_+^2$. Then $J=\{0,1,2\}^2$ contains $9$ multi-indices, and \eqref{eq:spanning-nd-refined} expands into the explicit bivariate representation
\begin{align}
\label{functionalreplication2d}
    h(x_1,x_2)
    &=
    h(\alpha_1,\alpha_2)
    + h_{x_1}(\alpha_1,\alpha_2)(x_1-\alpha_1)
    + h_{x_2}(\alpha_1,\alpha_2)(x_2-\alpha_2)
    + h_{x_1x_2}(\alpha_1,\alpha_2)(x_1-\alpha_1)(x_2-\alpha_2) \nonumber\\
    &\quad
    + \int_{0}^{\alpha_2} h_{x_2x_2}(\alpha_1,K_2)(K_2-x_2)^+\,\mathrm{d}K_2
    + \int_{\alpha_2}^{\infty} h_{x_2x_2}(\alpha_1,K_2)(x_2-K_2)^+\,\mathrm{d}K_2 \nonumber\\
    &\quad
    + \int_{0}^{\alpha_2} h_{x_1x_2x_2}(\alpha_1,K_2)(x_1-\alpha_1)(K_2-x_2)^+\,\mathrm{d}K_2 \nonumber\\
    &\quad
    + \int_{\alpha_2}^{\infty} h_{x_1x_2x_2}(\alpha_1,K_2)(x_1-\alpha_1)(x_2-K_2)^+\,\mathrm{d}K_2 \nonumber\\
    &\quad
    + \int_{0}^{\alpha_1} h_{x_1x_1}(K_1,\alpha_2)(K_1-x_1)^+\,\mathrm{d}K_1
    + \int_{\alpha_1}^{\infty} h_{x_1x_1}(K_1,\alpha_2)(x_1-K_1)^+\,\mathrm{d}K_1 \nonumber\\
    &\quad
    + \int_{0}^{\alpha_1} h_{x_1x_1x_2}(K_1,\alpha_2)(x_2-\alpha_2)(K_1-x_1)^+\,\mathrm{d}K_1 \nonumber\\
    &\quad
    + \int_{\alpha_1}^{\infty} h_{x_1x_1x_2}(K_1,\alpha_2)(x_2-\alpha_2)(x_1-K_1)^+\,\mathrm{d}K_1 \nonumber\\
    &\quad
    + \int_{0}^{\alpha_1}\int_{0}^{\alpha_2}
        h_{x_1x_1x_2x_2}(K_1,K_2)(K_1-x_1)^+(K_2-x_2)^+\,\mathrm{d}K_2\,\mathrm{d}K_1 \nonumber\\
    &\quad
    + \int_{0}^{\alpha_1}\int_{\alpha_2}^{\infty}
        h_{x_1x_1x_2x_2}(K_1,K_2)(K_1-x_1)^+(x_2-K_2)^+\,\mathrm{d}K_2\,\mathrm{d}K_1 \nonumber\\
    &\quad
    + \int_{\alpha_1}^{\infty}\int_{0}^{\alpha_2}
        h_{x_1x_1x_2x_2}(K_1,K_2)(x_1-K_1)^+(K_2-x_2)^+\,\mathrm{d}K_2\,\mathrm{d}K_1 \nonumber\\
    &\quad
    + \int_{\alpha_1}^{\infty}\int_{\alpha_2}^{\infty}
        h_{x_1x_1x_2x_2}(K_1,K_2)(x_1-K_1)^+(x_2-K_2)^+\,\mathrm{d}K_2\,\mathrm{d}K_1.
\end{align}

Equation \eqref{functionalreplication2d} shows explicitly that the static spanning of a generic bivariate payoff requires not only marginal vanilla options, but also genuinely mixed building blocks. A canonical and particularly convenient choice for these mixed terms is the family of product options, such as $(x_1-K_1)^+(x_2-K_2)^+$, together with their corresponding out-of-the-money quadrant variants. While marginal vanillas are typically liquid in equity-index markets, genuine product options are often traded OTC. This distinction is economically important: without such cross-asset instruments, the market remains incomplete with respect to the joint risk-neutral law of the asset vector. By contrast, in markets such as energy and weather derivatives, quanto and product structures are standard instruments for managing interconnected price risks; see, for example, \cite{BenthLangeQuanto,Alfonsi02112023}.

Our aim is now to exploit these spanning identities in order to derive replication formulas for covariance swaps, thereby extending the classical variance-swap representations of \cite{Breeden,Carr} to the multivariate setting. We note that here we focus on the continuous multivariate semi-martingale case. However one could extend these results to the multivariate jump diffusion setting following the results of \cite{robustreplicationjumps}.

\subsection{Covariance Swaps}\label{sec:covariance-swaps}

\begin{example}[Geometric Covariance Swap Replication via Product Options]
\label{cov-swapreplicationQuanto}
Fix $T>0$ and a valuation time $t\in[0,T)$. For assets $i,j\in\{1,\dots,d\}$, set $Y_u^k:=\log S_u^k$ for $u\in[t,T]$. The realized geometric covariance is defined as the quadratic covariation of the log-returns,
\begin{align}\label{eq:geo-qcov-def}
    \langle Y^{i},Y^{j}\rangle_{t,T}
    :=
    p\text{-}\lim_{|\pi|\to 0}
    \sum_{m=0}^{n-1}
    \bigl(Y^i_{t_{m+1}}-Y^i_{t_m}\bigr)
    \bigl(Y^j_{t_{m+1}}-Y^j_{t_m}\bigr),
\end{align}
and the corresponding continuously monitored geometric covariance swap payoff is
\begin{align}\label{eq:geo-covswap-payoff}
    H_T^{\mathrm{g\text{-}cov}}
    :=
    \langle Y^{i},Y^{j}\rangle_{t,T}-K_{\mathrm{var}},
    \qquad
    K_{\mathrm{var}}=\mathbb{E}_t^{\mathbb{Q}}\big[\langle Y^{i},Y^{j}\rangle_{t,T}\big].
\end{align}

For the identities below we assume that $S^i$ and $S^j$ are strictly positive \emph{continuous} semimartingales.

\subsubsection*{Step 1: Pathwise Decomposition}
Applying integration by parts to the product $Y^iY^j$ yields
\begin{align}\label{eq:geo-qcov-product-rule}
    \langle Y^i,Y^j\rangle_{t,T}
    =
    \bigl(Y_T^iY_T^j-Y_t^iY_t^j\bigr)
    -\int_t^T Y_u^i\,\D Y_u^j
    -\int_t^T Y_u^j\,\D Y_u^i.
\end{align}
Using Itô's formula in the form

\begin{equation*}
    \D Y_u^k=\frac{\D S_u^k}{S_u^k}-\frac{1}{2(S_u^k)^2}\,\D\langle S^k\rangle_u,
\end{equation*}
one obtains
\begin{align}\label{eq:geo-qcov-decomp-dS}
    \langle Y^i,Y^j\rangle_{t,T}
    &=
    \underbrace{Y_T^iY_T^j-Y_t^iY_t^j}_{\text{terminal term}}
    -\underbrace{\int_t^T \frac{Y_u^i}{S_u^j}\,\D S_u^j+\int_t^T \frac{Y_u^j}{S_u^i}\,\D S_u^i}_{\text{dynamic trading}} \nonumber\\
    &\quad
    +\underbrace{\frac12\int_t^T \frac{Y_u^i}{(S_u^j)^2}\,\D\langle S^j\rangle_u
    +\frac12\int_t^T \frac{Y_u^j}{(S_u^i)^2}\,\D\langle S^i\rangle_u}_{\text{weighted variance terms}}.
\end{align}
The stochastic integrals correspond to continuous dynamic trading strategies $\phi_u^i=-Y_u^j/S_u^i$ and $\phi_u^j=-Y_u^i/S_u^j$. The remaining finite-variation terms represent weighted variance exposure and, in an equity-only implementation, generally remain unhedgeable unless additional variance-linked instruments are available.

\subsubsection*{Step 2: Static Replication of the Terminal Product}
To replicate the terminal term in \eqref{eq:geo-qcov-decomp-dS}, it is necessary to span the cross-product $Y_T^iY_T^j$. Set $(\alpha,\beta):=(S_t^i,S_t^j)$, and define the convexity error function
\begin{align}\label{eq:log-kernel-def}
    F(z,z_0)
    &:=
    \frac{z-z_0}{z_0}-\log\Big(\frac{z}{z_0}\Big)
    =
    \int_0^{z_0}\frac{(K-z)^+}{K^2}\,\D K
    +\int_{z_0}^{\infty}\frac{(z-K)^+}{K^2}\,\D K.
\end{align}
Then the log-return increment admits the pathwise decomposition
\begin{align}\label{eq:deltaY-decomp}
    \Delta Y_T^k
    :=
    Y_T^k-Y_t^k
    =
    \log\Big(\frac{S_T^k}{S_t^k}\Big)
    =
    \frac{S_T^k-S_t^k}{S_t^k}-F(S_T^k,S_t^k),
    \qquad k\in\{i,j\}.
\end{align}
Writing $\Delta S_T^i:=S_T^i-\alpha$ and $\Delta S_T^j:=S_T^j-\beta$, one has
\begin{align}\label{eq:terminal-product-basic}
    Y_T^iY_T^j-Y_t^iY_t^j
    =
    Y_t^i\,\Delta Y_T^j
    +Y_t^j\,\Delta Y_T^i
    +\Delta Y_T^i\,\Delta Y_T^j.
\end{align}
Substituting \eqref{eq:deltaY-decomp} into \eqref{eq:terminal-product-basic} yields
\begin{align}\label{eq:terminal-product-full}
    Y_T^iY_T^j-Y_t^iY_t^j
    &=
    Y_t^i\Big(\frac{\Delta S_T^j}{\beta}-F(S_T^j,\beta)\Big)
    +Y_t^j\Big(\frac{\Delta S_T^i}{\alpha}-F(S_T^i,\alpha)\Big) \nonumber\\
    &\quad
    +\Big(\frac{\Delta S_T^i}{\alpha}-F(S_T^i,\alpha)\Big)
     \Big(\frac{\Delta S_T^j}{\beta}-F(S_T^j,\beta)\Big) \nonumber\\[1mm]
    &=
    \underbrace{\frac{Y_t^i}{\beta}\Delta S_T^j+\frac{Y_t^j}{\alpha}\Delta S_T^i}_{\text{linear-in-spot terms}}
    -\underbrace{\Big(Y_t^iF(S_T^j,\beta)+Y_t^jF(S_T^i,\alpha)\Big)}_{\text{univariate log-contract strips}}
    +\underbrace{\frac{\Delta S_T^i}{\alpha}\frac{\Delta S_T^j}{\beta}}_{\text{bilinear spot term}} \nonumber\\
    &\quad
    -\underbrace{\Big(\frac{\Delta S_T^i}{\alpha}F(S_T^j,\beta)+\frac{\Delta S_T^j}{\beta}F(S_T^i,\alpha)\Big)}_{\text{mixed linear-convexity terms}}
    +\underbrace{F(S_T^i,\alpha)F(S_T^j,\beta)}_{\text{pure cross-convexity term}}.
\end{align}

\subsubsection*{Step 3: Quadrant-by-Quadrant Product Representation}

\paragraph{(i) Pure Cross-Convexity Strip.}
Multiplying the two one-dimensional log kernels in \eqref{eq:log-kernel-def} gives
\begin{align}\label{eq:F-product-double-integral-expanded}
    F(S_T^i,\alpha)F(S_T^j,\beta)
    &=
    \int_0^\alpha\int_0^\beta
    \frac{(K_1-S_T^i)^+(K_2-S_T^j)^+}{K_1^2K_2^2}\,\D K_2\,\D K_1 \nonumber\\
    &\quad
    +\int_0^\alpha\int_\beta^\infty
    \frac{(K_1-S_T^i)^+(S_T^j-K_2)^+}{K_1^2K_2^2}\,\D K_2\,\D K_1 \nonumber\\
    &\quad
    +\int_\alpha^\infty\int_0^\beta
    \frac{(S_T^i-K_1)^+(K_2-S_T^j)^+}{K_1^2K_2^2}\,\D K_2\,\D K_1 \nonumber\\
    &\quad
    +\int_\alpha^\infty\int_\beta^\infty
    \frac{(S_T^i-K_1)^+(S_T^j-K_2)^+}{K_1^2K_2^2}\,\D K_2\,\D K_1.
\end{align}
Define the generalized out-of-the-money quadrant payoff
\begin{align}\label{eq:geo-quadrant-payoffs}
    \Pi_{ij}(x,y;K_1,K_2)
    &:=
    \big((K_1-x)^+\mathbbm{1}_{\{K_1\leq \alpha\}}+(x-K_1)^+\mathbbm{1}_{\{K_1>\alpha\}}\big)\nonumber\\
    &\quad\times
    \big((K_2-y)^+\mathbbm{1}_{\{K_2\leq \beta\}}+(y-K_2)^+\mathbbm{1}_{\{K_2>\beta\}}\big),
\end{align}
so that the cross-convexity term may be written compactly as
\begin{align}\label{eq:geo-kernel-product-quanto}
    F(S_T^i,\alpha)F(S_T^j,\beta)
    =
    \iint_{\mathbb{R}_+^2}
    \frac{\Pi_{ij}(S_T^i,S_T^j;K_1,K_2)}{K_1^2K_2^2}\,\D K_1\,\D K_2.
\end{align}
Thus every product quadrant $\mathrm{CC}$, $\mathrm{CP}$, $\mathrm{PC}$, and $\mathrm{PP}$ enters this strip with the same nonnegative density $1/(K_1^2K_2^2)$.

\paragraph{(ii) Bilinear Spot Term.}
Introduce the ramp functions

\begin{equation*}
    C_\alpha(x):=(x-\alpha)^+,\qquad P_\alpha(x):=(\alpha-x)^+,\qquad
    C_\beta(y):=(y-\beta)^+,\qquad P_\beta(y):=(\beta-y)^+.
\end{equation*}
Then the identity
\begin{align}\label{eq:bilinear-quadrant-identity}
    (x-\alpha)(y-\beta)
    =
    C_\alpha(x)C_\beta(y)-C_\alpha(x)P_\beta(y)-P_\alpha(x)C_\beta(y)+P_\alpha(x)P_\beta(y)
\end{align}
holds for all $(x,y)\in\mathbb{R}_+^2$. Applying \eqref{eq:bilinear-quadrant-identity} to the bilinear term in \eqref{eq:terminal-product-full} yields
\begin{align}\label{eq:bilinear-term-cccp}
    \frac{\Delta S_T^i}{\alpha}\frac{\Delta S_T^j}{\beta}
    =
    \frac{1}{\alpha\beta}
    \Big(
        C_\alpha(S_T^i)C_\beta(S_T^j)
        -C_\alpha(S_T^i)P_\beta(S_T^j)
        -P_\alpha(S_T^i)C_\beta(S_T^j)
        +P_\alpha(S_T^i)P_\beta(S_T^j)
    \Big).
\end{align}
This alternating sign pattern directly reflects the mixed-derivative term $h_{x_1x_2}(\alpha_1,\alpha_2)(x_1-\alpha_1)(x_2-\alpha_2)$ in \eqref{functionalreplication2d}.

\subsubsection*{Step 4: Full Covariance Swap Static/Dynamic Spanning}
Combining the preceding identities, the geometric covariance swap payoff \eqref{eq:geo-covswap-payoff} admits the decomposition
\begin{align}\label{eq:full-covswap-expansion-product-simplified}
    H_T^{\mathrm{g\text{-}cov}}
    &=
    \langle Y^{i},Y^{j}\rangle_{t,T}-K_{\mathrm{var}} \nonumber\\
    &=
    \bigl(Y_T^iY_T^j-Y_t^iY_t^j\bigr)
    -\int_t^T Y_u^i\,\D Y_u^j
    -\int_t^T Y_u^j\,\D Y_u^i
    -K_{\mathrm{var}} \nonumber\\
    &=
    \underbrace{\Big(\frac{Y_t^i}{\beta}(S_T^j-\beta)+\frac{Y_t^j}{\alpha}(S_T^i-\alpha)\Big)}_{\text{static linear terms}}
    -\underbrace{\Big(Y_t^iF(S_T^j,\beta)+Y_t^jF(S_T^i,\alpha)\Big)}_{\text{static one-dimensional OTM strips}} \nonumber\\
    &\quad
    +\underbrace{\frac{1}{\alpha\beta}\Big(C_\alpha^iC_\beta^j-C_\alpha^iP_\beta^j-P_\alpha^iC_\beta^j+P_\alpha^iP_\beta^j\Big)}_{\text{bilinear mixed term}}
    -\underbrace{\frac{1}{\alpha}(C_\alpha^i-P_\alpha^i)F(S_T^j,\beta)}_{\text{mixed term 1}} \nonumber\\
    &\quad
    -\underbrace{\frac{1}{\beta}(C_\beta^j-P_\beta^j)F(S_T^i,\alpha)}_{\text{mixed term 2}}
    +\underbrace{F(S_T^i,\alpha)F(S_T^j,\beta)}_{\text{cross-convexity}}
    -\underbrace{\int_t^T \frac{Y_u^i}{S_u^j}\,\D S_u^j+\int_t^T \frac{Y_u^j}{S_u^i}\,\D S_u^i}_{\text{dynamic trading}} \nonumber\\
    &\quad
    +\underbrace{\frac12\int_t^T \frac{Y_u^i}{(S_u^j)^2}\,\D\langle S^j\rangle_u+\frac12\int_t^T \frac{Y_u^j}{(S_u^i)^2}\,\D\langle S^i\rangle_u}_{\text{weighted variance terms}}.
\end{align}

\noindent
The mixed terms in \eqref{eq:full-covswap-expansion-product-simplified} admit the explicit signed product-strip expansions
\begin{align}
    \frac{1}{\alpha}(C_\alpha^i-P_\alpha^i)F(S_T^j,\beta)
    &=
    \frac{1}{\alpha}\int_0^\beta \frac{C_\alpha^iP_{K_2}^j}{K_2^2}\,\D K_2
    +\frac{1}{\alpha}\int_\beta^\infty \frac{C_\alpha^iC_{K_2}^j}{K_2^2}\,\D K_2 \nonumber\\
    &\quad
    -\frac{1}{\alpha}\int_0^\beta \frac{P_\alpha^iP_{K_2}^j}{K_2^2}\,\D K_2
    -\frac{1}{\alpha}\int_\beta^\infty \frac{P_\alpha^iC_{K_2}^j}{K_2^2}\,\D K_2, \\[1ex]
    \frac{1}{\beta}(C_\beta^j-P_\beta^j)F(S_T^i,\alpha)
    &=
    \frac{1}{\beta}\int_0^\alpha \frac{C_\beta^jP_{K_1}^i}{K_1^2}\,\D K_1
    +\frac{1}{\beta}\int_\alpha^\infty \frac{C_\beta^jC_{K_1}^i}{K_1^2}\,\D K_1 \nonumber\\
    &\quad
    -\frac{1}{\beta}\int_0^\alpha \frac{P_\beta^jP_{K_1}^i}{K_1^2}\,\D K_1
    -\frac{1}{\beta}\int_\alpha^\infty \frac{P_\beta^jC_{K_1}^i}{K_1^2}\,\D K_1.
\end{align}
In particular, short positions in the cross-product quadrants $\mathrm{CP}$ and $\mathrm{PC}$ arise \emph{structurally}, through the alternating signs in the bilinear identity and the mixed linear-convexity terms. This is the key mechanism through which the static portfolio reproduces the signed cross-gamma geometry of the covariance payoff.

Equation \eqref{eq:full-covswap-expansion-product-simplified} therefore has the following precise meaning. The terminal cross-product $Y_T^iY_T^j-Y_t^iY_t^j$ and its mixed-convexity components lie in the linear span generated by dynamic trading in $(S^i,S^j)$, univariate log-contract strips, and signed strips of product options across the four quadrants. The full floating leg $\langle Y^i,Y^j\rangle_{t,T}$ equals this spanned component \emph{plus} the weighted-variance terms
$$
\frac12\int_t^T \frac{Y_u^i}{(S_u^j)^2}\,\D\langle S^j\rangle_u
\;+\;
\frac12\int_t^T \frac{Y_u^j}{(S_u^i)^2}\,\D\langle S^i\rangle_u,
$$
which are not statically generated by the same option families. Exact replication thus requires augmenting the market by suitable variance-linked contracts (for example (weighted)-variance swaps or log-contract portfolios on the individual assets) \cite{CarrLee}. Absent such instruments, the decomposition should be used to identify the natural static auxiliaries for variance-optimal hedging rather than to claim pathwise completeness.
\end{example}

We now provide an economic and analytical rationale for the inverse-square densities $1/K^2$ and $1/(K_1^2K_2^2)$ that appear in the one- and two-dimensional spanning formulas under a univariate and bivariate lognormal setting. The one-dimensional inverse-square kernel is the weight arising from local vega-flatness of a vanilla strip. The multivariate analogue is obtained in the same way, once one uses the
explicit analytic product-option formulas in the bivariate lognormal
model; see Madan and Wang \cite{Madan2021Pricing}. In particular, the
cross-vega of each quadrant product option is itself scale-invariant,
depending on the spot variables only through the moneyness ratios
$K_1/S_1$ and $K_2/S_2$.

\clearpage 

\begin{proposition}[Inverse-square kernels from local vega-flatness]
\label{prop:e5-copy-paste}
Fix a maturity $\tau>0$.

\begin{enumerate}
    \item[(i)] Let
    \[
        \Pi^{(1)}(S_0,\nu)
        :=
        \int_0^\infty \lambda(K)\,O(S_0,K;\nu)\,\mathrm{d} K,
        \qquad
        \lambda\in C^1((0,\infty),\mathbb{R}_{>0}),
    \]
    where $O(S_0,K;\nu)$ denotes a Black--Scholes vanilla option price
    with spot $S_0$ and total variance $\nu=\sigma^2\tau$. Then
    \begin{equation}
    \label{eq:e5-copy-uni}
        \partial_{S_0}\partial_\nu \Pi^{(1)}
        =
        \frac{S_0}{2\sqrt{\nu}}
        \int_0^\infty
        \bigl(
            2\lambda(xS_0)+xS_0\lambda'(xS_0)
        \bigr)
        n\bigl(d_+(x;\nu)\bigr)\,\mathrm{d} x,
    \end{equation}
    where
    \[
        d_+(x;\nu):=-\frac{\log x}{\sqrt{\nu}}+\frac12\sqrt{\nu}.
    \]
    Consequently, the density-level local vega-flatness condition
    $\partial_{S_0}\partial_\nu \Pi^{(1)}=0$ yields
    \[
        2\lambda(K)+K\lambda'(K)=0,
        \qquad K>0,
    \]
    whose unique positive $C^1$ solution is
    \[
        \lambda(K)=\frac{c}{K^2},
        \qquad c>0.
    \]

    \item[(ii)] Let $\alpha\in\{\mathrm{CC},\mathrm{CP},\mathrm{PC},\mathrm{PP}\}$ and consider
    \[
        \Pi^{(2)}(S_1,S_2)
        :=
        \int_0^\infty\int_0^\infty
        \lambda(K_1,K_2)\,
        C^\alpha(S_1,S_2;K_1,K_2)\,
        \mathrm{d} K_1\,\mathrm{d} K_2,
    \]
    where $C^\alpha$ denotes the corresponding bivariate Black--Scholes
    product-option price, and assume $\lambda\in C^1((0,\infty)^2,\mathbb{R}_{>0})$
    with sufficient integrability to justify differentiation under the
    strike integrals. Then the explicit analytic formulas of
    \cite{Madan2021Pricing} imply
    \begin{equation}
    \label{eq:e5-copy-crossvega}
        \partial_{\nu_{12}}C^\alpha(S_1,S_2;K_1,K_2)
        =
        S_1S_2\mathrm{e}^{r\tau+\nu_{12}}
        Q^\alpha\!\left(\frac{K_1}{S_1},\frac{K_2}{S_2};\rho\right),
    \end{equation}
    where $Q^\alpha$ is the quadrant probability kernel associated with
    the chosen product-option type. Equivalently,
    \begin{align*}
        Q^{\mathrm{CC}}(x,y;\rho) &= \Phi_2^{++}(d_1(x),e_1(y);\rho),\\
        Q^{\mathrm{CP}}(x,y;\rho) &= -\Phi_2^{+-}(d_1(x),e_1(y);\rho),\\
        Q^{\mathrm{PC}}(x,y;\rho) &= -\Phi_2^{-+}(d_1(x),e_1(y);\rho),\\
        Q^{\mathrm{PP}}(x,y;\rho) &= \Phi_2^{--}(d_1(x),e_1(y);\rho),
    \end{align*}
    with
    \[
        d_1(x):=\frac{\log x-r\tau-\tfrac12\nu_{11}-\nu_{12}}{\sqrt{\nu_{11}}},
        \qquad
        e_1(y):=\frac{\log y-r\tau-\tfrac12\nu_{22}-\nu_{12}}{\sqrt{\nu_{22}}}.
    \]
    Therefore
    \begin{align}
    \label{eq:e5-copy-bi-1}
        \partial_{S_1}\partial_{\nu_{12}}\Pi^{(2)}
        &=
        S_1S_2^2\mathrm{e}^{r\tau+\nu_{12}}
        \int_0^\infty\int_0^\infty
        \bigl(
            2\lambda(xS_1,yS_2)+xS_1\partial_1\lambda(xS_1,yS_2)
        \bigr)
        Q^\alpha(x,y;\rho)\,\mathrm{d} x\,\mathrm{d} y,\\
    \label{eq:e5-copy-bi-2}
        \partial_{S_2}\partial_{\nu_{12}}\Pi^{(2)}
        &=
        S_1^2S_2\mathrm{e}^{r\tau+\nu_{12}}
        \int_0^\infty\int_0^\infty
        \bigl(
            2\lambda(xS_1,yS_2)+yS_2\partial_2\lambda(xS_1,yS_2)
        \bigr)
        Q^\alpha(x,y;\rho)\,\mathrm{d} x\,\mathrm{d} y.
    \end{align}
    Hence the local cross-vega-flatness condition yields
    the first-order system
    \[
        2\lambda(K_1,K_2)+K_1\partial_1\lambda(K_1,K_2)=0,
        \qquad
        2\lambda(K_1,K_2)+K_2\partial_2\lambda(K_1,K_2)=0,
    \]
    whose unique positive $C^1$ solution is
    \[
        \lambda(K_1,K_2)=\frac{c}{K_1^2K_2^2},
        \qquad c>0.
    \]
\end{enumerate}
\end{proposition}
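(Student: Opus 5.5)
The plan is to reduce both parts to the scale invariance of Black--Scholes prices, substitute strikes as multiples of spot, and then derive and solve the resulting Euler-type ODE or PDE.

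\emph{Part (i).} The starting point is the homogeneity $O(S_0,K;\nu)=S_0\,o(K/S_0;\nu)$, where $o$ is the price per unit spot. The vega of a vanilla has the known closed form
\[
\partial_\nu O=\frac{S_0}{2\sqrt{\nu}}\,n\bigl(d_+(K/S_0;\nu)\bigr),
\]
with $r$ absorbed into the forward. Differentiating under the integral, which is justified by the Gaussian decay of $n$ in $\log K$ together with the assumed integrability of $\lambda$, gives
\[
\partial_\nu\Pi^{(1)}=\int_0^\infty\lambda(K)\,\frac{S_0}{2\sqrt\nu}\,n\bigl(d_+(K/S_0)\bigr)\,\mathrm dK .
\]
Substituting $K=xS_0$ turns this into
\[
\frac{S_0^2}{2\sqrt\nu}\int_0^\infty\lambda(xS_0)\,n\bigl(d_+(x)\bigr)\,\mathrm dx,
\]
and all $S_0$-dependence now sits outside the kernel $n(d_+(x))$. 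Applying $\partial_{S_0}$ and the product rule then yields exactly \eqref{eq:e5-copy-uni}.

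For the flatness step, I would impose the condition at the density level: the integrand must vanish for every $S_0$. With $K=xS_0$ this reads $2\lambda(K)+K\lambda'(K)=0$, i.e. $(K^2\lambda)'=0$. Hence $K^2\lambda$ is constant, which gives $\lambda=c/K^2$ with $c>0$ by positivity. Uniqueness follows from standard ODE uniqueness on $(0,\infty)$.

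\emph{Part (ii).} This part follows the same template. I will take from \cite{Madan2021Pricing} the closed-form quadrant prices of the product option, $S_1S_2e^{r\tau+\nu_{12}}$ times bivariate normal quadrant probabilities. From these I would check \eqref{eq:e5-copy-crossvega}, namely that differentiating in $\nu_{12}$ leaves $S_1S_2e^{r\tau+\nu_{12}}Q^\alpha(K_1/S_1,K_2/S_2;\rho)$. The point to verify is that the derivatives of the $\Phi_2$ arguments through $d_1,e_1$ cancel against the strike terms. This is the Black--Scholes analogue of the identity $S\,n(d_1)=K e^{-r\tau}n(d_2)$, and it is the step I expect to be the main obstacle; I will simply invoke it from the cited formulas rather than recompute it.

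Granting \eqref{eq:e5-copy-crossvega}, I would substitute $K_1=xS_1$ and $K_2=yS_2$ in the double integral. This gives
\[
\partial_{\nu_{12}}\Pi^{(2)}=S_1^2S_2^2e^{r\tau+\nu_{12}}\iint\lambda(xS_1,yS_2)\,Q^\alpha(x,y;\rho)\,\mathrm dx\,\mathrm dy .
\]
Differentiating in $S_1$ then produces $S_1S_2^2\bigl(2\lambda+xS_1\partial_1\lambda\bigr)$ inside the integral, which is \eqref{eq:e5-copy-bi-1}. The same computation in $S_2$ gives \eqref{eq:e5-copy-bi-2}.

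The density-level flatness condition gives the system $\partial_1(K_1^2\lambda)=0$ and $\partial_2(K_2^2\lambda)=0$. Together these force $K_1^2K_2^2\lambda$ to be constant on the connected set $(0,\infty)^2$, so $\lambda=c/(K_1^2K_2^2)$ with $c>0$.
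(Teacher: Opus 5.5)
Your proposal is correct and is essentially the paper's own proof. It uses the same steps: the vega (resp.\ cross-vega) formula, the rescaling $K=xS_0$ (resp.\ $K_1=xS_1$, $K_2=yS_2$) that moves all spot dependence into $\lambda$, the product rule in $S_0$ (resp.\ $S_1,S_2$), and density-level vanishing of the bracket, with \eqref{eq:e5-copy-crossvega} taken from \cite{Madan2021Pricing} exactly as the paper does. The only cosmetic difference is that you conclude via $\nabla\bigl(K_1^2K_2^2\lambda\bigr)=0$ on the connected set $(0,\infty)^2$, whereas the paper first integrates in $K_1$ to get $\lambda=a(K_2)K_1^{-2}$ and then solves $2a+K_2a'=0$.
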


\begin{proof}
Part (i) is the standard one-dimensional scaling argument. Since
\[
    \partial_\nu O(S_0,K;\nu)
    =
    \frac{S_0}{2\sqrt{\nu}}\,n\bigl(d_+(K/S_0;\nu)\bigr),
\]
the change of variables $K=xS_0$ gives
\[
    \partial_\nu \Pi^{(1)}
    =
    \frac{S_0^2}{2\sqrt{\nu}}
    \int_0^\infty \lambda(xS_0)\,n\bigl(d_+(x;\nu)\bigr)\,\mathrm{d} x,
\]
and differentiation with respect to $S_0$ yields
\eqref{eq:e5-copy-uni}. Requiring local vega-flatness at the density
level gives the ordinary differential equation
$2\lambda(K)+K\lambda'(K)=0$, whose unique positive $C^1$ solution is
$\lambda(K)=cK^{-2}$.

For part (ii), the explicit closed-form bivariate Black--Scholes
product-option formulas imply \eqref{eq:e5-copy-crossvega}; this is the
precise multivariate analogue of the one-dimensional Black--Scholes vega
formula. Substituting \eqref{eq:e5-copy-crossvega} into the strip and
introducing the scaling variables $x=K_1/S_1$ and $y=K_2/S_2$ gives
\[
    \partial_{\nu_{12}}\Pi^{(2)}
    =
    (S_1S_2)^2\mathrm{e}^{r\tau+\nu_{12}}
    \int_0^\infty\int_0^\infty
    \lambda(xS_1,yS_2)\,
    Q^\alpha(x,y;\rho)\,\mathrm{d} x\,\mathrm{d} y.
\]
Differentiation with respect to $S_1$ and $S_2$ yields
\eqref{eq:e5-copy-bi-1}--\eqref{eq:e5-copy-bi-2}. Imposing local
cross-vega-flatness forces the bracketed terms to vanish, hence
\[
    2\lambda(K_1,K_2)+K_1\partial_1\lambda(K_1,K_2)=0,
    \qquad
    2\lambda(K_1,K_2)+K_2\partial_2\lambda(K_1,K_2)=0.
\]
Solving first in the $K_1$-variable gives
$\lambda(K_1,K_2)=a(K_2)K_1^{-2}$; substituting into the second
equation yields $2a(K_2)+K_2a'(K_2)=0$, so
$a(K_2)=cK_2^{-2}$. Therefore
$\lambda(K_1,K_2)=c/(K_1^2K_2^2)$.
\end{proof}

We now present an alternative replication formula for covariance swaps based on the bilinearity of quadratic covariation. In contrast to the product-option construction above, this approach reduces the problem to one-dimensional spanning formulas applied to suitably chosen transformed coordinates.

\begin{example}[Geometric Covariance Swap Replication via Log-Spread Options]
\label{cov-swapreplicationlogspreads}
Fix a maturity \(T>0\) and a valuation time \(t\in[0,T)\). For assets \(i,j\in\{1,\dots,d\}\), let \(Y_u^k:=\log S_u^k\) for \(u\in[t,T]\). The realized geometric covariance is defined by
\begin{align}\label{eq:geo-qcov-def-spread}
    \langle Y^{i},Y^{j}\rangle_{t,T}
    :=
    p\text{-}\lim_{|\pi|\to 0}
    \sum_{m=0}^{n-1}
    \bigl(Y^i_{t_{m+1}}-Y^i_{t_m}\bigr)
    \bigl(Y^j_{t_{m+1}}-Y^j_{t_m}\bigr),
\end{align}
and the corresponding covariance-swap payoff is
\begin{align}\label{eq:geo-covswap-payoff-spread}
    H_T^{\mathrm{g\text{-}cov}}
    :=
    \langle Y^{i},Y^{j}\rangle_{t,T}-K_{\mathrm{var}},
    \qquad
    K_{\mathrm{var}}=\mathbb{E}_t^{\mathbb{Q}}\big[\langle Y^{i},Y^{j}\rangle_{t,T}\big].
\end{align}
As above, this example considers the case of strictly positive continuous semimartingales. The construction proceeds in two steps: Step~1 reduces the covariance swap to three one-dimensional quadratic-variation terms, and Step~2 applies the classical Carr--Madan log-contract replication to each of these three variance legs.

\subsubsection*{Step 1: Polarization Formula for Quadratic Covariation}
The bilinearity of quadratic covariation yields the polarization identity
\begin{align}\label{eq:polarizationquadraticvariation}
    \langle Y^i,Y^j\rangle_{t,T}
    =
    \frac12\Big(
        \langle Y^i\rangle_{t,T}
        +\langle Y^j\rangle_{t,T}
        -\langle Y^i-Y^j\rangle_{t,T}
    \Big).
\end{align}
Thus, at the level of the continuous-path polarization identity, a covariance swap may be represented as a long position in the two marginal variance swaps together with a short position in the variance swap written on the log-spread process \(Y^i-Y^j=\log(S^i/S^j)\).

\subsubsection*{Step 2: Carr--Madan Replication of the Three Variance Legs}
Set
\[
    R_u^{ij}:=\frac{S_u^i}{S_u^j},
    \qquad u\in[t,T],
\]
so that \(Y_u^i-Y_u^j=\log R_u^{ij}\). For any strictly positive continuous semimartingale \(X\) on \([t,T]\), It\^o's formula yields
\begin{align}\label{eq:cm-log-qv-generic}
    \langle \log X\rangle_{t,T}
    =
    2\int_t^T \frac{1}{X_u}\,\D X_u
    -2\log\Big(\frac{X_T}{X_t}\Big).
\end{align}
Applying the one-dimensional Carr--Madan log-contract identity at the anchor \(X_t\),
\begin{align}\label{eq:cm-log-contract-generic}
    \log\Big(\frac{X_T}{X_t}\Big)
    =
    \frac{X_T-X_t}{X_t}
    -\int_0^{X_t}\frac{(K-X_T)^+}{K^2}\,\D K
    -\int_{X_t}^{\infty}\frac{(X_T-K)^+}{K^2}\,\D K,
\end{align}
one obtains the replication formula
\begin{align}\label{eq:cm-qv-generic}
    \langle \log X\rangle_{t,T}
    &=
    2\int_t^T \frac{1}{X_u}\,\D X_u
    -\frac{2}{X_t}(X_T-X_t) \nonumber\\
    &\quad
    +2\int_0^{X_t}\frac{(K-X_T)^+}{K^2}\,\D K
    +2\int_{X_t}^{\infty}\frac{(X_T-K)^+}{K^2}\,\D K.
\end{align}
Applying \eqref{eq:cm-qv-generic} to \(X=S^i\), \(X=S^j\), and \(X=R^{ij}\) gives
\begin{align}\label{eq:cm-three-legs}
    \langle Y^i\rangle_{t,T}
    &=
    2\int_t^T \frac{1}{S_u^i}\,\D S_u^i
    -\frac{2}{S_t^i}(S_T^i-S_t^i)
    +2\int_0^{S_t^i}\frac{(K-S_T^i)^+}{K^2}\,\D K
    +2\int_{S_t^i}^{\infty}\frac{(S_T^i-K)^+}{K^2}\,\D K, \nonumber\\[1mm]
    \langle Y^j\rangle_{t,T}
    &=
    2\int_t^T \frac{1}{S_u^j}\,\D S_u^j
    -\frac{2}{S_t^j}(S_T^j-S_t^j)
    +2\int_0^{S_t^j}\frac{(K-S_T^j)^+}{K^2}\,\D K
    +2\int_{S_t^j}^{\infty}\frac{(S_T^j-K)^+}{K^2}\,\D K, \nonumber\\[1mm]
    \langle Y^i-Y^j\rangle_{t,T}
    &=
    2\int_t^T \frac{1}{R_u^{ij}}\,\D R_u^{ij}
    -\frac{2}{R_t^{ij}}(R_T^{ij}-R_t^{ij})
    +2\int_0^{R_t^{ij}}\frac{(K-R_T^{ij})^+}{K^2}\,\D K
    +2\int_{R_t^{ij}}^{\infty}\frac{(R_T^{ij}-K)^+}{K^2}\,\D K.
\end{align}
Each of the three terms on the right-hand of \eqref{eq:cm-three-legs} side is now explicitly represented as the sum of a dynamic trading term and a static strip of one-dimensional out-of-the-money options. Hence the covariance payoff is synthesized by combining the Carr--Madan strips on \(S^i\), on \(S^j\), and on the exchange ratio \(R^{ij}=S^i/S^j\). If options on the ratio process are not traded, this representation should be interpreted as an instrument-selection identity rather than as an exact market replication.
\end{example}

\begin{remark}[Instrument Selection and Liquidity Constraints]
Although \eqref{eq:polarizationquadraticvariation} provides a transparent structural decomposition of covariance into marginal variance and spread-variance directions, practical implementation depends critically on instrument availability. In liquid exchange-traded markets, options on simple arithmetic spreads $S^i-S^j$ are often more readily available than options on the ratio $S^i/S^j$ or on the log-ratio $\log(S^i/S^j)$. However, from the perspective of the geometric covariance payoff, the natural transformed coordinate is the log-spread $Y^i-Y^j$, and the ratio $S^i/S^j$ is its multiplicative counterpart. Consequently, options on the arithmetic spread generally provide a poorer functional approximation than options written directly on $S^i/S^j$ or $\log(S^i/S^j)$. This distinction is important in the numerical section below: although simple spread options may be more liquid, they are not aligned with the exact spanning structure of the geometric covariance payoff, whereas product options and log-contract-type instruments are.
\end{remark}

\subsection{Optimal Weight Selection}
\label{preliminaryresults}
We now present a first set of numerical results whose purpose is twofold: first, to quantify the variance reduction achieved by the semi-static variance-optimal hedge relative to the purely dynamic benchmark, and second, to connect the empirically selected auxiliary instruments to the spanning formulas developed in Examples~\ref{cov-swapreplicationQuanto} and~\ref{cov-swapreplicationlogspreads}. At this stage, we deliberately do not impose a specific stochastic (covariance) model for the joint asset dynamics, since our immediate objective is to isolate the structural features of the semi-static hedging problem itself. The precise probabilistic setting, together with the concrete stochastic covariance models used for implementation, pricing, and hedging, will be introduced later in Section~\ref{sec:covswap-semistatic}. This subsection's purpose is therefore to visualize the geometry of the outer problem and the qualitative effect of different auxiliary families. A fully specified and reproducible implementation, including the simulation model and the estimators of $A$, $B$, and $C$, is given later in Section~\ref{sec:numerics-and-examples}.

Throughout this subsection, the target claim is the geometric covariance payoff

\begin{equation*}
    H_T^0=\langle Y^i,Y^j\rangle_{t,T} - K_{\mathrm{swap}},
\end{equation*}
and the dynamic-only benchmark is the GKW hedge associated with $H^0$, cf.\ \eqref{eq:GKW}--\eqref{eq:OptimalStrategy}. In the absence of static instruments, the residual hedging error is therefore

\begin{equation*}
    \epsilon(0)^2 = A = \mathbb{E}\big[(L_T^0)^2\big],
\end{equation*}
as follows directly from \eqref{eq:inner-error-variance}--\eqref{eq:quadratic-form}. For a fixed family of auxiliary instruments $\bm{\eta}^{\mathcal I}$, we then consider the sparse outer problem

\begin{equation*}
    \epsilon_{\mathcal I}^2(m)
    :=
    \min_{\bm{\nu}\in\mathbb{R}^{n_{\mathcal I}}}
    \Big\{
        \bm{\nu}^{\top}C_{\mathcal I}\bm{\nu}
        -2\bm{\nu}^{\top}B_{\mathcal I}
        +A
        \;:\;
        \|\bm{\nu}\|_0\le m
    \Big\},
\end{equation*}
together with its long-only analogue

\begin{equation*}
    \epsilon_{\mathcal I,+}^2(m)
    :=
    \min_{\bm{\nu}\in\mathbb{R}_+^{n_{\mathcal I}}}
    \Big\{
        \bm{\nu}^{\top}C_{\mathcal I}\bm{\nu}
        -2\bm{\nu}^{\top}B_{\mathcal I}
        +A
        \;:\;
        \|\bm{\nu}\|_0\le m
    \Big\},
\end{equation*}
which is the constrained counterpart of the normal equation $C\bm{\nu}^*=B$ in \eqref{eq:normal-eq}. Since, the exact solution of the cardinality-constrained problem is combinatorial, we approximate it by greedy-forward selection, following \cite{semistaticsparse}. In the present setting this is particularly natural, because the matrices $C_{\mathcal I}$ are typically close to singular: nearby strikes, neighboring product quadrants, and alternative spread coordinates generate highly collinear residual martingales $(L^1,\dots,L^n)$, so that many candidate instruments span almost the same direction in $L^2$. Consequently, the frontier should be read as a discrete approximation to the projection gain $B^{\top}C^{\dagger}B$ in \eqref{eq:epsstar}: the first few selected instruments account for the dominant projection directions, while subsequent additions produce only marginal reductions in $\epsilon_{\mathcal I}^2(m)$. This is exactly what is observed numerically in Figure \ref{fig:preliminary_frontier_gfs}: the largest decrease relative to the benchmark $\epsilon(0)^2=A$ occurs at small cardinalities, after which the curves flatten substantially. The numerical evidence also shows that vanilla options alone are not sufficient to eliminate the covariance-specific residual risk, because they primarily load on marginal convexity directions, whereas the covariance swap depends on genuinely mixed curvature; once one augments the static family either by product options, as suggested by \eqref{eq:full-covswap-expansion-product-simplified} (see Equation \eqref{functionalreplication2d} for the 2-dimensional functional replication formula), or by spread options, as suggested by \eqref{eq:polarizationquadraticvariation}, the reduction in MSHE becomes materially stronger. In this sense, the numerical frontiers provide direct empirical support for the instrument-selection principle implied by the replication theory: auxiliary claims are effective precisely when their residual martingales are well aligned with the orthogonal component $L^0$ of the target claim.

\begin{figure}
    \centering
    \includegraphics[width=1\linewidth]{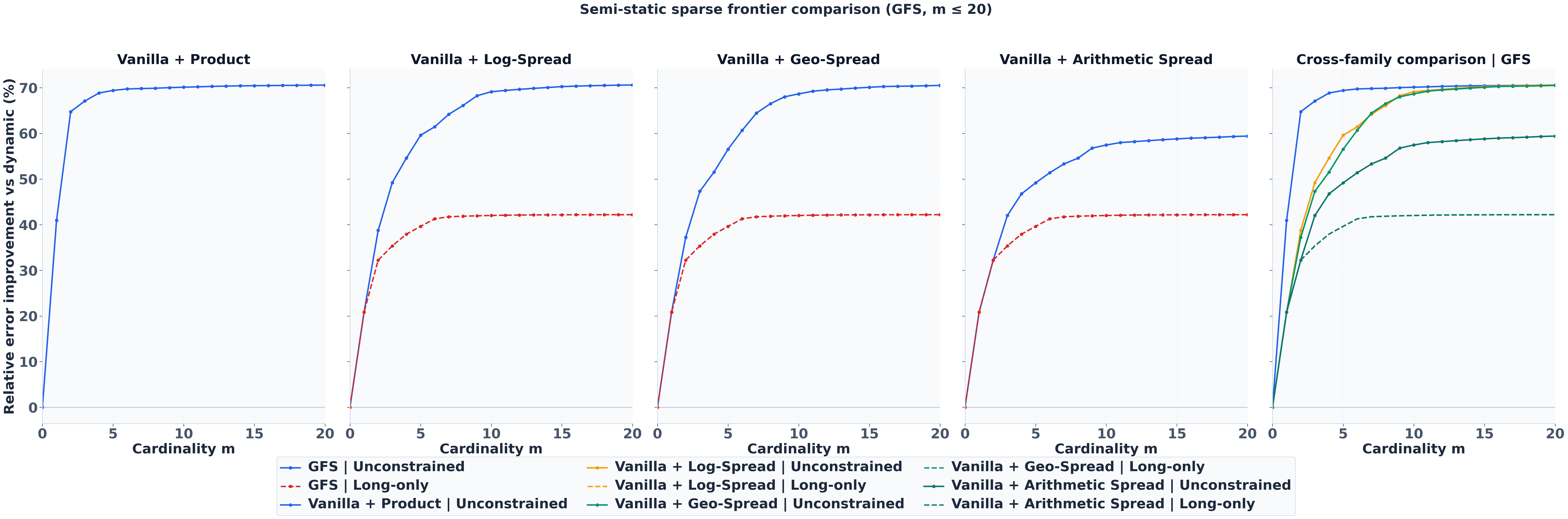}
    \caption{Greedy-forward sparse variance-optimal frontiers for the covariance-swap hedge. The figure reports the progressive improvement (decrease) of the mean-squared hedging error as the cardinality $m$ of the static portfolio increases. The steep initial improvement followed by a pronounced plateau is consistent with the quadratic structure \eqref{eq:quadratic-form}: a small number of auxiliary instruments captures most of the projection of $L_T^0$ onto $\mathrm{span}(L_T^1,\dots,L_T^n)$, while additional instruments are largely redundant because of strong collinearity across nearby strikes and related payoff families.}
    \label{fig:preliminary_frontier_gfs}
\end{figure}

\subsection*{Description of portfolio composition}

We now study in greater detail the structure of the optimal static portfolios $\bm{\nu}^*$ selected by the sparse outer optimization, with particular emphasis on how their composition reflects the spanning formulas of Section~\ref{sec:replicationtheory}. For each admissible family $\mathcal I$ and each cardinality level $m$, let $\widehat{\bm{\nu}}_{\mathcal I}^{(m)}$ denote the greedy-forward approximation to the solution of the constrained quadratic optimization above. The heatmaps therefore visualize, strike by strike and block by block, the discrete approximation of the abstract optimizer in \eqref{eq:nustar}; equivalently, they reveal how the projection of $L^0$ onto the span of the auxiliary residual martingales is implemented in practice. From this viewpoint, persistent selection of the same strikes across several values of $m$ should be interpreted as evidence that the corresponding instruments approximate robust basis directions of the residual covariance risk, while the absence of further diversification at large $m$ indicates that the dominant subspace has already been identified.

For the families \emph{vanilla + log-spread} and \emph{vanilla + geometric-spread}, the composition heatmaps presented in Figures \ref{fig:composition_logspread}-\ref{fig:composition_geospread}, show a remarkably stable pattern. In both cases, the optimizer selects a sparse strip of vanilla puts and calls on each marginal asset together with only a very small number of spread strikes concentrated near the central region of the spread coordinate. This is precisely the structure suggested by the one-dimensional Carr--Madan spanning principle \eqref{eq:replication1d-refined}--\eqref{eq:replication1d-psi} combined with the polarization identity \eqref{eq:polarizationquadraticvariation}. Indeed, the covariance payoff is decomposed into marginal variance directions and one opposing spread-variance direction, so the role of the vanillas is to span the marginal convexity terms, whereas the role of the spread options is to isolate and subtract the co-movement component.

\begin{figure}[!htbp]
    \centering
    \includegraphics[width=1\linewidth]{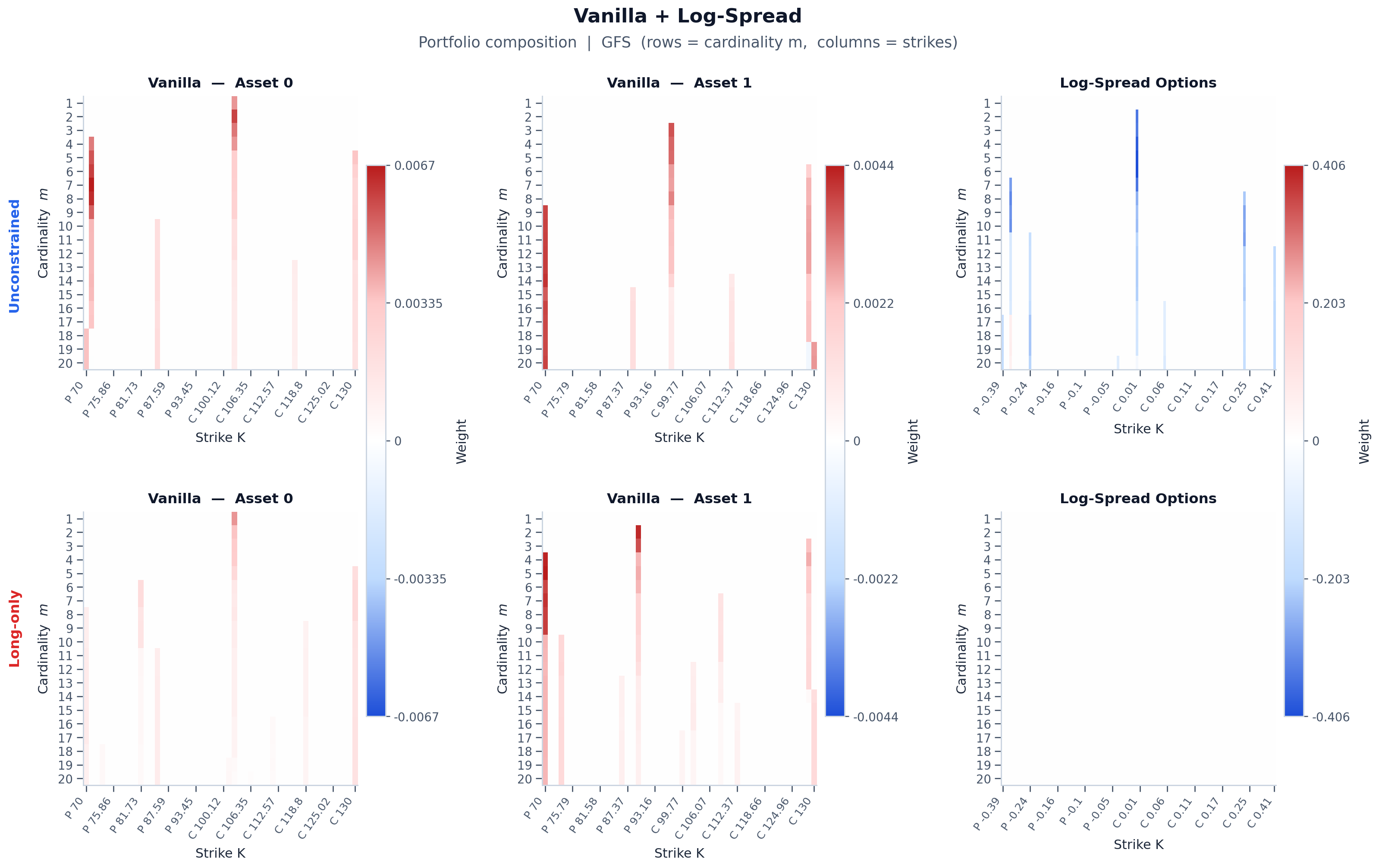}
    \caption{Stepwise composition of the sparse portfolio in the \emph{vanilla + log-spread} family. Rows correspond to the active cardinality $m$, and columns correspond to option strikes. The figure shows that the optimizer combines a small number of stable vanilla strikes on the two marginals with a highly concentrated set of log-spread options near the central spread region. This is consistent with the decomposition implied by \eqref{eq:polarizationquadraticvariation}, where the covariance direction is represented as a signed correction to marginal variance directions.}
    \label{fig:composition_logspread}
\end{figure}

\begin{figure}[!htbp]
    \centering
    \includegraphics[width=1\linewidth]{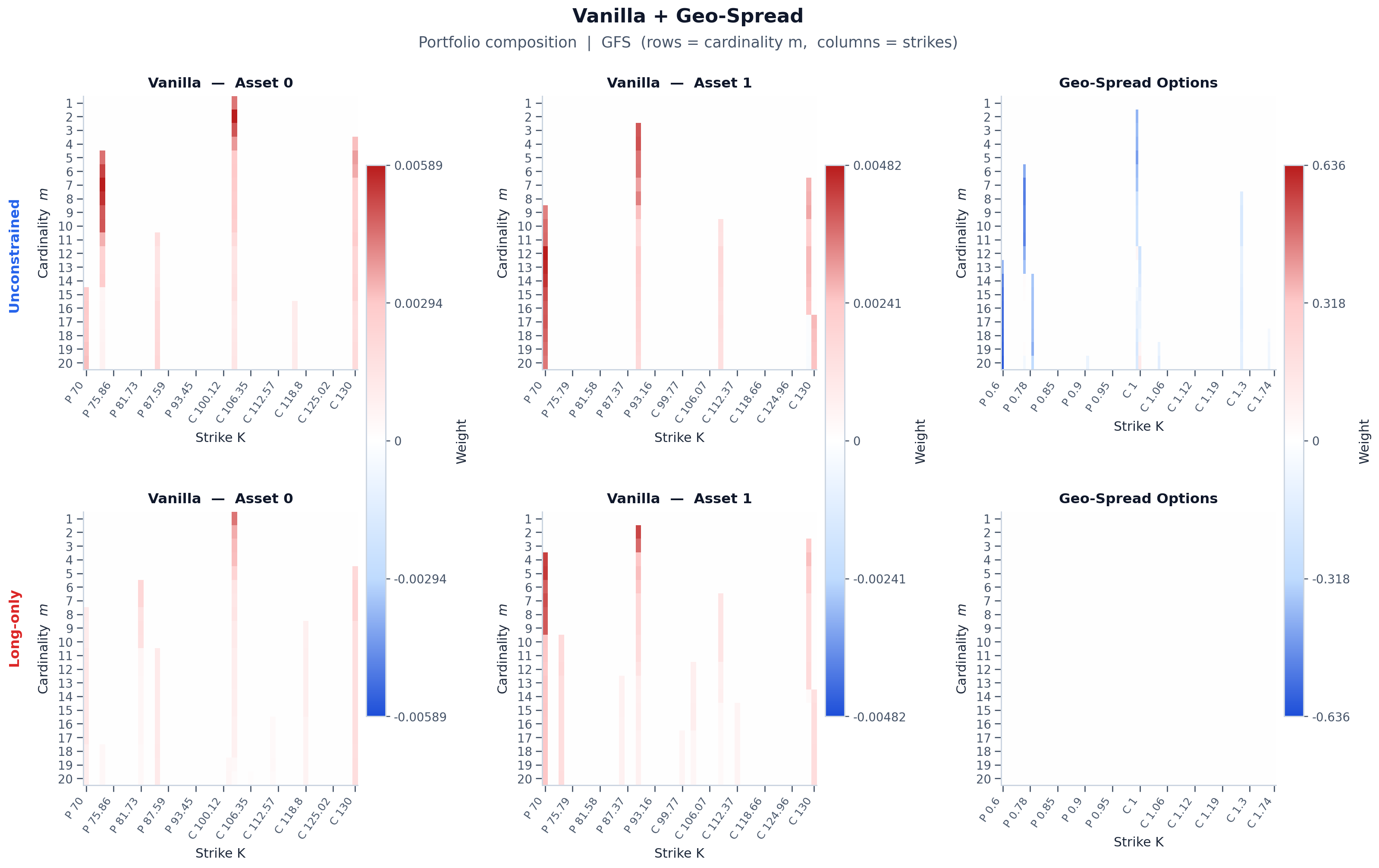}
    \caption{Stepwise composition of the sparse portfolio in the \emph{vanilla + geometric-spread} family. The structure closely mirrors the log-spread case, but now the auxiliary correction is expressed in the ratio coordinate $Z=S_1/S_2$. Numerically, this family behaves as a discrete proxy for the spread-variance correction, implemented in a multiplicative rather than additive coordinate system.}
    \label{fig:composition_geospread}
\end{figure}

The numerical payoff decompositions presented in Figures \ref{fig:payoff_logspread}-\ref{fig:payoff_geospread} confirm this interpretation: the vanilla block is a large, positive, U-shaped contribution in the diagonal variable $S_1=S_2=S$, while the spread block is centered around $Z=0$ for the log-spread case and $Z=1$ for the geometric-spread case, where it changes sign or becomes nearly flat at the pivot and contributes mainly away from the center. This is exactly the behavior one expects from the residual claim $\eta^{\bm{\nu}}=\eta^0-\bm{\nu}^{\top}\bm{\eta}$ in \eqref{eq:residual-claim}--\eqref{eq:residual-GKW}: the spread leg is not intended to reproduce the whole payoff on its own, but rather to remove the covariance-specific orthogonal component that cannot be spanned by marginal vanillas alone. By contrast, under long-only constraints the spread block becomes almost degenerate, because the replication in \eqref{eq:polarizationquadraticvariation} requires an effective short exposure to the spread-variance direction; once negative coefficients are excluded, the optimizer is forced back toward a mostly vanilla-based approximation, and the resulting payoff becomes more convex, more one-sided, and less faithful to the centered covariance geometry.

\begin{figure}[!htbp]
    \centering
    \includegraphics[width=1\linewidth]{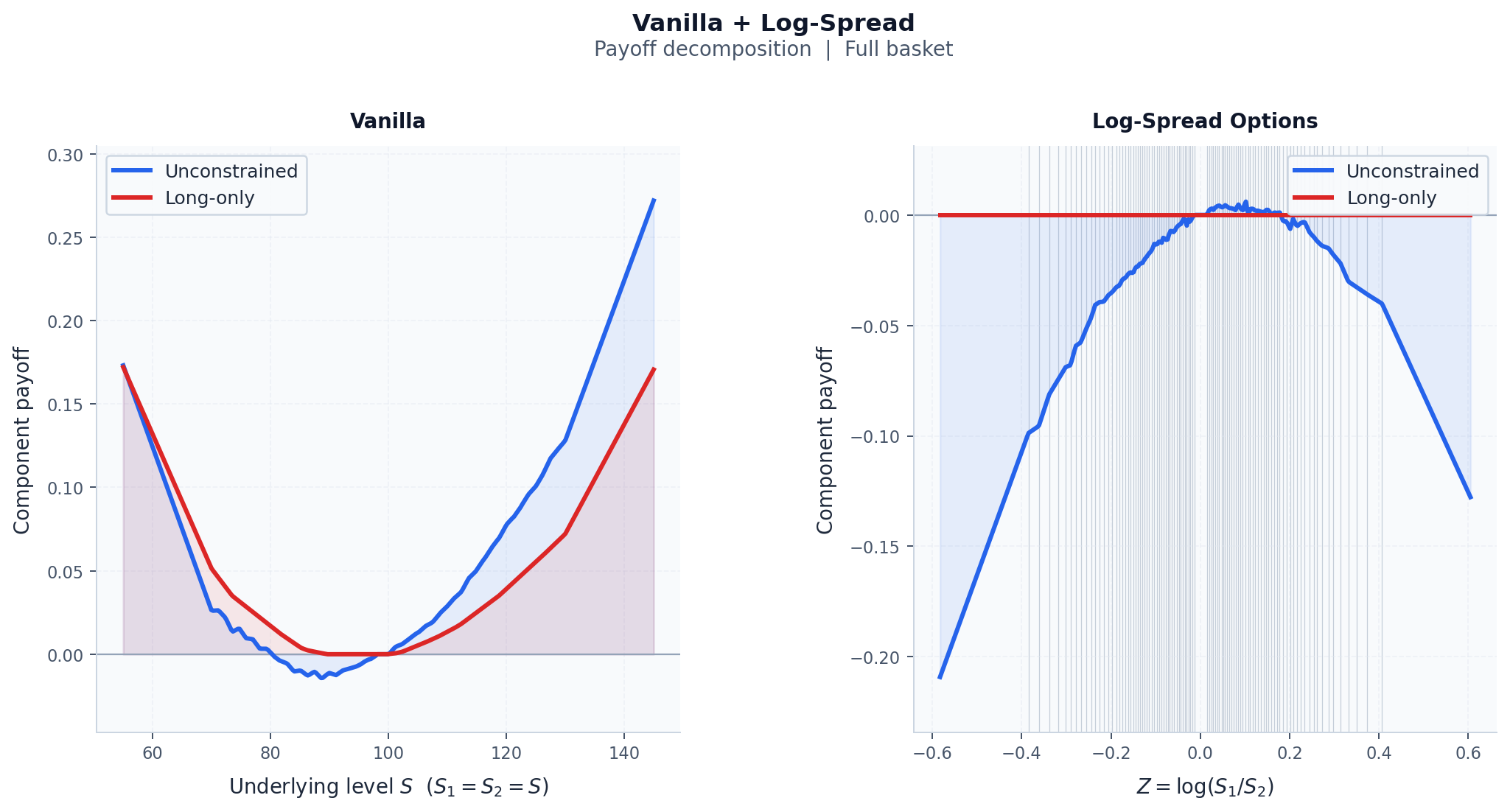}
    \caption{Payoff decomposition of the full \emph{vanilla + log-spread} basket. The left panel reports the marginal vanilla contribution as a function of the diagonal level $S_1=S_2=S$, while the right panel reports the spread contribution as a function of $Z=\log(S_1/S_2)$. The unconstrained spread component is centered and sign-sensitive, which matches its role as the corrective covariance direction in \eqref{eq:polarizationquadraticvariation}; the long-only spread component is nearly suppressed, illustrating the loss of the necessary signed hedge.}
    \label{fig:payoff_logspread}
\end{figure}

\begin{figure}[!htbp]
    \centering
    \includegraphics[width=1\linewidth]{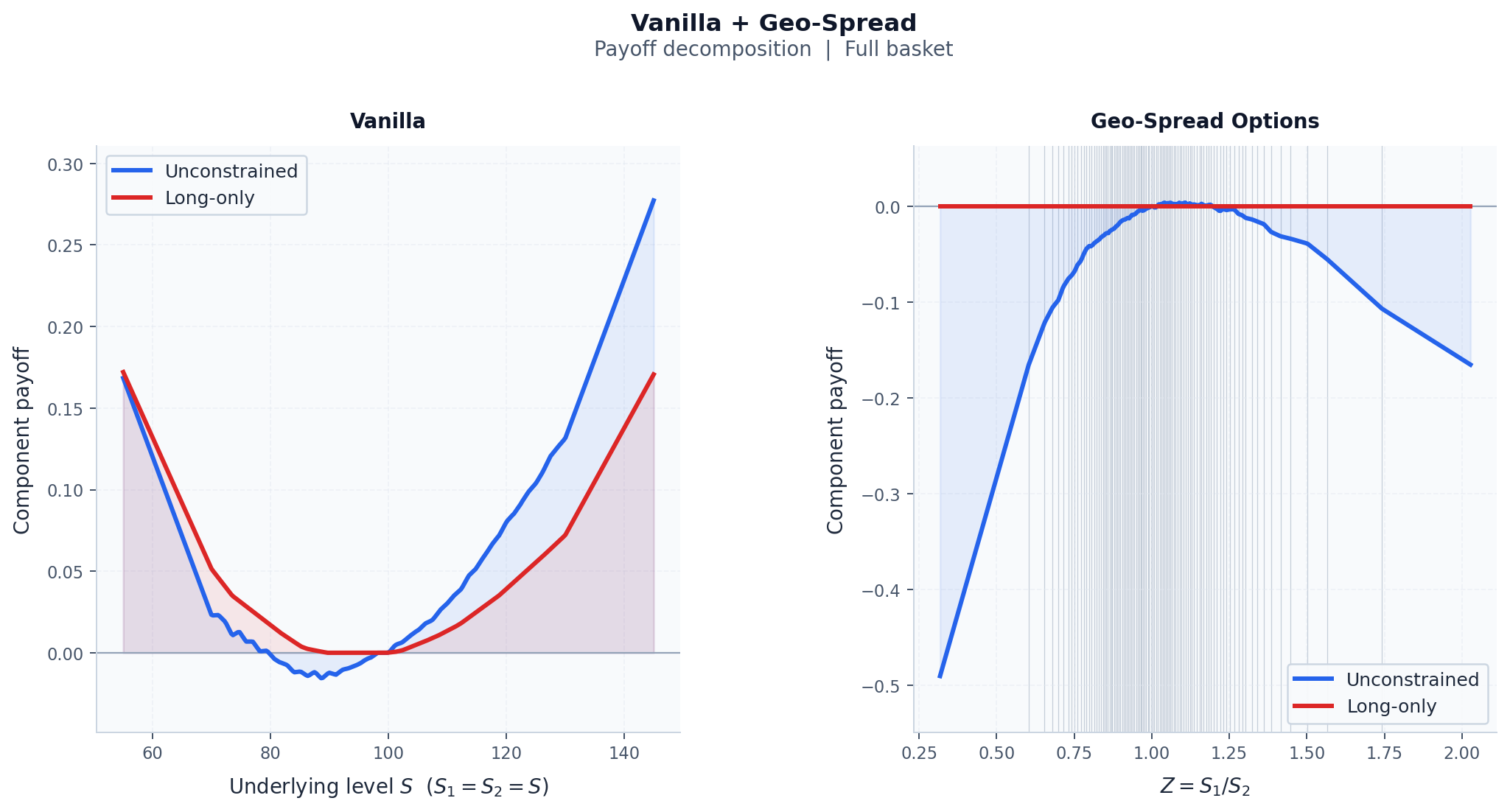}
    \caption{Payoff decomposition of the full \emph{vanilla + geometric-spread} basket. As in the log-spread case, the marginal vanilla block captures the dominant convexity along the diagonal, while the ratio-option block provides the covariance-specific correction in the relative coordinate $Z=S_1/S_2$. The unconstrained portfolio therefore preserves a centered relative-move hedge, whereas the long-only portfolio loses most of this correction.}
    \label{fig:payoff_geospread}
\end{figure}

\paragraph{Product options.}

In this case the optimizer has direct access to the four quadrant product payoffs $\mathrm{CC},\mathrm{CP},\mathrm{PC},\mathrm{PP}$, and the numerical compositions exhibit precisely the sign pattern predicted by the theory (see Equation \eqref{eq:full-covswap-expansion-product-simplified}). More specifically, the unconstrained optimizer allocates positive mass to $\mathrm{CC}$ and $\mathrm{PP}$ and negative mass to $\mathrm{CP}$ and $\mathrm{PC}$ around a tight neighborhood of central strike pairs $(K_1,K_2)$. This is the exact discrete analogue of the bilinear identity \eqref{eq:bilinear-quadrant-identity}--\eqref{eq:bilinear-term-cccp}, where the mixed term $(x-\alpha)(y-\beta)$ necessarily enters with alternating signs across quadrants, and it is further reinforced by the signed mixed terms in \eqref{eq:full-covswap-expansion-product-simplified}. The local product heatmap at $m=5$ in Figure \ref{fig:composition_product_local} makes this especially transparent: the unconstrained portfolio contains an explicit saddle-generating pattern, whereas the long-only portfolio retains only the nonnegative $\mathrm{CC}$ and $\mathrm{PP}$ directions and therefore cannot reproduce the negative mixed curvature carried by $\mathrm{CP}$ and $\mathrm{PC}$. The blockwise composition Figure \ref{fig:composition_product_blocks} shows that this phenomenon is not incidental but persists throughout the selection path: as $m$ increases, the unconstrained portfolio continues to distribute mass across all four quadrants, while the long-only portfolio concentrates almost entirely on nonnegative convex building blocks. From the viewpoint of the semi-static projection problem, this is exactly what one should expect. The unconstrained optimizer solves the normal equation $C\bm{\nu}^*=B$ in the full linear span of auxiliary residuals, so it can reproduce the centered saddle geometry of the covariance claim; the long-only optimizer instead projects onto a positive cone, and the resulting terminal payoff becomes predominantly nonnegative and nearly separable. The two-dimensional payoff surfaces in Figure~\ref{fig:surface_comparison_covariance} confirm this interpretation at the level of terminal profiles: the unconstrained hedges display the sign-changing shape characteristic of a covariance-type payoff, while the long-only surfaces are smoother, more one-sided, and systematically less able to encode negative cross-gamma exposure. The same ordering appears at the distributional level in Figure~\ref{fig:residuals_covariance}: for all three unconstrained auxiliary families, the semi-static portfolio yields the tightest concentration of terminal hedging errors around zero, whereas the single-block portfolios remain visibly more dispersed. 

\begin{figure}[!htbp]
    \centering
    \includegraphics[width=1\linewidth]{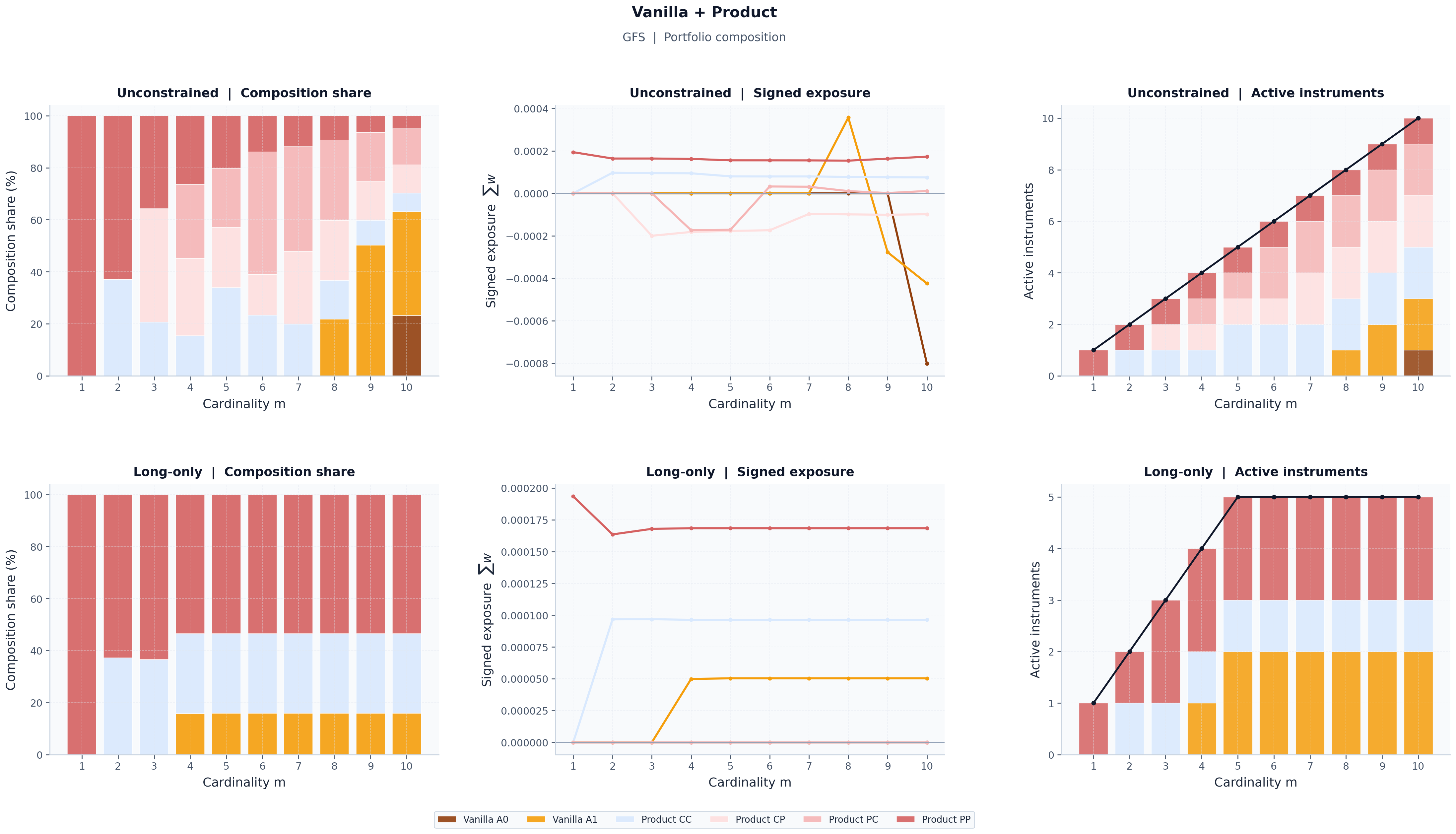}
    \caption{Blockwise composition of the sparse \emph{vanilla + product} portfolio under greedy-forward selection. The panels show the relative contribution of vanilla options and of each product quadrant $(\mathrm{CC},\mathrm{CP},\mathrm{PC},\mathrm{PP})$ as the cardinality increases. The unconstrained case uses all four quadrants, whereas the long-only case concentrates on nonnegative blocks only.}
    \label{fig:composition_product_blocks}
\end{figure}

\begin{figure}[!htbp]
    \centering
    \includegraphics[width=1\linewidth]{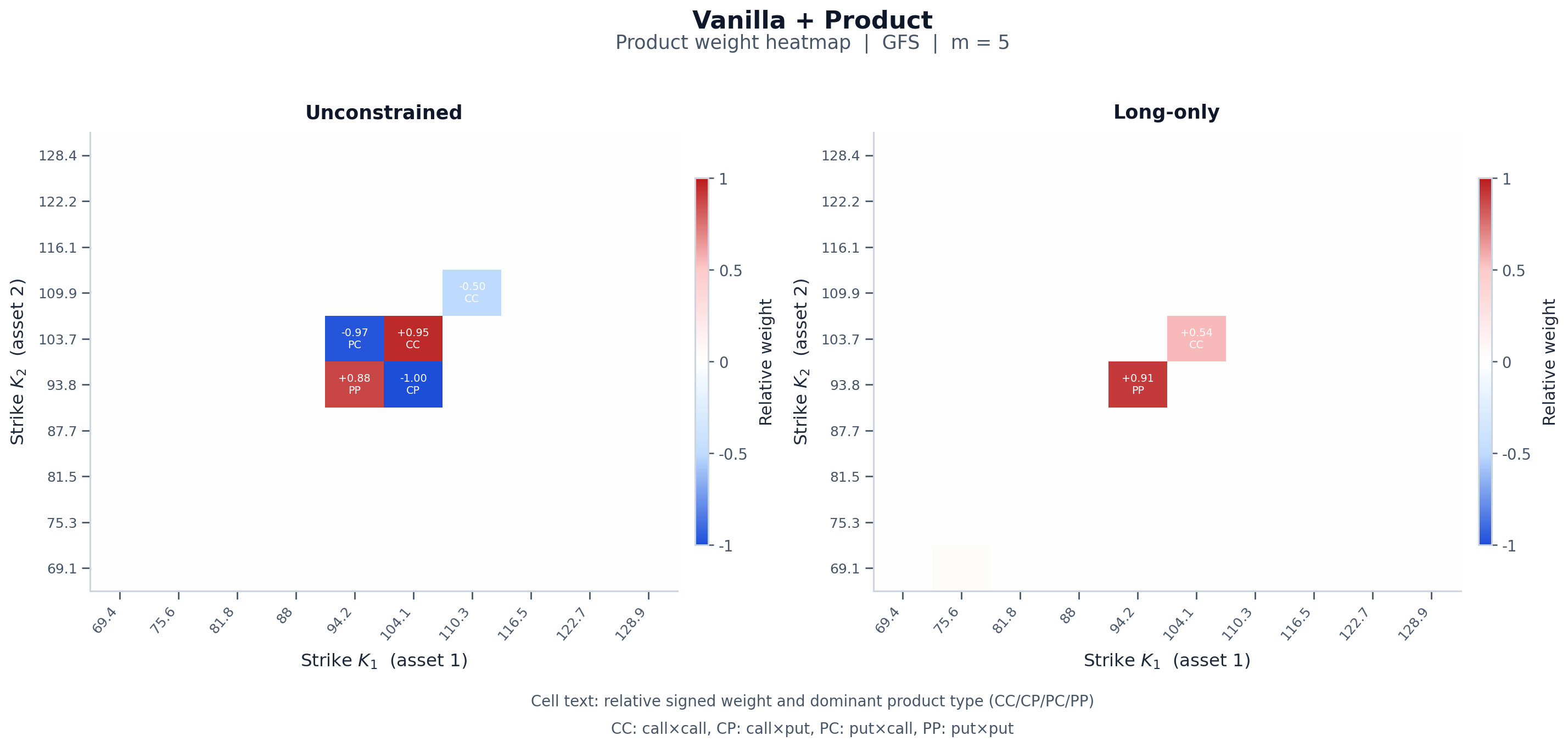}
    \caption{Local two-dimensional heatmap of selected product-option weights for the \emph{vanilla + product} family at $m=5$. The unconstrained optimizer displays the alternating sign pattern $\mathrm{CC}/\mathrm{PP}>0$ and $\mathrm{CP}/\mathrm{PC}<0$, which is the discrete signature of the bilinear covariance term in \eqref{eq:bilinear-term-cccp}. Under long-only constraints only the nonnegative quadrants remain active, so the centered saddle structure cannot be reproduced exactly.}
    \label{fig:composition_product_local}
\end{figure}

\begin{figure}[!htbp]
    \centering
    \begin{subfigure}[b]{0.88\textwidth}
        \centering
        \includegraphics[width=\textwidth]{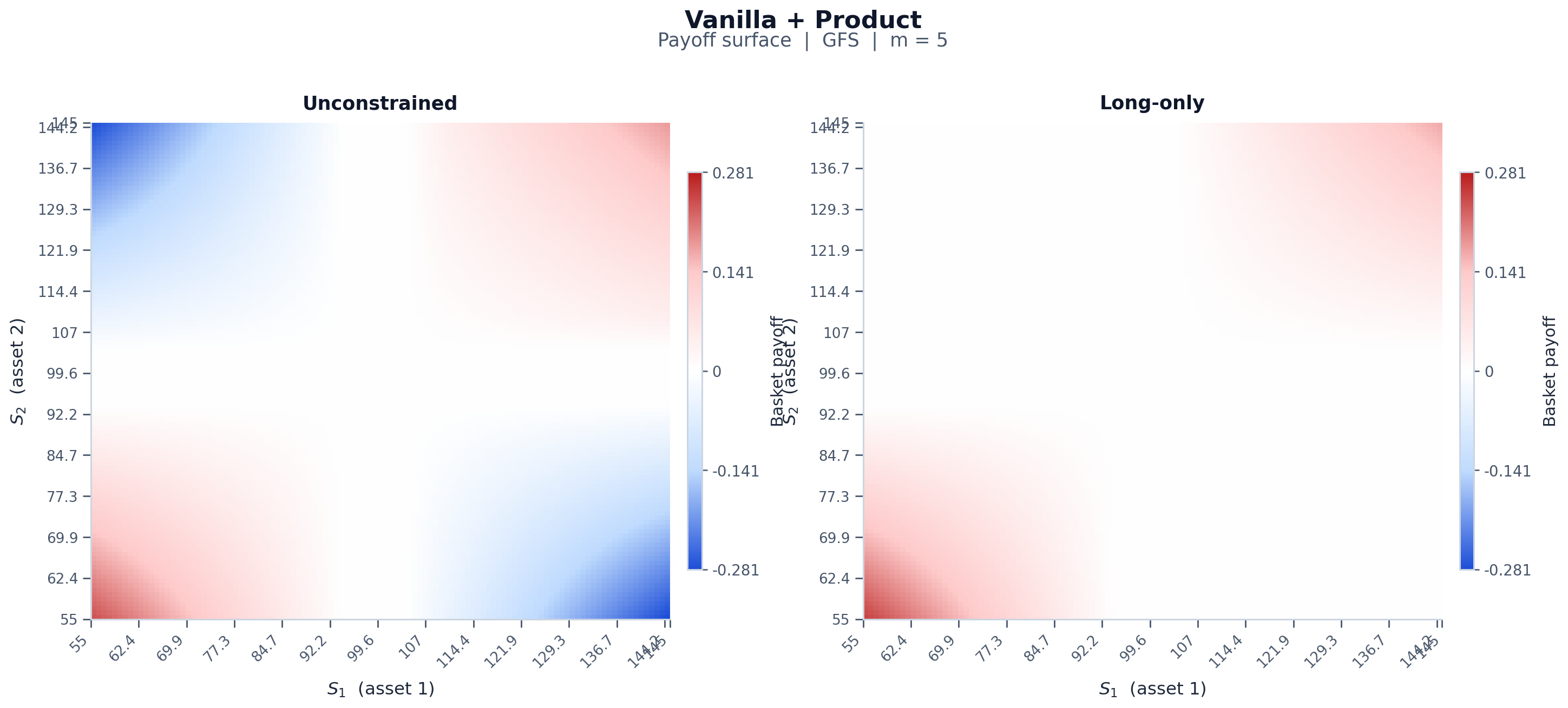}
    \end{subfigure}
    \vspace{0.5em}

    \begin{subfigure}[b]{0.88\textwidth}
        \centering
        \includegraphics[width=\textwidth]{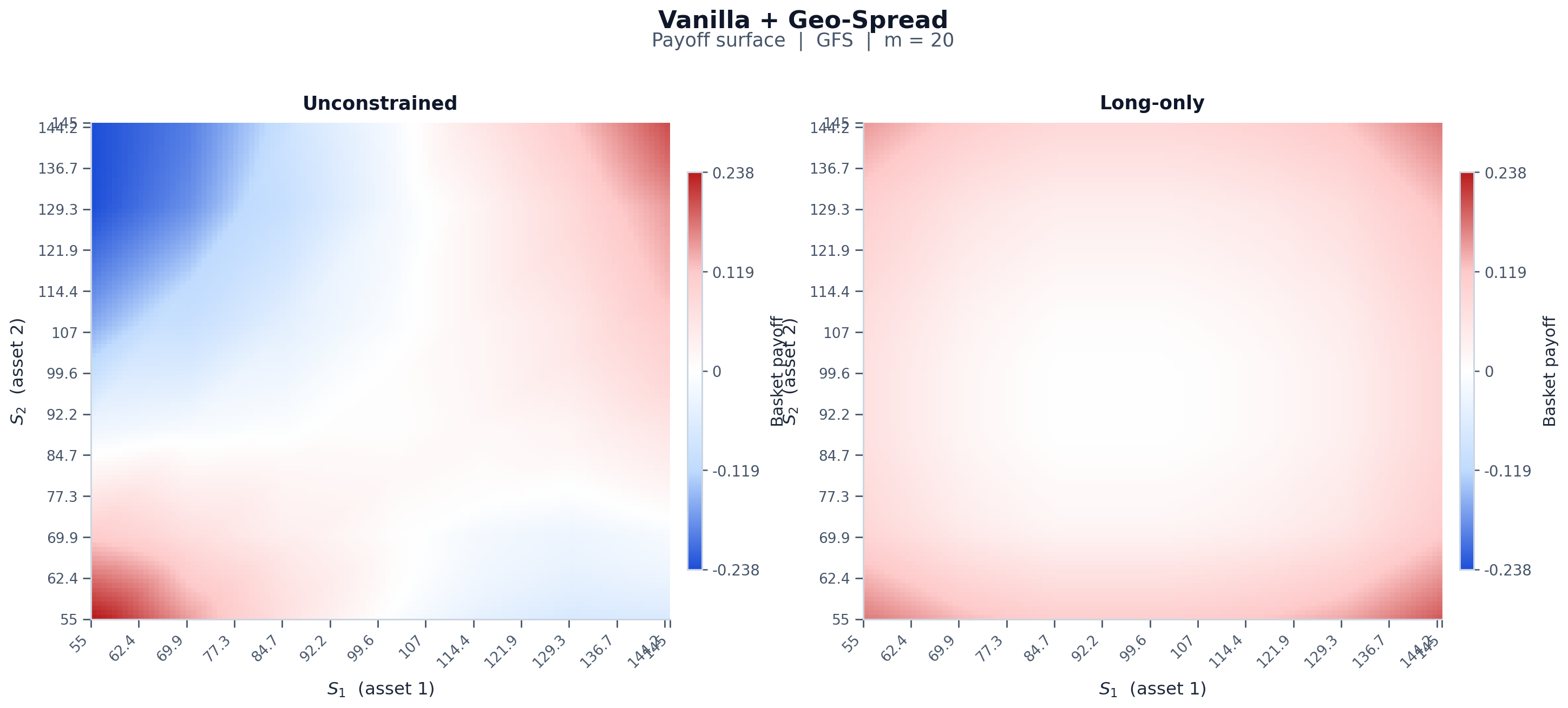}
    \end{subfigure}
    \vspace{0.5em}

    \begin{subfigure}[b]{0.88\textwidth}
        \centering
        \includegraphics[width=\textwidth]{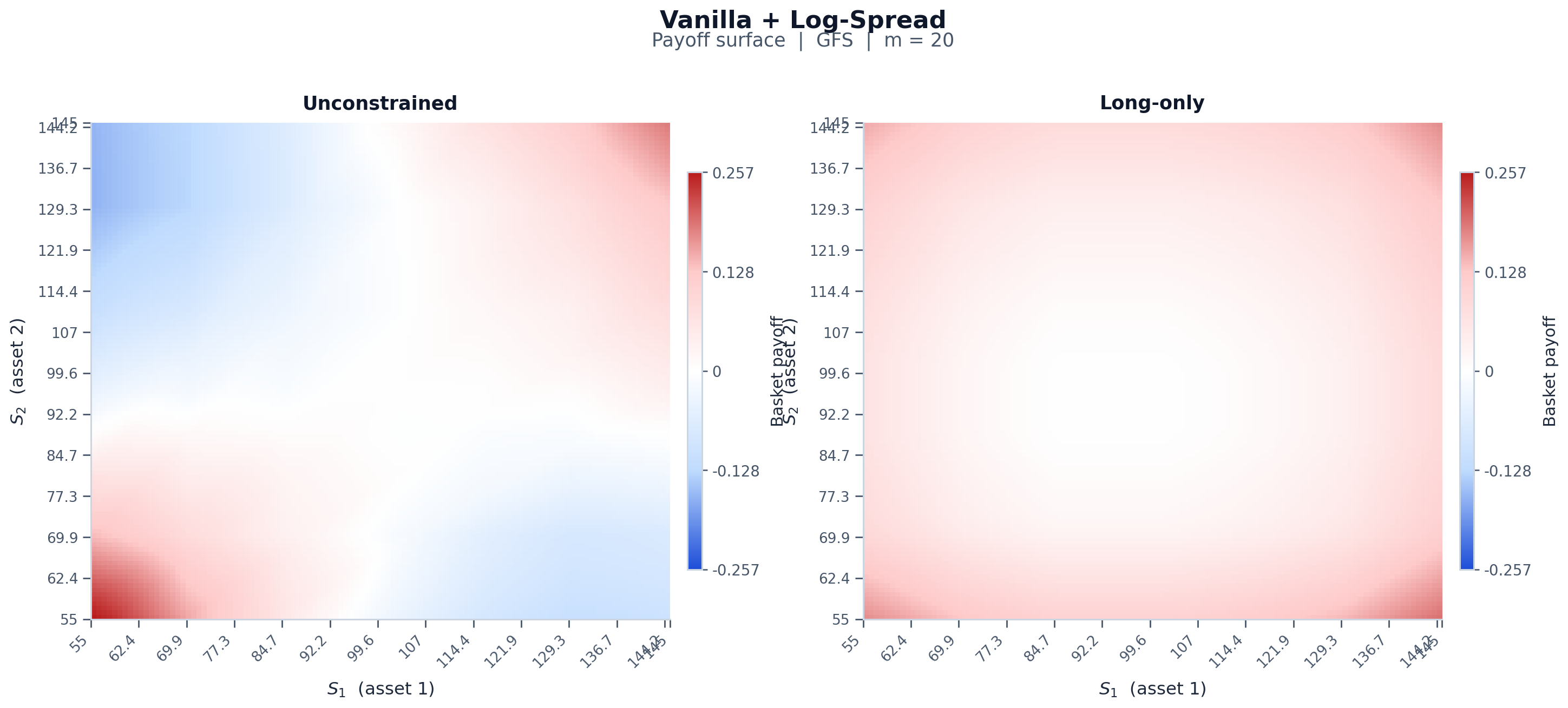}
    \end{subfigure}
    \caption{Comparison of terminal payoff surfaces generated by the three principal auxiliary families. The \emph{vanilla + product} family produces the most direct saddle-type geometry, reflecting the signed quadrant structure of \eqref{eq:bilinear-term-cccp} and the cross-convexity strip in \eqref{eq:geo-kernel-product-quanto}. The \emph{vanilla + log-spread} and \emph{vanilla + geometric-spread} families generate the same covariance direction indirectly through a relative-move coordinate, in line with \eqref{eq:polarizationquadraticvariation}. In all three cases, the long-only constraint suppresses the sign-changing part of the surface and therefore distorts the covariance hedge.}
    \label{fig:surface_comparison_covariance}
\end{figure}

\begin{figure}[!htbp]
    \centering
    \begin{subfigure}[b]{0.88\textwidth}
        \centering
        \includegraphics[width=\textwidth]{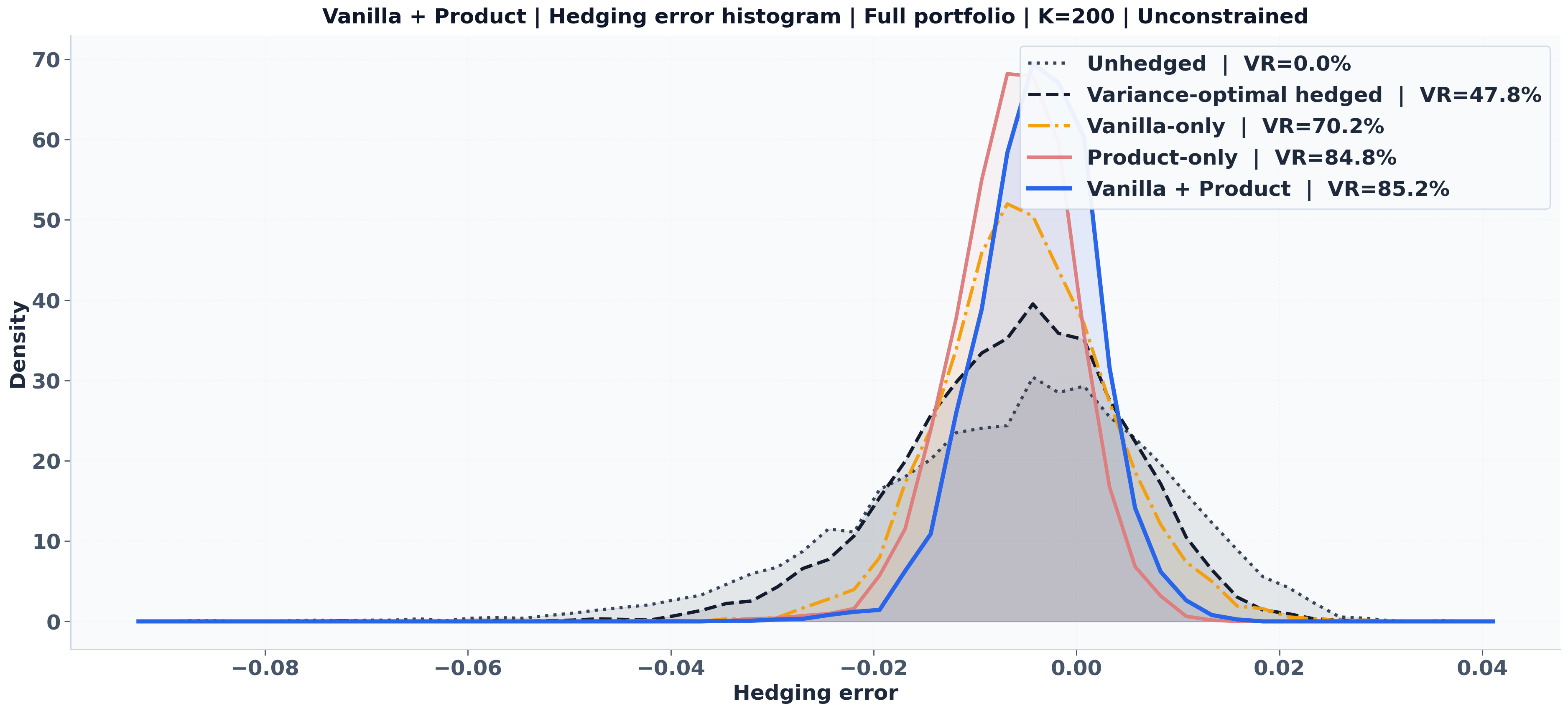}
    \end{subfigure}
    \vspace{0.5em}

    \begin{subfigure}[b]{0.88\textwidth}
        \centering
        \includegraphics[width=\textwidth]{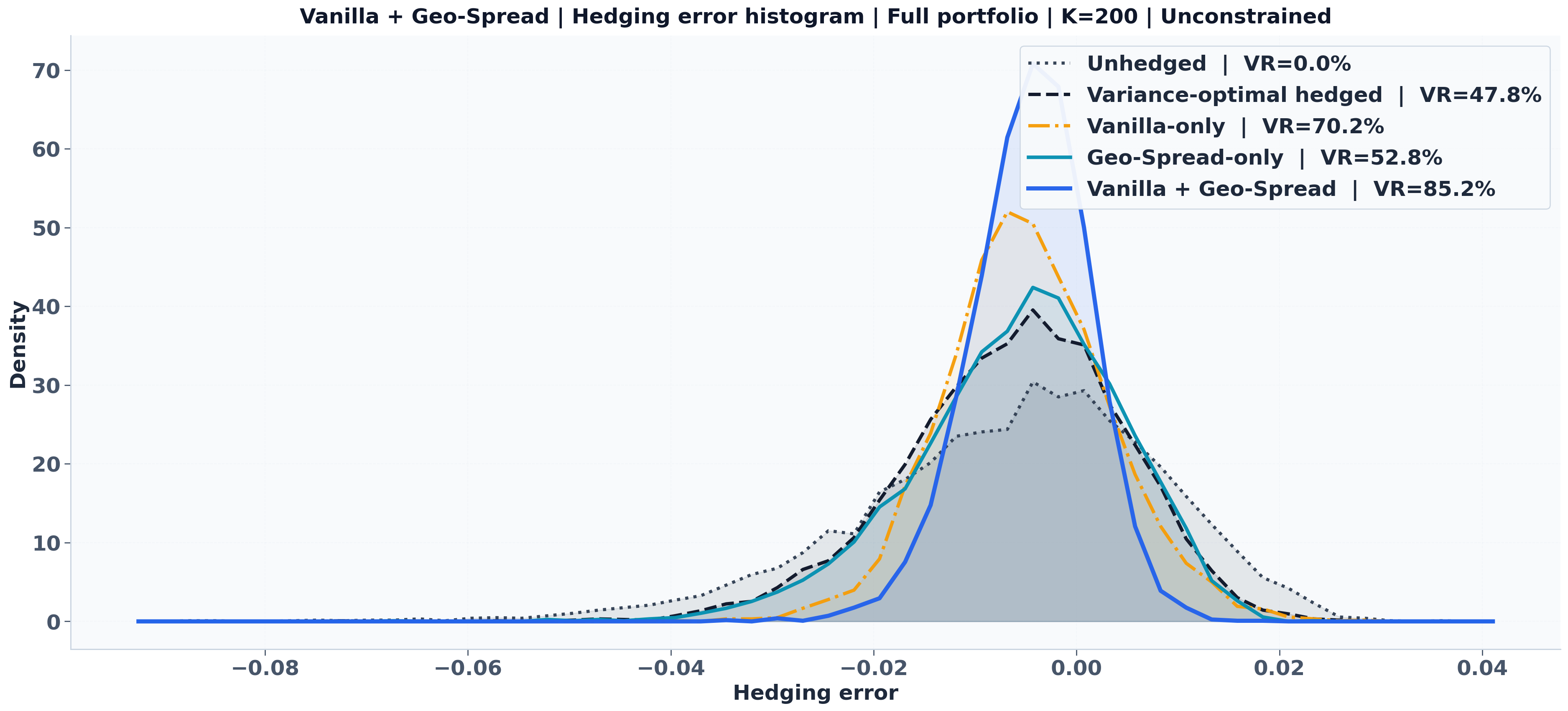}
    \end{subfigure}
    \vspace{0.5em}

    \begin{subfigure}[b]{0.88\textwidth}
        \centering
        \includegraphics[width=\textwidth]{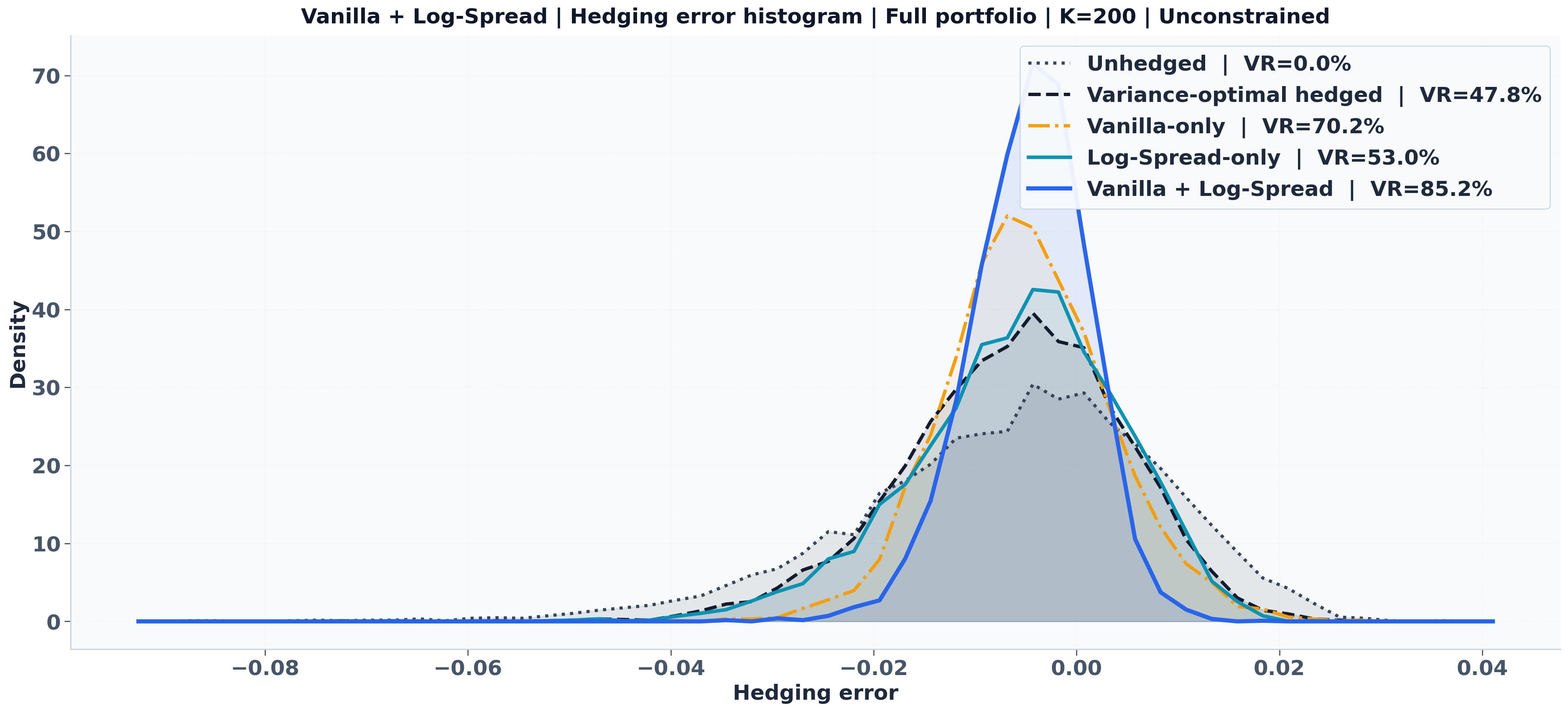}
    \end{subfigure}
\caption{Comparison of the terminal hedging-error distributions generated by the three unconstrained full semi-static portfolios: \emph{vanilla + product} (top), \emph{vanilla + geometric-spread} (middle), and \emph{vanilla + log-spread} (bottom). In each panel, the benchmark dynamic VO hedge is compared with the corresponding static subfamilies and with the combined portfolio. The combined semi-static hedge produces the strongest concentration of the terminal error around zero, indicating the largest variance reduction. }    \label{fig:residuals_covariance}
\end{figure}

\begin{remark}[On the choice of kernel in variance and covariance swaps]
A conceptually important point is that the definition of a realized variance or covariance swap is \emph{not} unique; it depends on the kernel used to aggregate return increments. Historically, early variance swaps were often defined on discretely sampled \emph{simple} returns, whereas market convention later moved toward definitions based on \emph{log}-returns. Since these contracts are typically traded OTC, both the observation grid and the return kernel are contractual specifications rather than mathematical necessities. This distinction is immaterial in a continuous diffusion setting, where simple and log returns agree to first order and both lead to the same quadratic variation in the limit. In the presence of jumps, however, the choice becomes economically and mathematically significant: if $x=\Delta S/S_{-}$ denotes the relative jump size, then the simple-return kernel contributes $x^{2}$, while the log-return kernel contributes $(\log(1+x))^{2}$. Since $(\log(1+x))^{2}>x^{2}$ for negative jumps $x\in(-1,0)$ and $(\log(1+x))^{2}<x^{2}$ for positive jumps $x>0$, the log-return convention places relatively greater weight on downward jumps; see \cite{varianceswapskernel}. The same observation applies, \emph{a fortiori}, to covariance-type contracts. Different kernel choices for the realized co-movement statistic lead to different terminal payoffs, different pathwise decompositions, and therefore different spanning formulas and semi-static hedging strategies. For this reason, the replication identities derived in this section should always be interpreted as being \emph{kernel-specific}: once the realized covariance functional is changed, the associated static instrument family and the resulting variance-optimal projection problem must be modified accordingly. We note that this asymmetry produced on the payoffs presented in Figures \ref{fig:payoff_logspread}, \ref{fig:payoff_geospread} is already visible in the model-independent variance-swap bounds of Hobson and Klimmek \cite{varianceswapskernel}, where the optimal semi-static hedge payoffs are convex and U-shaped around the forward level, but exhibit markedly different left- and right-tail growth under simple-return and log-return conventions.
\end{remark}

\clearpage 
\newpage

\subsection{Dispersion trading}
\label{sec:dispersion-trace}
By simultaneously selling a variance swap on an index and buying variance swaps on the constituents, an investor effectively takes a short position on realized correlation. This type of trade is known as a variance dispersion. A proxy for the implied correlation level sold through a variance dispersion trade is given as the squared ratio of the index variance strike to the average of the constituents’ variance strikes. Note that in order to offset the vega exposure between the two legs, we must adjust the vega notionals of the constituents by a factor equal to the square root of implied correlation. It can be shown that by dynamically trading vega-neutral variance dispersions until maturity, we would almost replicate the payoff of a correlation swap.

The covariance-swap replication results derived above rely fundamentally on the \emph{pairwise} quadratic covariation $\langle Y^{i},Y^{j}\rangle_{t,T}$ between two log-price processes. A natural and practically important extension is to study instruments whose payoff aggregates \emph{all} pairwise co-movements of a basket simultaneously. This is precisely the defining characteristic of a \emph{dispersion trade}: a structured position that isolates the spread between the realized variance of an index and the weighted sum of the realized variances of its constituents.

To ensure exact linear algebraic properties for the quadratic variation, we work within a geometric basket framework. Fix constant index weights $\bm{w}=(w_1,\dots,w_d)^{\top}\in\mathbb{R}_{+}^{d}$ with $\sum_{i=1}^{d}w_i=1$, and define the \emph{geometric index} level $I_t := \prod_{i=1}^d (S_t^i)^{w_i}$. Its log-return process is exactly
\begin{equation}\label{eq:basket-logreturn}
    Y_t^{I} := \log I_t = \sum_{i=1}^{d} w_i Y_t^{i}, \qquad Y_t^{i}:=\log S_t^{i}.
\end{equation}
By the bilinearity of the quadratic variation operator,
\begin{equation}\label{eq:basket-QV}
    \langle Y^{I}\rangle_{t,T}
    = \sum_{i=1}^{d}\sum_{j=1}^{d} w_i w_j \langle Y^{i},Y^{j}\rangle_{t,T}
    = \int_t^T \bm{w}^{\top}\bm{\Sigma}_u\bm{w}\,\mathrm{d}u.
\end{equation}
The \emph{dispersion floating leg} is defined as
\begin{equation}\label{eq:disp-def}
    \mathrm{Disp}_{t,T}
    := \langle Y^{I}\rangle_{t,T} - \sum_{i=1}^{d} w_i^2\,\langle Y^{i}\rangle_{t,T}
    = 2\sum_{1\le i<j\le d} w_i w_j \int_t^T \Sigma_{ij,u}\,\mathrm{d}u.
\end{equation}
Define the symmetric matrix $Q\in\mathbb{S}^{d}$ by
\begin{equation}\label{eq:Q-def2}
    Q_{ij} :=
    \begin{cases}
        w_i w_j, & i\neq j,\\
        0,       & i=j,
    \end{cases}
\end{equation}
so that $\mathrm{Tr}(Q\bm{\Sigma}_t)=2\sum_{i<j}w_iw_j\Sigma_{ij,t}$. Then
\begin{equation}\label{eq:Disp-TrQ}
    \mathrm{Disp}_{t,T} = \int_t^T \mathrm{Tr}(Q\bm{\Sigma}_u)\,\mathrm{d}u.
\end{equation}

Fix a maturity $T>0$ and a valuation time $t\in[0,T)$. Given the geometric index $I$ with log-price process $Y_u^I$ and constituent log-price processes $Y_u^k:=\log S_u^k$ for $u\in[t,T]$, a \emph{dispersion swap} pays at $T$:
\begin{align}\label{eq:disp-swap-payoff}
    H_T^{\mathrm{Disp}}
    &:= \mathrm{Disp}_{t,T} - K_{\mathrm{Disp}},
    \qquad
    K_{\mathrm{Disp}} = \mathbb{E}_t^{\mathbb{Q}}\bigl[\mathrm{Disp}_{t,T}\bigr]
    = \mathbb{E}_t^{\mathbb{Q}}\left[\int_t^T\mathrm{Tr}(Q\bm{\Sigma}_u)\,\mathrm{d}u\right].
\end{align}

\subsubsection*{Step 1: Decomposition of the Dispersion Floating Leg}

Using the bilinearity of quadratic variation and the definition \eqref{eq:disp-def}, we write
\begin{align}\label{eq:disp-QV-split}
    \mathrm{Disp}_{t,T}
    &= \langle Y^{I}\rangle_{t,T}
       - \sum_{i=1}^{d} w_i^2\,\langle Y^{i}\rangle_{t,T}.
\end{align}
Each realized variance $\langle Y^k\rangle_{t,T}$ is replicated by a variance swap (or equivalently a log-contract strip) in the standard Carr--Madan sense. Defining the convexity error function $F(z,z_0):= \frac{z-z_0}{z_0}-\log\bigl(\frac{z}{z_0}\bigr)$ as in \eqref{eq:log-kernel-def}, the pathwise representation of each individual log-return increment is
\begin{align}\label{eq:logret-indiv}
    \langle Y^{k}\rangle_{t,T}
    &= -2\log\Bigl(\frac{S_T^k}{S_t^k}\Bigr)
       + 2\,\frac{S_T^k - S_t^k}{S_t^k}
       - 2\int_t^T\frac{\mathrm{d}S_u^k}{S_u^k}
       + 2\,\frac{\e^{r(T-t)}}{T-t}\left[
            \int_0^{S_t^k}\frac{P^k(K)}{K^2}\,\mathrm{d}K
            + \int_{S_t^k}^{\infty}\frac{C^k(K)}{K^2}\,\mathrm{d}K
         \right],
\end{align}
where $C^k(K)$ and $P^k(K)$ are $t$-priced call and put options on $S^k$ with strike $K$ and maturity $T$. Analogously, the index realized variance $\langle Y^I\rangle_{t,T}$ is spanned by a log-contract strip on the index $I$.

Consequently, the dispersion floating leg \eqref{eq:disp-QV-split} decomposes as
\begin{align}\label{eq:disp-static-dynamic}
    \mathrm{Disp}_{t,T}
    &= \underbrace{\frac{2\e^{r(T-t)}}{T-t}\left[
            \int_0^{I_t}\frac{P^{I}(K)}{K^2}\,\mathrm{d}K
            + \int_{I_t}^{\infty}\frac{C^{I}(K)}{K^2}\,\mathrm{d}K
         \right]}_{\text{index log-contract (short)}}
       \nonumber\\
    &\quad
       - \sum_{i=1}^{d}w_i^2\,\frac{2\e^{r(T-t)}}{T-t}\left[
            \int_0^{S_t^i}\frac{P^{i}(K)}{K^2}\,\mathrm{d}K
            + \int_{S_t^i}^{\infty}\frac{C^{i}(K)}{K^2}\,\mathrm{d}K
         \right]
       \nonumber\\
    &\quad + \text{(dynamic futures rolling strategies)},
\end{align}
where the dynamic terms consist of rolling short positions in $-2\log(S^k_u/S^k_t)$ for each asset $k\in\{I,1,\dots,d\}$. This is the \emph{variance-swap replication} of the dispersion floating leg, extending the classical Carr--Madan result \eqref{eq:log-kernel-def} to the index--component framework.

\subsubsection*{Step 2: Pairwise Covariance Decomposition via Polarization}

To isolate the cross-covariance exposure directly, we apply the polarization identity \eqref{eq:polarizationquadraticvariation} pairwise:
\begin{align}\label{eq:disp-polarization}
    \mathrm{Disp}_{t,T}
    &= 2\sum_{1\le i<j\le d}
       w_i w_j\,\langle Y^i,Y^j\rangle_{t,T}
    \nonumber\\
    &= \sum_{1\le i<j\le d}
       w_i w_j
       \Bigl[
           \langle Y^i\rangle_{t,T}
           + \langle Y^j\rangle_{t,T}
           - \langle Y^i - Y^j\rangle_{t,T}
       \Bigr].
\end{align}
Each pairwise covariance term $\langle Y^i,Y^j\rangle_{t,T}$ can therefore be replicated using the semi-static covariance-swap decomposition of Example~\ref{cov-swapreplicationQuanto}. Equation \eqref{eq:disp-polarization} shows that a dispersion swap is theoretically equivalent to a portfolio of $\binom{d}{2}$ pairwise covariance swaps, each with weight $w_iw_j$, or equivalently to holding long variance swaps on each constituent and short variance swaps on their pairwise log-spreads $Y^i-Y^j=\log(S^i/S^j)$.

For a delta-hedged dispersion trade consisting of long $\alpha_i$ units
of the $i$-th component variance swap and short $1$ unit of the index
variance swap, with $\alpha_i=N_i/N_I$, the instantaneous gamma P\&L
over $(t,t+\D t)$ is
\begin{align}\label{eq:disp-gamma-PL}
    \D\Pi_{t}^{\Gamma}
    &= \sum_{i=1}^{d}\frac{1}{2}\alpha_i\Gamma_i S_{i,t}^2
       \left[\Bigl(\frac{\D S_t^i}{S_t^i}\Bigr)^2 - \sigma_{i,t}^2\,\D t\right]
       - \frac{1}{2}\Gamma_I I_t^2
       \left[\Bigl(\frac{\D I_t}{I_t}\Bigr)^2 - \sigma_{I,t}^2\,\D t\right],
\end{align}
where $\Gamma_i$ (resp.\ $\Gamma_I$) denotes the gamma of the $i$-th
component (resp.\ index) variance swap.  Expanding
$\bigl(\D I_t/I_t\bigr)^2 = \bigl(\sum_k w_k\,\D S_t^k/S_t^k\bigr)^2$ and
using $\sigma_{I,t}^2\,\D t = \bm{w}^{\top}\bm{\Sigma}_t\bm{w}\,\D t$:
\begin{align}\label{eq:disp-gamma-PL-expanded}
    \D\Pi_{t}^{\Gamma}
    &= \sum_{i=1}^{d}\frac{1}{2}S_{i,t}^2
       \left[\Bigl(\frac{\D S_t^i}{S_t^i}\Bigr)^2 - \sigma_{i,t}^2\,\D t\right]
       \bigl(\alpha_i\Gamma_i - w_i^2\Gamma_I\bigr)
    \nonumber\\
    &\quad + \frac{1}{2}\Gamma_I I_t^2\,\mathrm{Tr}(Q\bm{\Sigma}_t)
       \bigl(\hat\rho_t - \rho_t\bigr)\,\D t,
\end{align}
where $\hat\rho_t := \D S_t^i\,\D S_t^j/(S_t^iS_t^j\sigma_{i,t}\sigma_{j,t}\,\D t)$
is the instantaneous realized correlation (assumed equal across all
pairs), and $\rho_t$ is the implied correlation.  Substituting the
variance-swap gamma formula $\Gamma_k = 2/(T S_{k,t}^2)$ gives
\begin{align}\label{eq:disp-gamma-PL-VS}
    \D\Pi_{t}^{\Gamma}
    &= \frac{1}{T}\sum_{i=1}^{d}
       \left[\Bigl(\frac{\D S_t^i}{S_t^i}\Bigr)^2 - \sigma_{i,t}^2\,\D t\right]
       (\alpha_i - w_i^2)
    + \frac{\mathrm{Tr}(Q\bm{\Sigma}_t)}{T}(\rho_t - \hat\rho_t)\,\D t.
\end{align}
Integrating \eqref{eq:disp-gamma-PL-VS} over $[t,T]$ and defining
$\beta^V := (T-t)^{-1}\int_t^T\mathrm{Tr}(Q\,\bm{\sigma}_u\bm{\sigma}_u^{\top})\,\D u$, the integrated gamma P\&L of the
dispersion trade is
\begin{align}\label{eq:disp-gamma-PL-integrated}
    \Pi^{\Gamma}_{\mathrm{Disp}}
    &= \underbrace{\frac{1}{T}\sum_{i=1}^{d}(\alpha_i - w_i^2)
       \int_t^T\left[\Bigl(\frac{\D S_u^i}{S_u^i}\Bigr)^2
                     - \sigma_{i,u}^2\,\D u\right]}_{\text{idiosyncratic residual}}
    + \underbrace{\int_t^T\frac{\mathrm{Tr}(Q\bm{\Sigma}_u)}{T}
       (\rho_u - \hat\rho_u)\,\D u}_{\text{pure correlation P\&L}}.
\end{align}
Equation \eqref{eq:disp-gamma-PL-integrated} is the key structural
decomposition: it shows that the gamma P\&L of a dispersion trade
consists of a pure correlation exposure (the second term) plus an
idiosyncratic residual driven by the mismatch $\alpha_i - w_i^2$.

In the stochastic covariance setting of Section~\ref{sec:MarketSetup},
the instantaneous volatility $\sigma_{i,t}$ of each constituent follows
a diffusion with vol-of-vol coefficient $\xi_i>0$:
$\D\sigma_{i,t} = \mu_{\sigma_i,t}\,\D t + \xi_i\sigma_{i,t}\,\D W_t^{\sigma_i}$,
with $\D\langle W^i, W^{\sigma_i}\rangle_t = \rho_i\,\D t$.
Using the full P\&L expansion \eqref{eq:full-covswap-expansion-product-simplified}
adapted to the present multi-asset setting, the total P\&L of the
delta-hedged dispersion trade decomposes as
\begin{align}\label{eq:disp-total-PL}
    \D\Pi_{\mathrm{Disp},t}
    &= \D\Pi_{t}^{\Gamma}
    \nonumber\\
    &\quad + \sum_{i=1}^{d}\alpha_i
       \Bigl[
           \mathrm{Vega}_i\,\D\sigma_{i,t}
           + \tfrac{1}{2}\,\mathrm{Volga}_i\,\xi_i^2\sigma_{i,t}^2\,\D t
           + \mathrm{Vanna}_i\,\sigma_{i,t}S_{i,t}\rho_i\xi_i\,\D t
       \Bigr]
    \nonumber\\
    &\quad - \Bigl[
           \mathrm{Vega}_I\,\D\sigma_{I,t}
           + \tfrac{1}{2}\,\mathrm{Volga}_I\,\xi_I^2\sigma_{I,t}^2\,\D t
           + \mathrm{Vanna}_I\,\sigma_{I,t}I_t\rho_I\xi_I\,\D t
       \Bigr].
\end{align}
Specializing to variance swaps, for which $\mathrm{Vanna}\equiv 0$,
$\mathrm{Vega}_{\sigma,k} = 2\sigma_{k,t}\tau/T$, and
$\mathrm{Volga}_k = 2\tau/T$ (Appendix~A of \cite{JacquierSlaoui}),
equation \eqref{eq:disp-total-PL} becomes
\begin{align}\label{eq:disp-total-PL-VS}
    \D\Pi_{\mathrm{Disp},t}
    &= \D\Pi_{t}^{\Gamma}
       + \frac{2\tau}{T}
         \left[
             \sum_{i=1}^{d}\alpha_i\sigma_{i,t}^{\mathrm{imp}}\,\D\sigma_{i,t}
             - \sigma_{I,t}^{\mathrm{imp}}\,\D\sigma_{I,t}
         \right]
       + \frac{\tau}{T}
         \left[
             \sum_{i=1}^{d}\alpha_i\xi_i^2\sigma_{i,t}^2
             - \xi_I^2\sigma_{I,t}^2
         \right]\D t.
\end{align}
The first bracket in \eqref{eq:disp-total-PL-VS} is the \emph{vega
P\&L}; the second bracket is the \emph{volga (vomma) P\&L}, which
measures the net vol-of-vol exposure of the dispersion book.  In
matrix notation, defining the vol-of-vol diagonal matrix
$\bm{\Xi}_t:=\mathrm{diag}(\xi_1\sigma_{1,t},\dots,\xi_d\sigma_{d,t})$
and the weight vector $\bm{\alpha}:=(\alpha_1,\dots,\alpha_d)^{\top}$,
the full integrated P\&L of the dispersion trade is
\begin{align}\label{eq:disp-total-integrated}
    \Pi_{\mathrm{Disp}}
    &= \underbrace{\int_t^T\frac{\mathrm{Tr}(Q\bm{\Sigma}_u)}{T}
       (\rho_u - \hat\rho_u)\,\D u}_{\text{pure correlation exposure}}
    + \underbrace{\frac{1}{T}\sum_{i=1}^{d}(\alpha_i - w_i^2)
       \int_t^T\left[
                   \Bigl(\frac{\D S_u^i}{S_u^i}\Bigr)^2
                   - \sigma_{i,u}^2\,\D u
               \right]}_{\text{idiosyncratic residual}}
    \nonumber\\[4pt]
    &\quad + \underbrace{\frac{2\tau}{T}\int_t^T
         \left[
             \bm{\alpha}^{\top}(\bm{\sigma}_u^{\mathrm{imp}}\odot\D\bm{\sigma}_u)
             - \sigma_{I,u}^{\mathrm{imp}}\,\D\sigma_{I,u}
         \right]}_{\text{vega P\&L}}
    + \underbrace{\frac{\tau}{T}\int_t^T
         \left[
             \bm{\alpha}^{\top}\bm{\Xi}_u^2\mathbf{1}
             - \xi_I^2\sigma_{I,u}^2
         \right]\D u}_{\text{volga P\&L}}.
\end{align}

\subsubsection*{Step 5: Three Risk-Flat Weighting Strategies}

The choice of $\bm{\alpha}=(\alpha_1,\dots,\alpha_d)^{\top}$ determines
the risk profile of the dispersion, isolating which terms in
\eqref{eq:disp-total-integrated} survive.

\paragraph{(i) Gamma-Flat Strategy.}
Set $\alpha_i = w_i^2$ for all $i=1,\dots,d$.  Then the idiosyncratic
residual in \eqref{eq:disp-total-integrated} vanishes identically, and
the integrated gamma P\&L reduces to a \emph{pure correlation P\&L}:
\begin{align}\label{eq:gamma-flat}
    \Pi^{\Gamma}_{\mathrm{Disp}}\big|_{\alpha_i = w_i^2}
    &= \int_t^T\frac{\mathrm{Tr}(Q\bm{\Sigma}_u)}{T}(\rho_u - \hat\rho_u)\,\D u
     = \beta^V\,(\bar\rho - \bar{\hat\rho}),
\end{align}
where $\beta^V := (T-t)^{-1}\mathrm{Tr}\bigl(Q\int_t^T\bm{\sigma}_u\bm{\sigma}_u^{\top}\D u\bigr)$
and $\bar\rho$, $\bar{\hat\rho}$ denote $\beta_u$-weighted time
averages of implied and realized correlation, respectively.  The
remaining P\&L consists solely of the volga term, explaining the
empirically observed spread between the implied correlation of a
dispersion trade and the fair strike of a pure correlation swap
entirely through the vol-of-vol parameter $\xi_i$ (cf.\ \cite{JacquierSlaoui}).

\paragraph{(ii) Vega-Flat Strategy.}
The vega notional-matching condition
$N_i\Upsilon_{\sigma,i} = N_I\Upsilon_{\sigma,I}w_i$ (with
$\Upsilon_{\sigma,k} = 2\sigma_{k,t}\tau/T$ for variance swaps) gives
\begin{align}\label{eq:alpha-vega-flat}
    \alpha_i = \frac{w_i\sigma_{I,t}}{\sigma_{i,t}}, \qquad i=1,\dots,d.
\end{align}
Under \eqref{eq:alpha-vega-flat} the vega P\&L in
\eqref{eq:disp-total-integrated} vanishes and the residual gamma
approximation error is
\begin{align}\label{eq:vega-flat-error}
    \alpha_i - w_i^2
    = w_i\frac{\sigma_{I,t}}{\sigma_{i,t}} - w_i^2
    = w_i^2\left(\frac{\sigma_{I,t}}{w_i\sigma_{i,t}} - 1\right),
\end{align}
which is negligible whenever $\sigma_{I,t}\approx\sum_j w_j\sigma_{j,t}$,
i.e.\ when the index volatility is close to the weighted average of
constituent volatilities.  This strategy represents the no-arbitrage
boundary condition for the dispersion book and it satisfies
$\sum_i\alpha_i\sigma_{i,t}^2/\sigma_{I,t}^2 = 1$.

\paragraph{(iii) Theta-Flat (Gamma-Neutral) Strategy.}
Setting $\Pi^{\Gamma}_{\mathrm{Disp}}\equiv 0$ requires
\begin{align}\label{eq:alpha-theta-flat}
    \alpha_i
    = \frac{({\D I_t}/{I_t})^2 - \sigma_{I,t}^2\,\D t}
           {\sum_{k=1}^{d}\bigl[({\D S_t^k}/{S_t^k})^2
                                 - \sigma_{k,t}^2\,\D t\bigr]},
    \qquad i=1,\dots,d.
\end{align}
Under \eqref{eq:alpha-theta-flat} the gamma P\&L is zero and the
entire P\&L is driven by the volatility greeks in
\eqref{eq:disp-total-PL-VS}, making this a pure vol-of-vol carry
trade.

\section{Fourier Representation of the Multivariate GKW Decomposition}
\label{sec:fourierrepresentation}

The preliminary numerical results of Section~\ref{sec:replicationtheory} already illustrate the central qualitative message of our framework: once the auxiliary instruments are chosen in a manner consistent with the spanning formulas, the residual hedging error can be reduced substantially relative to the purely dynamic benchmark, and the resulting optimal static portfolios exhibit the structures predicted by the replication identities. Sections~\ref{sec:VO-framework}--\ref{sec:replicationtheory} show that the semi-static variance-optimal hedging problem reduces, in general, to two fundamental tasks: first, the computation of the dynamic Galtchouk--Kunita--Watanabe integrands $\bm{\theta}^i$ associated with the target claim and the auxiliary claims; second, the evaluation of the quadratic quantities $A$, $B$, and $C$ in \eqref{eq:maincomponenetsstatic}, which determine the optimal static allocation through the finite-dimensional outer problem. In the present section, we develop a Fourier-based representation that makes both tasks analytically tractable in the affine and quadratic stochastic covariance models considered later. Of course, one could also estimate the relevant conditional expectations and transition laws of $\bm{S}$, as well as the value processes $H^0,\dots,H^n$, by Monte Carlo simulation and then approximate the GKW decomposition by sequential backward regression; however, such an approach is, in principle, considerably more expensive, especially in higher dimensions \cite{Schweizer2008}.

The main idea is a multivariate extension of the Fourier--GKW methodology of \cite{Semi_static_Fourier}. We represent the relevant payoff class through an inverse Laplace--Fourier transform and then exploit both the linearity of conditional expectation and the linearity of the GKW projection. This yields a transfer principle: the GKW decomposition of a sufficiently regular claim may be obtained by integrating the GKW decompositions of the exponential basis payoffs $\e^{\bm{u}^\top \bm{Y}_T}$. In the model classes studied below, these exponential claims are analytically tractable because their conditional expectations, covariations, and hence the quantities entering the semi-static hedging problem admit explicit transform representations.

\subsection{Probabilistic setup and complex domains}

Let $\bm{S}=(S^1,\dots,S^d)^\top$ be the discounted square-integrable $\mathbb{Q}$-martingale price vector and let

\begin{equation*}
    \bm{Y}_t:=\log \bm{S}_t
\end{equation*}
denote the log-price process, where the logarithm is taken componentwise. We assume that the pair $(\bm{Y}_t,\bm{V}_t)_{t\in[0,T]}$ is Markov, where $\bm{V}$ denotes the relevant volatility or covariance state variable. In particular, $\bm{V}_t=\bm{X}_t$ in the quadratic Gaussian specification and $\bm{V}_t=\bm{\Sigma}_t$ in the affine/Wishart specification introduced in Section~\ref{sec:MarketSetup}. Fourier--Laplace methods require exponential moments in a complex strip. Fix a damping vector $\bm{R}\in\mathbb{R}^d$ such that

\begin{equation*}
    \mathbb{E}\big[\e^{2\bm{R}^\top \bm{Y}_T}\big]<\infty.
\end{equation*}
Define the vertical strip

\begin{equation*}
    \mathcal{S}(\bm{R})
    :=
    \bigl\{\bm{u}\in\mathbb{C}^d:\Re(\bm{u})=\bm{R}\bigr\}.
\end{equation*}
For $\bm{u}\in\mathcal{S}(\bm{R})$, set
\begin{equation}\label{eq:cond-mgf}
    M_t(\bm{u})
    :=
    \mathbb{E}\big[\e^{\bm{u}^\top \bm{Y}_T}\mid\mathcal{F}_t\big],
    \qquad t\in[0,T].
\end{equation}
Then $M(\bm{u})=(M_t(\bm{u}))_{t\in[0,T]}$ is a complex-valued square-integrable martingale for each
$\bm{u}\in\mathcal{S}(\bm{R})$.

\subsection{Integral representation of payoffs and prices}

Let $\eta=h(\bm{Y}_T)$ be a European payoff with $h:\mathbb{R}^d\to\mathbb{R}$. Assume that the bilateral
multivariate Laplace transform $\hat h$ exists on $\mathcal{S}(\bm{R})$ and is absolutely integrable along the
strip. By the multidimensional inverse Laplace transform,
\begin{equation}
\label{eq:LaplaceMVpayoffs}
    h(\bm{y})
    =
    \frac{1}{(2\pi i)^d}
    \int_{\bm{R}-i\infty}^{\bm{R}+i\infty}
    \e^{\bm{u}^\top \bm{y}}\,
    \hat h(\bm{u})\,\D\bm{u}
    =
    \int_{\mathcal{S}(\bm{R})} \e^{\bm{u}^\top \bm{y}}\,\zeta(\D\bm{u}),
\end{equation}
where $\zeta(\D\bm{u}):=(2\pi i)^{-d}\hat h(\bm{u})\,\D\bm{u}$ is a complex measure on $\mathcal{S}(\bm{R})$.

Let $H_t:=\mathbb{E}[\eta\mid\mathcal{F}_t]$ be the discounted value process. Under a mild measurability
assumption on $(t,\omega,\bm{u})\mapsto M_t(\bm{u})(\omega)$, the linearity of conditional expectation transfers
\eqref{eq:LaplaceMVpayoffs} from payoff to price.

\begin{proposition}[Fourier pricing identity]
\label{propositionfourierpricing}
Assume that $\bm{u}\mapsto M_t(\bm{u})$ is $\mathcal{B}(\mathcal{S}(\bm{R}))$-measurable for each $t$, and that

\begin{equation*}
    \int_{\mathcal{S}(\bm{R})}
    \mathbb{E}\big[|M_T(\bm{u})|^2\big]\,
    |\zeta|(\D\bm{u})<\infty.
\end{equation*}
Then
\begin{equation}
\label{eq:fourierprice}
    H_t
    =
    \int_{\mathcal{S}(\bm{R})} M_t(\bm{u})\,\zeta(\D\bm{u}),
    \qquad t\in[0,T],\quad \mathbb{Q}\text{-a.s.}
\end{equation}
\end{proposition}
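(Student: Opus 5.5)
The plan is to derive \eqref{eq:fourierprice} from the payoff representation \eqref{eq:LaplaceMVpayoffs} by conditioning inside the strike-type integral, i.e.\ by a conditional Fubini argument. First, evaluate \eqref{eq:LaplaceMVpayoffs} at $\bm y=\bm Y_T$ to obtain $\eta=\int_{\mathcal S(\bm R)}\e^{\bm u^\top\bm Y_T}\,\zeta(\D\bm u)$ pointwise in $\omega$. Since $M_T(\bm u)=\e^{\bm u^\top\bm Y_T}$ by definition \eqref{eq:cond-mgf} ($\e^{\bm u^\top \bm Y_T}$ is $\mathcal F_T$-measurable), this reads $\eta=\int M_T(\bm u)\,\zeta(\D\bm u)$.

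Next, establish joint integrability. Note that $|M_T(\bm u)|=\e^{\bm R^\top\bm Y_T}$ does not depend on $\Im\bm u$. The assumption $\int\mathbb E[|M_T(\bm u)|^2]\,|\zeta|(\D\bm u)<\infty$ therefore gives both $\mathbb E[\e^{2\bm R^\top \bm Y_T}]<\infty$ and $|\zeta|(\mathcal S(\bm R))<\infty$; this also follows from the absolute integrability of $\hat h$ along the strip. Then, by Jensen or Cauchy--Schwarz,
\[
\mathbb E\Big[\int |M_T(\bm u)|\,|\zeta|(\D\bm u)\Big]\le \int \mathbb E[|M_T(\bm u)|^2]^{1/2}\,|\zeta|(\D\bm u)<\infty ,
\]
using $\int a^{1/2}\,\D|\zeta|\le (|\zeta|(\mathcal S)\int a\,\D|\zeta|)^{1/2}$. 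Hence $(\omega,\bm u)\mapsto \e^{\bm u^\top\bm Y_T(\omega)}$ is $\mathbb Q\otimes|\zeta|$-integrable. It is jointly measurable because it is continuous in $\bm u$ and measurable in $\omega$.

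For the main step, fix $t$ and take $G\in\mathcal F_t$. By Fubini, which we apply to the real and imaginary parts of $\zeta$ via its Jordan decomposition,
\[
\mathbb E[\mathbbm 1_G\,\eta]=\int \mathbb E[\mathbbm 1_G M_T(\bm u)]\,\zeta(\D\bm u)=\int \mathbb E[\mathbbm 1_G M_t(\bm u)]\,\zeta(\D\bm u)=\mathbb E\Big[\mathbbm 1_G\int M_t(\bm u)\,\zeta(\D\bm u)\Big].
\]
The last equality is again Fubini. It requires two facts. The first is $\mathbb E\int|M_t(\bm u)|\,\D|\zeta|\le \mathbb E\int |M_T(\bm u)|\,\D|\zeta|<\infty$, which holds by conditional Jensen. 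The second is joint measurability of $(\omega,\bm u)\mapsto M_t(\bm u)(\omega)$. Provided the random integral $\int M_t\,\D\zeta$ is $\mathcal F_t$-measurable, the identity characterises $H_t$, and \eqref{eq:fourierprice} follows a.s. for each $t$.

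The delicate point is exactly this measurability. $M_t(\bm u)$ is defined for each $\bm u$ only up to a null set, so the hypothesis of $\mathcal B(\mathcal S(\bm R))$-measurability has to be read as the existence of a jointly $\mathcal F_t\otimes\mathcal B$-measurable version. I would secure such a version either by taking the regular conditional distribution of $\bm Y_T$ given $\mathcal F_t$ and setting $M_t(\bm u)=\int \e^{\bm u^\top\bm y}\,\mathbb Q_t(\D\bm y)$, or, in the Markov setting, by using $M_t(\bm u)=f(t,\bm Y_t,\bm V_t,\bm u)$ for a jointly measurable $f$. With that version, $\int M_t\,\D\zeta$ is $\mathcal F_t$-measurable by Fubini--Tonelli, and the argument above closes.
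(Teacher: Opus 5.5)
Your proposal is correct and is essentially the argument the paper intends. The paper gives no separate proof, only the remark that under a measurability assumption on $(t,\omega,\bm u)\mapsto M_t(\bm u)(\omega)$ the linearity of conditional expectation (a conditional Fubini interchange, as used again in the proof of Theorem~\ref{thm:FourierGKW}) transfers \eqref{eq:LaplaceMVpayoffs} from payoff to price; your version makes this rigorous, and your reading of the hypothesis as joint $\mathcal F_t\otimes\mathcal B(\mathcal S(\bm R))$-measurability of a version, secured via a regular conditional distribution of $\bm Y_T$, correctly fills in the detail the paper leaves implicit.
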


The representation \eqref{eq:fourierprice} generalizes classical transform pricing, for example Carr--Madan-type
formulas, to the multivariate setting. Numerically, the high-dimensional integration can be treated with
dimension-adaptive quadrature, for example sparse grids, when $d$ is moderate; see, e.g., \cite{Bayer2023}.

Proposition~\ref{propositionfourierpricing} becomes powerful for hedging once we show that the GKW
projection commutes with Fourier integration. We first recall the GKW decomposition of the basis
martingales $M(\bm{u})$. For each $\bm{u}\in\mathcal{S}(\bm{R})$, let
\begin{equation}\label{eq:GKW-basis}
    M_t(\bm{u})
    =
    M_0(\bm{u})
    +\int_0^t \bm{\theta}_s(\bm{u})^\top\,\D\bm{S}_s
    +L_t(\bm{u}),
\end{equation}
be the multivariate GKW decomposition of $M(\bm{u})$ with respect to $\bm{S}$, where
$L(\bm{u})\perp \bm{S}$. Formally, if we can integrate \eqref{eq:GKW-basis} over $\bm{u}$ and interchange the order of integration, then \eqref{eq:fourierprice} yields the GKW decomposition of $H$ with

\begin{equation*}
    \bm{\theta}_t
    =
    \int_{\mathcal{S}(\bm{R})}\bm{\theta}_t(\bm{u})\,\zeta(\D\bm{u}),
    \qquad
    L_t
    =
    \int_{\mathcal{S}(\bm{R})}L_t(\bm{u})\,\zeta(\D\bm{u}).
\end{equation*}

The theorem below provides a rigorous statement.

\begin{theorem}[Fourier Representation of the Multivariate GKW Decomposition]
\label{thm:FourierGKW}
Let $\bm{S}=(S^1,\dots,S^d)^\top$ be a square-integrable $\mathbb{Q}$-martingale. Let $\eta=h(\bm{Y}_T)$ admit the Laplace representation \eqref{eq:LaplaceMVpayoffs} with complex measure $\zeta$ satisfying
\begin{equation*}
    |\zeta|(\mathcal{S}(\bm{R}))<\infty.
\end{equation*}
Define the predictable increasing scalar process
\begin{equation}
\label{eq:scalar-control-measure}
    A_t
    :=
    t+\sum_{k=1}^d \llangle S^k,S^k\rrangle_t,
    \qquad t\in[0,T].
\end{equation}
Then there exists a predictable Hermitian positive-semidefinite matrix-valued process $\bm{C}=(\bm{C}_t)_{t\in[0,T]}$ such that
\begin{equation}
\label{eq:matrix-density-control}
    \mathrm{d}\llangle \bm{S},\bm{S}\rrangle_t
    =
    \bm{C}_t\,\mathrm{d}A_t
\end{equation}
componentwise. Assume the following integrability and structural conditions hold:
\begin{enumerate}
    \item[i)] For each $\bm{u}\in\mathcal{S}(\bm{R})$, the martingale $M(\bm{u})$ defined in \eqref{eq:cond-mgf} belongs to $\mathcal{H}^2_{\mathbb{C}}$ and admits the GKW decomposition
    \begin{equation}
    \label{eq:GKW-basis-thm1}
        M_t(\bm{u})
        =
        M_0(\bm{u})
        +\int_0^t \bm{\theta}_s(\bm{u})^\top\,\mathrm{d}\bm{S}_s
        +L_t(\bm{u}),
        \qquad t\in[0,T],
    \end{equation}
    where $L(\bm{u})$ is a complex-valued martingale strongly orthogonal to $\bm{S}$, i.e., $L(\bm{u})\perp \bm{S}$.

    \item[ii)] There exist versions of the fields
    $$
        (t,\omega,\bm{u})\mapsto \bm{\theta}_t(\bm{u})(\omega),
        \qquad
        (t,\omega,\bm{u})\mapsto L_t(\bm{u})(\omega),
    $$
    which are $\mathcal{P}\otimes\mathcal{B}(\mathcal{S}(\bm{R}))$-measurable. Let $\bm{\theta}^*$ denote the conjugate transpose. We assume the uniform bounds
    \begin{equation}
    \label{eq:theta-integrability-thm1}
        \int_{\mathcal{S}(\bm{R})}
        \mathbb{E}\left[
            \int_0^T
            \bm{\theta}_t(\bm{u})^*
            \bm{C}_t
            \bm{\theta}_t(\bm{u})\,\mathrm{d}A_t
        \right]
        |\zeta|(\mathrm{d}\bm{u})
        <\infty,
    \end{equation}
    and
    \begin{equation}
    \label{eq:L-integrability-thm1}
        \int_{\mathcal{S}(\bm{R})}
        \sup_{t\le T}\mathbb{E}\bigl[|L_t(\bm{u})|^2\bigr]\,
        |\zeta|(\mathrm{d}\bm{u})
        <\infty.
    \end{equation}
\end{enumerate}
Then the claim value process $H_t=\mathbb{E}[\eta\mid\mathcal{F}_t]$ admits the GKW decomposition
\begin{equation}
\label{eq:representation1-thm1}
    H_t
    =
    H_0
    +\int_0^t \bm{\theta}_s^\top\,\mathrm{d}\bm{S}_s
    +L_t,
    \qquad t\in[0,T],
\end{equation}
where the dynamic hedge and the orthogonal residual are given by
\begin{equation}
\label{eq:representation2-thm1}
    \bm{\theta}_t
    =
    \int_{\mathcal{S}(\bm{R})}\bm{\theta}_t(\bm{u})\,\zeta(\mathrm{d}\bm{u}),
    \qquad
    L_t
    =
    \int_{\mathcal{S}(\bm{R})}L_t(\bm{u})\,\zeta(\mathrm{d}\bm{u}),
\end{equation}
and the stochastic Fubini interchange holds:
\begin{equation}
\label{eq:Fubiniinterchange-thm1}
    \int_0^t
    \Bigg(
        \int_{\mathcal{S}(\bm{R})}\bm{\theta}_s(\bm{u})\,\zeta(\mathrm{d}\bm{u})
    \Bigg)^\top
    \mathrm{d}\bm{S}_s
    =
    \int_{\mathcal{S}(\bm{R})}
    \Bigg(
        \int_0^t \bm{\theta}_s(\bm{u})^\top\,\mathrm{d}\bm{S}_s
    \Bigg)\zeta(\mathrm{d}\bm{u}),
    \qquad \mathbb{Q}\text{-a.s.}
\end{equation}
\end{theorem}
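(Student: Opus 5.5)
We prove the theorem by integrating the basis decompositions \eqref{eq:GKW-basis-thm1} against $\zeta$ and then checking that each integrated piece has the required property. The existence of $\bm{C}$ in \eqref{eq:matrix-density-control} is routine. By Kunita--Watanabe, each $\llangle S^k,S^l\rrangle$ is absolutely continuous with respect to $A$. The Radon--Nikodym densities can be chosen predictably, and since $\llangle\bm S,\bm S\rrangle$ is a positive semidefinite matrix-valued increasing process, we may modify the version on an $\mathbb{Q}\otimes\D A$-null set so that it is positive semidefinite everywhere.

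\textbf{Step 1: the integrated integrand is admissible.} Because the field $\bm\theta$ is $\mathcal P\otimes\mathcal B$-measurable and $|\zeta|$ is finite, $\bm\theta_t:=\int\bm\theta_t(\bm u)\,\zeta(\D\bm u)$ is predictable wherever it is defined. For a positive semidefinite $\bm C_t$ the map $x\mapsto (x^*\bm C_tx)^{1/2}$ is a seminorm. Jensen (or Cauchy--Schwarz) with respect to the finite measure $|\zeta|$ therefore gives
\[
\bm\theta_t^*\bm C_t\bm\theta_t\le |\zeta|(\mathcal S(\bm R))\int \bm\theta_t(\bm u)^*\bm C_t\bm\theta_t(\bm u)\,|\zeta|(\D\bm u).
\]
Integrating in $\D A\,\D\mathbb{Q}$ and using \eqref{eq:theta-integrability-thm1} together with Tonelli shows $\bm\theta\in L^2(\bm S)$, after splitting into real and imaginary parts. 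The same bound also shows that $\bm\theta_t$ is well defined $\mathbb{Q}\otimes\D A$-a.e.

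\textbf{Step 2: stochastic Fubini \eqref{eq:Fubiniinterchange-thm1}.} I will invoke a stochastic Fubini theorem for a finite measure parameter space and a square-integrable martingale integrator, in the style of Protter's Theorem IV.65 or Veraar. Its hypothesis is exactly that $\int \|\bm\theta(\bm u)\|_{L^2(\bm S)}\,|\zeta|(\D\bm u)<\infty$, and this follows from \eqref{eq:theta-integrability-thm1} via Cauchy--Schwarz and the finiteness of $|\zeta|$. If a self-contained argument is preferred, the plan is to check the identity for simple fields $\bm\theta(\bm u)=\sum_k \mathbbm 1_{B_k}(\bm u)\,\bm\vartheta^k$, where it is trivial. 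One then extends by density in the Bochner space $L^1(|\zeta|;L^2(\bm S))$, using the Itô isometry for the left side and Minkowski's inequality in $L^2(\mathbb Q)$ for the right side. The main obstacle lies here: choosing a jointly measurable version of $\bm u\mapsto\int_0^t\bm\theta_s(\bm u)^\top\D\bm S_s$ so that the $\zeta$-integral on the right is meaningful, and making the identity hold simultaneously for all $t$. I would handle the first point through the density construction, which produces a measurable version as an $L^2$-limit. For the second, I would prove the identity for each fixed $t$ and then pass to càdlàg modifications of both sides, which are martingales.

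\textbf{Step 3: the residual is orthogonal, and assembly.} By \eqref{eq:L-integrability-thm1} and the same Minkowski argument, $L_t:=\int L_t(\bm u)\,\zeta(\D\bm u)$ is in $L^2$. It is a martingale because conditional expectation commutes with the $\zeta$-integral under this $L^1(|\zeta|;L^2)$ bound, by a conditional Fubini argument. For strong orthogonality, it suffices to show that $L\,S^k$ is a martingale, that is, $\mathbb E[(L_t S^k_t-L_s S^k_s)\mathbbm 1_F]=0$ for $F\in\mathcal F_s$. This expectation equals $\int \mathbb E[(L_t(\bm u)S^k_t-L_s(\bm u)S^k_s)\mathbbm 1_F]\,\zeta(\D\bm u)$, and Fubini is justified by Cauchy--Schwarz combined with \eqref{eq:L-integrability-thm1} and $S^k\in\mathcal H^2$. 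Each integrand vanishes because $L(\bm u)\perp\bm S$. Finally, integrate \eqref{eq:GKW-basis-thm1} against $\zeta$. Proposition~\ref{propositionfourierpricing} gives $\int M_t(\bm u)\,\zeta(\D\bm u)=H_t$; its integrability assumption follows from $M(\bm u)$ being decomposed into two pieces that are integrable by Steps 1 and 3. Step 2 turns the middle term into $\int_0^t\bm\theta_s^\top\D\bm S_s$, which yields \eqref{eq:representation1-thm1}--\eqref{eq:representation2-thm1}. Since $H$ is real, taking real parts, or noting that $\bm\theta$ and $L$ are real by uniqueness of the GKW decomposition, shows that this is the GKW decomposition of $H$.
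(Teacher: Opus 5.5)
Your proposal is correct and follows essentially the same route as the paper's proof. In both, the Cauchy--Schwarz bound in the $\bm C_t$-seminorm gives $\bm\theta\in L^2(\bm S)$, a stochastic Fubini theorem yields \eqref{eq:Fubiniinterchange-thm1}, conditional Fubini (justified by \eqref{eq:L-integrability-thm1} and $S^k\in\mathcal H^2$) gives the martingale property of $L$ and of $S^kL$, and uniqueness of the GKW decomposition concludes.

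One small caveat, which the paper shares: your claim that the seminorm bound shows $\bm\theta_t$ is well defined fails when $\bm C_t$ is singular, since that bound does not give absolute $|\zeta|$-integrability of $\bm\theta_t(\bm u)$. To fix this, first replace $\bm\theta_t(\bm u)$ by $\bm C_t^\dagger\bm C_t\bm\theta_t(\bm u)$, which changes no stochastic integral.
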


\begin{proof}
The proof extends the univariate methodologies of \cite[Theorems 4.1 and 4.2]{Semi_static_Fourier} to the present multi-asset setting. 

The Radon--Nikodym decomposition of the complex measure $\zeta$ reads
$$
    \mathrm{d}\zeta(\bm{u})=h(\bm{u})\,\mathrm{d}|\zeta|(\bm{u}),
    \qquad |h(\bm{u})|=1.
$$
For each $\bm{u}\in\mathcal{S}(\bm{R})$, define the stochastic integral
$$
    I_t(\bm{u})
    :=
    \int_0^t \bm{\theta}_s(\bm{u})^\top\,\mathrm{d}\bm{S}_s,
    \qquad t\in[0,T].
$$
By \eqref{eq:GKW-basis-thm1}, $I(\bm{u})\in\mathcal{H}^2_{\mathbb{C}}$ for every $\bm{u}\in\mathcal{S}(\bm{R})$. Applying the It\^o isometry for vector stochastic integrals together with the matrix density \eqref{eq:matrix-density-control} yields
\begin{align}
    \mathbb{E}\bigl[|I_T(\bm{u})|^2\bigr]
    &=
    \mathbb{E}\left[
        \int_0^T
        \bm{\theta}_t(\bm{u})^*\,
        \mathrm{d}\llangle \bm{S},\bm{S}\rrangle_t\,
        \bm{\theta}_t(\bm{u})
    \right] \notag\\
    &=
    \mathbb{E}\left[
        \int_0^T
        \bm{\theta}_t(\bm{u})^*\,
        \bm{C}_t\,
        \bm{\theta}_t(\bm{u})\,\mathrm{d}A_t
    \right].
    \label{eq:ito-isometry-thm1}
\end{align}
Substituting this into the integrability condition \eqref{eq:theta-integrability-thm1} gives
\begin{equation}
\label{eq:I-square-integrable-thm1}
    \int_{\mathcal{S}(\bm{R})}
    \mathbb{E}\bigl[|I_T(\bm{u})|^2\bigr]
    |\zeta|(\mathrm{d}\bm{u})
    <\infty.
\end{equation}

We now verify that the aggregated process
$$
    \bm{\theta}_t
    :=
    \int_{\mathcal{S}(\bm{R})}\bm{\theta}_t(\bm{u})\,\zeta(\mathrm{d}\bm{u})
$$
defines an admissible strategy in $L^2(\bm{S})$. Since $\bm{C}_t$ is Hermitian and positive semidefinite, applying the Cauchy--Schwarz inequality to the seminorm induced by $\bm{C}_t$ provides, for $\mathrm{d}A\otimes \mathrm{d}\mathbb{Q}$-a.e. $(t,\omega)$,
\begin{align}
    &\Bigg(
        \int_{\mathcal{S}(\bm{R})}\bm{\theta}_t(\bm{u})\,\zeta(\mathrm{d}\bm{u})
    \Bigg)^*
    \bm{C}_t
    \Bigg(
        \int_{\mathcal{S}(\bm{R})}\bm{\theta}_t(\bm{u})\,\zeta(\mathrm{d}\bm{u})
    \Bigg) \notag\\
    &\qquad=
    \Bigg(
        \int_{\mathcal{S}(\bm{R})}h(\bm{u})\bm{\theta}_t(\bm{u})\,|\zeta|(\mathrm{d}\bm{u})
    \Bigg)^*
    \bm{C}_t
    \Bigg(
        \int_{\mathcal{S}(\bm{R})}h(\bm{u})\bm{\theta}_t(\bm{u})\,|\zeta|(\mathrm{d}\bm{u})
    \Bigg) \notag\\
    &\qquad\le
    |\zeta|(\mathcal{S}(\bm{R}))
    \int_{\mathcal{S}(\bm{R})}
    \bm{\theta}_t(\bm{u})^*\bm{C}_t\bm{\theta}_t(\bm{u})\,
    |\zeta|(\mathrm{d}\bm{u}).
    \label{eq:CS-control-thm1}
\end{align}
Integrating \eqref{eq:CS-control-thm1} over $[0,T]\times\Omega$ and invoking \eqref{eq:theta-integrability-thm1}, we obtain
$$
    \mathbb{E}\left[
        \int_0^T \bm{\theta}_t^*\,\mathrm{d}\llangle \bm{S},\bm{S}\rrangle_t\,\bm{\theta}_t
    \right]
    =
    \mathbb{E}\left[
        \int_0^T \bm{\theta}_t^* \bm{C}_t \bm{\theta}_t\,\mathrm{d}A_t
    \right]
    <\infty.
$$
Therefore, $\bm{\theta}\in L^2(\bm{S})$.

Next, \eqref{eq:I-square-integrable-thm1} and the $\mathcal{P}\otimes\mathcal{B}(\mathcal{S}(\bm{R}))$-measurability of $(t,\omega,\bm{u})\mapsto \bm{\theta}_t(\bm{u})(\omega)$ satisfy the conditions of the stochastic Fubini theorem with scalar control measure $A$. Hence, there exists an optional field $(t,\omega,\bm{u})\mapsto I_t(\bm{u})(\omega)$, serving as a version of $\int_0^\cdot \bm{\theta}_s(\bm{u})^\top\,\mathrm{d}\bm{S}_s$ for every $\bm{u}$, such that
\begin{equation}
\label{eq:stochastic-fubini-proof-thm1}
    \int_0^t
    \Bigg(
        \int_{\mathcal{S}(\bm{R})}\bm{\theta}_s(\bm{u})\,\zeta(\mathrm{d}\bm{u})
    \Bigg)^\top
    \mathrm{d}\bm{S}_s
    =
    \int_{\mathcal{S}(\bm{R})} I_t(\bm{u})\,\zeta(\mathrm{d}\bm{u}),
    \qquad t\in[0,T],
\end{equation}
$\mathbb{Q}$-a.s., which proves \eqref{eq:Fubiniinterchange-thm1}.

Now define the aggregated residual process
$$
    L_t^{*}
    :=
    \int_{\mathcal{S}(\bm{R})}L_t(\bm{u})\,\zeta(\mathrm{d}\bm{u}),
    \qquad t\in[0,T].
$$
By the bound in \eqref{eq:L-integrability-thm1}, for every $t\in[0,T]$,
\begin{align}
    \int_{\mathcal{S}(\bm{R})}
    \mathbb{E}\bigl[|L_t(\bm{u})|\bigr]\,|\zeta|(\mathrm{d}\bm{u})
    &\le
    |\zeta|(\mathcal{S}(\bm{R}))^{1/2}
    \left(
        \int_{\mathcal{S}(\bm{R})}
        \mathbb{E}\bigl[|L_t(\bm{u})|^2\bigr]\,|\zeta|(\mathrm{d}\bm{u})
    \right)^{1/2}
    <\infty.
    \label{eq:L-L1-integrability-thm1}
\end{align}
The integrability condition \eqref{eq:L-L1-integrability-thm1} permits the application of the conditional Fubini theorem. For any $0\le s\le t\le T$,
\begin{align*}
    \mathbb{E}\bigl[L_t^{*}\mid\mathcal{F}_s\bigr]
    &=
    \int_{\mathcal{S}(\bm{R})}
    \mathbb{E}\bigl[L_t(\bm{u})\mid\mathcal{F}_s\bigr]\,
    \zeta(\mathrm{d}\bm{u})\\
    &=
    \int_{\mathcal{S}(\bm{R})}L_s(\bm{u})\,\zeta(\mathrm{d}\bm{u})
    =
    L_s^{*}.
\end{align*}
Thus, $L^{*}$ is a complex-valued square-integrable martingale.

Integrating \eqref{eq:GKW-basis-thm1} with respect to $\zeta(\mathrm{d}\bm{u})$ and using Proposition~\ref{propositionfourierpricing} together with the Fubini interchange \eqref{eq:stochastic-fubini-proof-thm1}, we decompose the target claim:
\begin{align*}
    H_t
    &=
    \int_{\mathcal{S}(\bm{R})}M_t(\bm{u})\,\zeta(\mathrm{d}\bm{u})\\
    &=
    \int_{\mathcal{S}(\bm{R})}M_0(\bm{u})\,\zeta(\mathrm{d}\bm{u})
    +
    \int_{\mathcal{S}(\bm{R})}
    \Bigg(
        \int_0^t \bm{\theta}_s(\bm{u})^\top\,\mathrm{d}\bm{S}_s
    \Bigg)\zeta(\mathrm{d}\bm{u})
    +
    \int_{\mathcal{S}(\bm{R})}L_t(\bm{u})\,\zeta(\mathrm{d}\bm{u})\\
    &=
    H_0
    +
    \int_0^t \bm{\theta}_s^\top\,\mathrm{d}\bm{S}_s
    +
    L_t^{*}.
\end{align*}

It remains to prove strong orthogonality, $L^{*}\perp \bm{S}$. Fix $k\in\{1,\dots,d\}$. Since $L(\bm{u})\perp \bm{S}$, the product $S^k L(\bm{u})$ is a martingale for every $\bm{u}\in\mathcal{S}(\bm{R})$. By the Cauchy-Schwarz inequality, for every $t\in[0,T]$,
\begin{align}
    \int_{\mathcal{S}(\bm{R})}
    \mathbb{E}\bigl[|S_t^kL_t(\bm{u})|\bigr]\,
    |\zeta|(\mathrm{d}\bm{u})
    &\le
    \|S_t^k\|_{L^2(\mathbb{Q})}
    \int_{\mathcal{S}(\bm{R})}\|L_t(\bm{u})\|_{L^2(\mathbb{Q})}\,|\zeta|(\mathrm{d}\bm{u}) \notag\\
    &\le
    \|S_t^k\|_{L^2(\mathbb{Q})}\,
    |\zeta|(\mathcal{S}(\bm{R}))^{1/2}
    \left(
        \int_{\mathcal{S}(\bm{R})}
        \mathbb{E}\bigl[|L_t(\bm{u})|^2\bigr]\,|\zeta|(\mathrm{d}\bm{u})
    \right)^{1/2}
    <\infty.
    \label{eq:orthogonality-integrability-thm1}
\end{align}
Because this bound is finite \eqref{eq:orthogonality-integrability-thm1}, conditional Fubini applies once more. For $0\le s\le t\le T$,
\begin{align*}
    \mathbb{E}\bigl[S_t^kL_t^{*}\mid\mathcal{F}_s\bigr]
    &=
    \int_{\mathcal{S}(\bm{R})}
    \mathbb{E}\bigl[S_t^kL_t(\bm{u})\mid\mathcal{F}_s\bigr]\,
    \zeta(\mathrm{d}\bm{u})\\
    &=
    \int_{\mathcal{S}(\bm{R})}S_s^kL_s(\bm{u})\,\zeta(\mathrm{d}\bm{u})
    =
    S_s^kL_s^{*}.
\end{align*}
Thus, $S^k L^{*}$ is a martingale for every $k$, implying $L^{*}\perp S^k$ for all $k$, and therefore $L^{*}\perp \bm{S}$.

Consequently,
$$
    H_t
    =
    H_0+\int_0^t \bm{\theta}_s^\top\,\mathrm{d}\bm{S}_s+L_t^{*}
$$
is a valid GKW decomposition of $H$ with respect to $\bm{S}$. By the uniqueness of the GKW decomposition in $L^2(\mathbb{Q})$, the integrand and the residual are precisely the processes defined in \eqref{eq:representation2-thm1}. This concludes the proof.
\end{proof}

\subsection{Payoff transforms}\label{subsec:payofftransforms}

The Fourier--GKW machinery requires explicit or numerically stable expressions for $\hat h(\bm{u})$ in
\eqref{eq:LaplaceMVpayoffs}. We collect the transforms used throughout the paper.

\paragraph{\textit{Spread options.}}
For the two-asset spread payoff

\begin{equation*}
    h(y_1,y_2)=(\e^{y_1}-\e^{y_2}-K)^+,
\end{equation*}
one may use a damped Laplace transform on a strip satisfying $R_2<0$ and $R_1+R_2>1$; see, e.g., \cite{Hubalek2006} and Example~\ref{example:LaplaceSpreadBasket} below. For a geometric basket call

\begin{equation*}
    h(\bm{y})
    =
    \Big(\exp\Big(\sum_{i=1}^d \bar w_i y_i\Big)-K\Big)^+,
    \qquad
    \bar w_i:=\frac{w_i}{d},
\end{equation*}
the payoff depends only on the one-dimensional factor $\sum_{i=1}^d \bar w_i Y_T^i$. The transform therefore reduces to the univariate call kernel evaluated along the ray

\begin{equation*}
    \bm{u}=z(\bar w_1,\dots,\bar w_d)^\top.
\end{equation*}

\paragraph{\textit{Polynomial/log payoffs.}}
Quadratic terms such as

\begin{equation*}
    h(\bm{y})=y_i y_j
\end{equation*}
are more efficiently handled by differentiating the conditional transform $\bm{u}\mapsto M_t(\bm{u})$ rather than by explicit Laplace inversion:

\begin{equation*}
    \mathbb{E}[Y_T^iY_T^j\mid\mathcal{F}_t]
    =
    \frac{\partial^2}{\partial u_i\partial u_j}M_t(\bm{u})\Big|_{\bm{u}=0}.
\end{equation*}

\begin{example}[Product and quanto options]
\label{ex:ProductOptions}
A particularly tractable class consists of separable payoffs

\begin{equation*}
    h(\bm{x})
    =
    \prod_{j=1}^d h_j(x_j),
    \qquad
    \bm{x}=(x_1,\dots,x_d)^\top,
\end{equation*}
which naturally arise in quanto and product contracts. Their key advantage is computational: the
multivariate Laplace transform factorizes,

\begin{equation*}
    \hat h(\bm{u})
    =
    \int_{\mathbb{R}^d}
    \e^{\bm{u}^\top \bm{x}}
    \prod_{j=1}^d h_j(x_j)\,\D\bm{x}
    =
    \prod_{j=1}^d
    \Big(
        \int_{\mathbb{R}} \e^{u_j x_j} h_j(x_j)\,\D x_j
    \Big)
    =
    \prod_{j=1}^d \hat h_j(u_j).
\end{equation*}
Thus the multivariate kernel is obtained from standard univariate transforms, for example call or put transforms, thereby avoiding high-dimensional integration at the payoff level. In Section~\ref{sec:replicationtheory}, products of vanilla option payoffs appear as the canonical building blocks for bivariate covariance replication.
\end{example}

\begin{example}[Laplace transforms for multi-asset payoffs]
\label{example:LaplaceSpreadBasket}
To implement \eqref{eq:fourierprice} and Theorem~\ref{thm:FourierGKW}, one needs explicit payoff transforms $\hat h$. Throughout this example, for a payoff $h:\mathbb{R}^M\to\mathbb{R}$ we use the bilateral Laplace transform convention
\begin{align}
    \hat h(\bm u)
    :=
    \int_{\mathbb{R}^M} \e^{-\bm u^\top \bm x} h(\bm x)\,\D\bm x,
    \qquad
    \bm u\in\mathbb{C}^M,
\end{align}
whenever the integral is well defined, so that the inversion formula reads
\begin{align}
    h(\bm x)
    =
    \frac{1}{(2\pi i)^M}
    \int_{\mathcal S(\bm R)}
    \e^{\bm u^\top \bm x}\hat h(\bm u)\,\D\bm u,
    \qquad
    \mathcal S(\bm R):=\{\bm u\in\mathbb{C}^M:\Re(\bm u)=\bm R\}.
\end{align}
Building on \cite{Hurd2010,Bossu2021}, we collect the transforms used throughout the paper.

Fix $K>0$ and write $\Gamma(\cdot)$ for the Gamma function.

\begin{enumerate}
    \item[i)] \textbf{Two-asset spread option.}
    For
    \begin{align}
        h_K^{\mathrm{spr}}(x_1,x_2)
        &:=
        \bigl(\e^{x_1}-\e^{x_2}-K\bigr)^+,
    \end{align}
    choose $\bm R=(R_1,R_2)\in\mathbb{R}^2$ such that $R_2<0$ and $R_1+R_2>1$. Then
    \begin{align}
        h_K^{\mathrm{spr}}(x_1,x_2)
        =
        \frac{1}{(2\pi i)^2}
        \int_{\mathcal S(\bm R)}
        \e^{u_1x_1+u_2x_2}\,
        \hat h_K^{\mathrm{spr}}(u_1,u_2)\,\D u_1\,\D u_2,
    \end{align}
    with
    \begin{align}
        \hat h_K^{\mathrm{spr}}(u_1,u_2)
        =
        K^{\,1-u_1-u_2}
        \frac{\Gamma(u_1+u_2-1)\Gamma(-u_2)}{\Gamma(u_1+1)}.
    \end{align}

    \item[ii)] \textbf{Exchange option.}
    For the Margrabe payoff
    \begin{align}
        h^{\mathrm{ex}}(x_1,x_2)
        &:=
        \bigl(\e^{x_1}-\e^{x_2}\bigr)^+,
    \end{align}
    one may either view it as the limit $K\downarrow 0$ of the spread payoff above, or use the one-dimensional representation
    \begin{align}
        h^{\mathrm{ex}}(x_1,x_2)
        =
        \frac{1}{2\pi i}
        \int_{R-i\infty}^{R+i\infty}
        \e^{ux_1+(1-u)x_2}\,
        \frac{1}{u(u-1)}\,\D u,
        \qquad R>1.
    \end{align}

    \item[iii)] \textbf{$M$-asset basket spread.}
    For $M\geq 2$, define
    \begin{align}
        h_K^{\mathrm{bs}}(\bm x)
        &:=
        \Bigl(\e^{x_1}-\sum_{m=2}^{M} \e^{x_m}-K\Bigr)^+,
        \qquad
        \bm x=(x_1,\dots,x_M)^\top.
    \end{align}
    Choose $\bm R=(R_1,\dots,R_M)^\top\in\mathbb{R}^M$ such that
    \begin{align}
        R_m<0,\qquad m=2,\dots,M,
        \qquad\text{and}\qquad
        \sum_{m=1}^{M} R_m>1.
    \end{align}
    Then
    \begin{align}
        h_K^{\mathrm{bs}}(\bm x)
        =
        \frac{1}{(2\pi i)^M}
        \int_{\mathcal S(\bm R)}
        \e^{\bm u^\top \bm x}\,
        \hat h_K^{\mathrm{bs}}(\bm u)\,\D\bm u,
    \end{align}
    with
    \begin{align}
        \hat h_K^{\mathrm{bs}}(\bm u)
        =
        K^{\,1-\sum_{m=1}^{M}u_m}
        \frac{
            \Gamma\bigl(\sum_{m=1}^{M}u_m-1\bigr)
            \prod_{m=2}^{M}\Gamma(-u_m)
        }{
            \Gamma(u_1+1)
        }.
    \end{align}

    \item[iv)] \textbf{Put on the sum.}
    For
    \begin{align}
        h_K^{\mathrm{sum}}(\bm x)
        &:=
        \Bigl(K-\sum_{m=1}^{M} \e^{x_m}\Bigr)^+,
    \end{align}
    choose $\bm R\in\mathbb{R}^M$ such that $R_m<0$ for all $m=1,\dots,M$. Then
    \begin{align}
        h_K^{\mathrm{sum}}(\bm x)
        =
        \frac{1}{(2\pi i)^M}
        \int_{\mathcal S(\bm R)}
        \e^{\bm u^\top \bm x}\,
        \hat h_K^{\mathrm{sum}}(\bm u)\,\D\bm u,
    \end{align}
    with
    \begin{align}
        \hat h_K^{\mathrm{sum}}(\bm u)
        =
        K^{\,1-\sum_{m=1}^{M}u_m}
        \frac{\prod_{m=1}^{M}\Gamma(-u_m)}{\Gamma\bigl(2-\sum_{m=1}^{M}u_m\bigr)}.
    \end{align}

    \item[v)] \textbf{Worst-of call.}
    Let
    \begin{align}
        h_K^{\wedge}(\bm x)
        &:=
        \Bigl(\min_{1\leq m\leq M} \e^{x_m}-K\Bigr)^+.
    \end{align}
    Then, on any strip $\mathcal S(\bm R)$ such that
    \begin{align}
        R_m>0,\qquad m=1,\dots,M,
        \qquad\text{and}\qquad
        \sum_{m=1}^{M}R_m>1,
    \end{align}
    one has
    \begin{align}
        h_K^{\wedge}(\bm x)
        =
        \frac{1}{(2\pi i)^M}
        \int_{\mathcal S(\bm R)}
        \e^{\bm u^\top \bm x}\,
        \hat h_K^{\wedge}(\bm u)\,\D\bm u,
    \end{align}
    with
    \begin{align}
        \hat h_K^{\wedge}(\bm u)
        =
        \frac{
            K^{\,1-\sum_{m=1}^{M}u_m}
        }{
            \bigl(\sum_{m=1}^{M}u_m-1\bigr)\prod_{m=1}^{M}u_m
        }.
    \end{align}
    In particular, for $M=2$ this is the Laplace kernel of the two-asset worst-of call.

    \item[vi)] \textbf{Two-asset best-of call.}
    In dimension $M=2$, define
    \begin{align}
        h_K^{\vee}(x_1,x_2)
        &:=
        \bigl(\e^{x_1}\vee \e^{x_2}-K\bigr)^+.
    \end{align}
    Using the maximum--minimum identity
    \begin{align}
        \e^{x_1}\vee \e^{x_2}
        +
        \e^{x_1}\wedge \e^{x_2}
        =
        \e^{x_1}+\e^{x_2},
    \end{align}
    one obtains the exact payoff decomposition
    \begin{align}
        h_K^{\vee}(x_1,x_2)
        =
        \bigl(\e^{x_1}-K\bigr)^+
        +
        \bigl(\e^{x_2}-K\bigr)^+
        -
        h_K^{\wedge}(x_1,x_2).
    \end{align}
    Hence the best-of call is implemented in the Fourier--Laplace framework by combining two one-dimensional vanilla call transforms with the two-dimensional worst-of-call transform:
    \begin{align}
        h_K^{\vee}(x_1,x_2)
        &=
        \frac{1}{2\pi i}
        \int_{R_1-i\infty}^{R_1+i\infty}
        \e^{u_1x_1}
        \frac{K^{\,1-u_1}}{u_1(u_1-1)}\,\D u_1
        +
        \frac{1}{2\pi i}
        \int_{R_2-i\infty}^{R_2+i\infty}
        \e^{u_2x_2}
        \frac{K^{\,1-u_2}}{u_2(u_2-1)}\,\D u_2 \nonumber\\
        &\quad
        -
        \frac{1}{(2\pi i)^2}
        \int_{\mathcal S(\bm R)}
        \e^{u_1x_1+u_2x_2}
        \frac{K^{\,1-u_1-u_2}}{(u_1+u_2-1)u_1u_2}\,\D u_1\,\D u_2,
    \end{align}
    where $R_1>1$, $R_2>1$, and $\bm R=(R_1,R_2)$ in the last integral satisfies $R_1>0$, $R_2>0$, and $R_1+R_2>1$. The corresponding put versions follow from put--call parity together with the same maximum--minimum identity. In particular,
    \begin{align}
        \bigl(K-\e^{x_1}\vee \e^{x_2}\bigr)^+
        &=
        \bigl(K-\e^{x_1}\bigr)^+
        +
        \bigl(K-\e^{x_2}\bigr)^+
        -
        \bigl(K-\e^{x_1}\wedge \e^{x_2}\bigr)^+,
    \end{align}
    so the best-of and worst-of puts may be reduced to the corresponding call representations plus the standard one-dimensional put kernels.
\end{enumerate}
\end{example}

\subsection{Implications for variance-optimal hedging}

Theorem~\ref{thm:FourierGKW} provides the analytical backbone for the computations that follow:
\begin{itemize}
    \item \textbf{Dynamic hedges.} The variance-optimal GKW integrand for a general payoff $h(\bm{Y}_T)$ is obtained by integrating
    the basis strategies $\bm{\theta}(\cdot,\bm{u})$ against $\zeta(\D\bm{u})$.

    \item \textbf{Static optimization inputs.} Since the residual $L$ is also represented by a Fourier integral,
    the covariance objects $A$, $B$, and $C$ in \eqref{eq:maincomponenetsstatic} reduce to integrals of the corresponding
    objects for $L(\bm{u})$, which become tractable once $\bm{\theta}(\bm{u})$ is available.
\end{itemize}
In affine and quadratic stochastic covariance models, the conditional transform $M_t(\bm{u})$ admits
closed-form or low-dimensional Riccati or Volterra representations, which we exploit in subsequent sections
to compute $\bm{\theta}(\bm{u})$ and the semi-static quantities efficiently.

\subsection{Fourier Representation of the Static Hedging Components}
\label{subsec:fourier-static}

The semi-static VO solution of Section~\ref{sec:VO-framework} requires, besides the dynamic hedge ratios,
the quadratic inputs $A\in\mathbb{R}$, $B\in\mathbb{R}^n$, and $C\in\mathbb{S}_+^n$ in
\eqref{eq:maincomponenetsstatic}. These objects are second-moment quantities of the orthogonal
residuals in the GKW decompositions of the target and auxiliary claims. In this subsection we express
$A$, $B$, and $C$ as Fourier--Laplace integrals of the corresponding quantities for exponential basis claims. This
reduces the computation of the static optimization problem to evaluating predictable covariations of
fundamental exponential martingales. Let $\eta^0=h^0(\bm{Y}_T)$ be the target claim and let

\begin{equation*}
    \eta^j=h^j(\bm{Y}_T),\qquad j=1,\dots,n,
\end{equation*}
be the auxiliary claims. Assume that each payoff admits a Laplace representation of the form \eqref{eq:LaplaceMVpayoffs} on a
strip

\begin{equation*}
    \mathcal{S}(\bm{R}^{\,j})
    :=
    \bigl\{\bm{u}\in\mathbb{C}^d:\Re(\bm{u})=\bm{R}^{\,j}\bigr\},
\end{equation*}
with associated complex measure $\zeta^j$:

\begin{equation*}
    h^j(\bm{y})
    =
    \int_{\mathcal{S}(\bm{R}^{\,j})}
    \e^{\bm{u}^\top \bm{y}}\,\zeta^j(\D\bm{u}),
    \qquad j=0,1,\dots,n.
\end{equation*}
If $\eta^j$ depends only on a subset of the components of $\bm{Y}_T$, then the integral is taken over the corresponding lower-dimensional strip; cf.~Example~\ref{ex:ProductOptions}.

For each $j$, define the value process

\begin{equation*}
    H_t^j:=\mathbb{E}[\eta^j\mid\mathcal{F}_t]
\end{equation*}
and write its GKW decomposition with respect to $\bm{S}$ as

\begin{equation*}
    H_t^j
    =
    H_0^j+\int_0^t (\bm{\theta}_s^j)^\top\,\D\bm{S}_s+L_t^j,
    \qquad j=0,1,\dots,n,
\end{equation*}
where $L^j\perp \bm{S}$. By Theorem~\ref{thm:FourierGKW}, the GKW integrands and residuals admit Fourier representations obtained by integrating the corresponding objects for the exponential basis martingales $H(\bm{u})$.

\subsubsection*{Residual brackets via Fourier integration}

The entries of $B$ and $C$ can be written in terms of predictable covariations of the residuals:
for square-integrable martingales, $\mathbb{E}[L_T^iL_T^j]=\mathbb{E}[\langle L^i,L^j\rangle_T]$.
Thus we need to evaluate expectations of $\langle L^i,L^j\rangle_T$. Let

\begin{equation*}
    H(\bm{u})_t
    :=
    \mathbb{E}[\e^{\bm{u}^\top \bm{Y}_T}\mid\mathcal{F}_t]
\end{equation*}
and denote by $(\bm{\theta}(\cdot,\bm{u}),L(\cdot,\bm{u}))$
the GKW decomposition of $H(\bm{u})$ with respect to $\bm{S}$. Under the integrability conditions of
Theorem~\ref{thm:FourierGKW}, stochastic Fubini yields the bracket representation
(cf.~\cite[Theorem~4.2]{Semi_static_Fourier})
\begin{equation}\label{eq:FourierBracketResiduals}
    \langle L^i, L^j\rangle_T
    =
    \int_{\mathcal{S}(\bm{R}^{\,i})}
    \int_{\mathcal{S}(\bm{R}^{\,j})}
    \langle L(\cdot,\bm{u}_i),L(\cdot,\bm{u}_j)\rangle_T\,
    \zeta^j(\D\bm{u}_j)\,\zeta^i(\D\bm{u}_i),
    \qquad i,j=0,\dots,n.
\end{equation}

As a direct consequence, the static optimization inputs are obtained by integrating the corresponding
basis brackets.

\begin{corollary}[Fourier formulas for the static optimization inputs]
\label{cor:ABC-fourier}
Let $A$, $B$, and $C$ be defined in \eqref{eq:maincomponenetsstatic}. Under the assumptions of
Theorem~\ref{thm:FourierGKW} and \eqref{eq:FourierBracketResiduals}, we have
\begin{align}
\label{eq:maincomponenetsstaticfourier}
    A
    &=
    \mathbb{E}_{\mathbb{Q}}\big[\langle L^0,L^0\rangle_T\big],\\
    B_j
    &=
    \int_{\mathcal{S}(\bm{R}^{\,j})}
    \mathbb{E}_{\mathbb{Q}}\big[\langle L^0,L(\cdot,\bm{u})\rangle_T\big]\,
    \zeta^j(\D\bm{u}),
    \qquad j=1,\dots,n,\\
    C_{ij}
    &=
    \int_{\mathcal{S}(\bm{R}^{\,i})}
    \int_{\mathcal{S}(\bm{R}^{\,j})}
    \mathbb{E}_{\mathbb{Q}}\big[\langle L(\cdot,\bm{u}_i),L(\cdot,\bm{u}_j)\rangle_T\big]\,
    \zeta^i(\D\bm{u}_i)\,\zeta^j(\D\bm{u}_j),
    \qquad i,j=1,\dots,n.
\end{align}
\end{corollary}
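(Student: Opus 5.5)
The plan is to combine the definitions in \eqref{eq:maincomponenetsstatic} with the martingale identity $\mathbb{E}[L_T^iL_T^j]=\mathbb{E}[\langle L^i,L^j\rangle_T]$. We then integrate the bracket representation \eqref{eq:FourierBracketResiduals} and exchange expectation with the Fourier integrals by ordinary Fubini. The formula for $A$ is immediate: $L^0$ is a real square-integrable martingale with $L^0_0=0$, so $L^0(L^0)-\langle L^0\rangle$ is a uniformly integrable martingale. Hence $A=\mathbb{E}[(L^0_T)^2]=\mathbb{E}[\langle L^0,L^0\rangle_T]$. The same polarization argument gives $B_j=\mathbb{E}[\langle L^0,L^j\rangle_T]$ and $C_{ij}=\mathbb{E}[\langle L^i,L^j\rangle_T]$, since each $L^j\in\mathcal{H}^2$ by Theorem~\ref{thm:FourierGKW}.

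For $C_{ij}$ we insert \eqref{eq:FourierBracketResiduals} and take expectations. It remains to move $\mathbb{E}$ inside the double integral over $\mathcal{S}(\bm R^i)\times\mathcal{S}(\bm R^j)$ with respect to the product of the complex measures $\zeta^i\otimes\zeta^j$. Joint measurability of $(\omega,\bm u_i,\bm u_j)\mapsto\langle L(\cdot,\bm u_i),L(\cdot,\bm u_j)\rangle_T$ follows from the $\mathcal P\otimes\mathcal B$-measurable versions assumed in Theorem~\ref{thm:FourierGKW}(ii), for instance via polarization of brackets of the jointly measurable fields. For absolute integrability we use the Kunita--Watanabe inequality, $|\langle L(\bm u_i),L(\bm u_j)\rangle_T|\le \langle L(\bm u_i)\rangle_T^{1/2}\langle L(\bm u_j)\rangle_T^{1/2}$, where the brackets on the right are taken in the Hermitian sense for complex martingales. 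Cauchy--Schwarz in $\omega$ then gives
\[
\mathbb{E}\bigl|\langle L(\bm u_i),L(\bm u_j)\rangle_T\bigr|\le \|L_T(\bm u_i)\|_{L^2}\,\|L_T(\bm u_j)\|_{L^2}.
\]
Integrating against $|\zeta^i|\otimes|\zeta^j|$ and applying Cauchy--Schwarz once more, as in \eqref{eq:L-L1-integrability-thm1}, bounds the total by $\bigl(|\zeta^i|(\mathcal S)\int\mathbb{E}|L_T|^2\,d|\zeta^i|\bigr)^{1/2}\bigl(|\zeta^j|(\mathcal S)\int\mathbb{E}|L_T|^2\,d|\zeta^j|\bigr)^{1/2}<\infty$ by \eqref{eq:L-integrability-thm1}. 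Fubini then yields the stated formula for $C_{ij}$.

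For $B_j$ we proceed the same way, but integrate only over $\bm u_j$ while keeping $L^0$ aggregated. We use bilinearity of the bracket together with the representation $L^j=\int L(\cdot,\bm u)\,\zeta^j(d\bm u)$ from \eqref{eq:representation2-thm1}. This gives $\langle L^0,L^j\rangle_T=\int\langle L^0,L(\cdot,\bm u)\rangle_T\,\zeta^j(d\bm u)$, which is the one-sided specialization of \eqref{eq:FourierBracketResiduals} obtained by integrating out $\bm u_0$ first. The domination bound $\|L^0_T\|_{L^2}\int\|L_T(\bm u)\|_{L^2}\,d|\zeta^j|<\infty$ again justifies the exchange with $\mathbb{E}$.

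The main obstacle is ensuring that expectations of brackets are meaningful for complex-valued residuals. The cleanest route is to interpret $\langle\cdot,\cdot\rangle$ bilinearly (not sesquilinearly), consistent with \eqref{eq:FourierBracketResiduals}, and to note that real-valuedness of $h^j$ forces $\zeta^j$ to be conjugation-symmetric. As a result, the aggregated $L^j$ are real and the final quantities $B$ and $C$ are real, as the definitions in \eqref{eq:maincomponenetsstatic} require.
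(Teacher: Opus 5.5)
Your proposal is correct and follows the paper's route: the corollary comes from the martingale identity $\mathbb{E}[L_T^iL_T^j]=\mathbb{E}[\langle L^i,L^j\rangle_T]$ by taking expectations in \eqref{eq:FourierBracketResiduals}, and you additionally supply the Kunita--Watanabe and Cauchy--Schwarz domination for the Fubini exchange, which the paper leaves implicit. One small point: for $B_j$, bilinearity alone does not justify pulling the bracket through the continuum integral $\int L(\cdot,\bm u)\,\zeta^j(\D\bm u)$, but your bound makes that step unnecessary, since $B_j=\mathbb{E}\bigl[L_T^0\int L_T(\bm u)\,\zeta^j(\D\bm u)\bigr]=\int\mathbb{E}[L_T^0L_T(\bm u)]\,\zeta^j(\D\bm u)$ by ordinary Fubini, followed by the martingale identity for each fixed $\bm u$.
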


\subsubsection*{Explicit dynamics of pairwise residual brackets}

To implement Corollary~\ref{cor:ABC-fourier}, it remains to compute $\langle L(\cdot,\bm{u}_1),L(\cdot,\bm{u}_2)\rangle$
for $\bm{u}_1,\bm{u}_2$ in the relevant strips. For each $\bm{u}$, the basis martingale admits the GKW decomposition

\begin{equation*}
    H(\bm{u})_t
    =
    H(\bm{u})_0+\int_0^t \bm{\theta}_s(\bm{u})^\top\,\D\bm{S}_s + L(\bm{u})_t.
\end{equation*}
By bilinearity of predictable covariation,

\begin{align*}
    \langle L(\bm{u}_1),L(\bm{u}_2)\rangle
    &=
    \langle H(\bm{u}_1),H(\bm{u}_2)\rangle
    -\Big\langle H(\bm{u}_1),\int \bm{\theta}(\bm{u}_2)^\top\,\D\bm{S}\Big\rangle
    -\Big\langle \int \bm{\theta}(\bm{u}_1)^\top\,\D\bm{S}, H(\bm{u}_2)\Big\rangle
    \\ 
    &\quad +\Big\langle \int \bm{\theta}(\bm{u}_1)^\top\,\D\bm{S}, \int \bm{\theta}(\bm{u}_2)^\top\,\D\bm{S}\Big\rangle.
\end{align*}
Differentiating and using the defining projection identity of the GKW integrand,

\begin{equation*}
    \D\langle \bm{S},H(\bm{u})\rangle_t
    =
    \D\llangle \bm{S},\bm{S}\rrangle_t\,\bm{\theta}_t(\bm{u}),
\end{equation*}
we obtain the compact form
\begin{equation}\label{eq:bracket-residual-diff}
    \D\langle L(\bm{u}_1),L(\bm{u}_2)\rangle_t
    =
    \D\langle H(\bm{u}_1),H(\bm{u}_2)\rangle_t
    -
    \bm{\theta}_t(\bm{u}_1)^\top\,\D\llangle \bm{S},\bm{S}\rrangle_t\,\bm{\theta}_t(\bm{u}_2),
    \qquad t\in[0,T].
\end{equation}
This identity holds for continuous square-integrable martingales and is the multivariate analogue of the
univariate formula used in \cite{Semi_static_Fourier}. It is convenient to package the relevant residual brackets into finite-variation processes.
For $\bm{u},\bm{u}_1,\bm{u}_2\in\mathbb{C}^d$, define predictable processes $\mathcal{A}$, $\mathcal{B}(\bm{u})$, and
$\mathcal{C}(\bm{u}_1,\bm{u}_2)$ by
\begin{align}
\label{eq:defABCprocesses}
    \D\mathcal{A}_t
    &:=
    \D\langle H^0,H^0\rangle_t-(\bm{\theta}_t^0)^\top\,\D\llangle \bm{S},\bm{S}\rrangle_t\,\bm{\theta}_t^0,\\
    \D\mathcal{B}(\bm{u})_t
    &:=
    \D\langle H^0,H(\bm{u})\rangle_t-(\bm{\theta}_t^0)^\top\,\D\llangle \bm{S},\bm{S}\rrangle_t\,\bm{\theta}_t(\bm{u}),\\
    \D\mathcal{C}(\bm{u}_1,\bm{u}_2)_t
    &:=
    \D\langle H(\bm{u}_1),H(\bm{u}_2)\rangle_t
    -\bm{\theta}_t(\bm{u}_1)^\top\,\D\llangle \bm{S},\bm{S}\rrangle_t\,\bm{\theta}_t(\bm{u}_2),
\end{align}
with

\begin{equation*}
    \mathcal{A}_0=\mathcal{B}(\bm{u})_0=\mathcal{C}(\bm{u}_1,\bm{u}_2)_0=0.
\end{equation*}
By \eqref{eq:bracket-residual-diff},

\begin{equation*}
    \mathcal{A}_T=\langle L^0,L^0\rangle_T,\qquad
    \mathcal{B}(\bm{u})_T=\langle L^0,L(\bm{u})\rangle_T,\qquad
    \mathcal{C}(\bm{u}_1,\bm{u}_2)_T=\langle L(\bm{u}_1),L(\bm{u}_2)\rangle_T.
\end{equation*}
Therefore, Corollary~\ref{cor:ABC-fourier} can be written equivalently as
\begin{align}
\label{eq:importantFourier}
    A
    &=
    \mathbb{E}_{\mathbb{Q}}\big[\mathcal{A}_T\big],\\
    B_j
    &=
    \int_{\mathcal{S}(\bm{R}^{\,j})}
    \mathbb{E}_{\mathbb{Q}}\big[\mathcal{B}(\bm{u})_T\big]\,
    \zeta^j(\D\bm{u}),
    \qquad j=1,\dots,n,\\
    C_{ij}
    &=
    \int_{\mathcal{S}(\bm{R}^{\,i})}
    \int_{\mathcal{S}(\bm{R}^{\,j})}
    \mathbb{E}_{\mathbb{Q}}\big[\mathcal{C}(\bm{u}_i,\bm{u}_j)_T\big]\,
    \zeta^i(\D\bm{u}_i)\,\zeta^j(\D\bm{u}_j),
    \qquad i,j=1,\dots,n.
\end{align}

Equations~\eqref{eq:bracket-residual-diff}--\eqref{eq:importantFourier} reduce the computation of the
static optimization inputs to evaluating predictable covariations of the exponential basis martingales
$H(\bm{u})$. In the next section we show that, in both affine and quadratic stochastic covariance models, the
required covariations can be obtained from low-dimensional ODEs.

\section{Affine Stochastic Covariance Models}
\label{sec:affinecase}

The purpose of this section is to introduce a class of multivariate stochastic covariance models for which the Fourier--Laplace representations derived in Section~\ref{sec:fourierrepresentation} become analytically tractable. The central class is that of \emph{affine stochastic covariance models}, where the relevant conditional transforms depend exponentially-affinely on the state variables. This property reduces the computation of pricing, Galtchouk--Kunita--Watanabe hedge ratios, and the static quadratic quantities $A$, $\bm B$, and $\bm C$ to systems of matrix Riccati equations and, in several important cases, to explicit matrix-exponential formulas.

The mathematical foundation for affine processes on the cone of symmetric positive semidefinite matrices was developed in \cite{Cuchiero2011}, extending the general affine framework of Duffie, Filipovi\'c, and Schachermayer to matrix-valued state spaces. Multivariate stochastic covariance models based on this theory, including Wishart-type specifications and affine jump extensions, have subsequently been studied in \cite{Fonseca2007,MAYERHOFER2011568,GNOATTO2012,Ahdida2013,Muhle-Karbe2012}. In the present context, their main relevance is that they provide closed-form or low-dimensional representations for the exponential basis martingales $H(\bm u)$, which in turn allows one to compute the semi-static hedging objects appearing in Section~\ref{sec:fourierrepresentation}.

\subsection{Affine processes on $\mathbb{S}_+^d$}

Let $M_d(\mathbb{R})$ denote the space of $d\times d$ real matrices and let $\bm I_d$ be the identity matrix. We write $\mathbb{S}^d$ for the space of symmetric $d\times d$ matrices, equipped with the Hilbert--Schmidt inner product

\begin{equation*}
    \langle \bm x,\bm y\rangle := \Tr(\bm x \bm y),
    \qquad
    \bm x,\bm y\in\mathbb{S}^d,
\end{equation*}
and the induced Frobenius norm $\|\bm x\|:=\sqrt{\Tr(\bm x^2)}$. The cone of symmetric positive semidefinite matrices is denoted by $\mathbb{S}_+^d$, and its interior by $\mathbb{S}_{++}^d$. We use the Loewner order: $\bm x\preceq \bm y$ means $\bm y-\bm x\in\mathbb{S}_+^d$, while $\bm x\prec \bm y$ means $\bm y-\bm x\in\mathbb{S}_{++}^d$. For $i,j\in\{1,\dots,d\}$, we denote by $\bm \e^{ij}$ the matrix unit with entry $1$ at $(i,j)$ and $0$ elsewhere.

Let $(\Omega,\mathcal F,\mathbb F,\mathbb Q)$ be a filtered probability space satisfying the usual conditions. We consider a time-homogeneous Markov process $\bm\Sigma=(\bm\Sigma_t)_{t\in[0,T]}$ with values in $\mathbb{S}_+^d$.

\begin{definition}[Affine process on $\mathbb{S}_+^d$]
\label{def:affine_process}
The process $\bm\Sigma$ is called an \emph{affine process} on $\mathbb{S}_+^d$ if it is stochastically continuous and if there exist functions

\begin{equation*}
    \phi:[0,T]\times \mathbb{S}_+^d\to\mathbb{R},
    \qquad
    \bm\psi:[0,T]\times \mathbb{S}_+^d\to\mathbb{S}_+^d,
\end{equation*}
such that, for every $t\in[0,T]$ and all $\bm u,\bm x\in\mathbb{S}_+^d$,
\begin{equation}
\label{eq:exponentialaffineform}
    \mathbb E_{\bm x}\left[\exp\big(-\Tr(\bm u \bm\Sigma_t)\big)\right]
    =
    \exp\big(-\phi(t,\bm u)-\Tr(\bm\psi(t,\bm u)\bm x)\big).
\end{equation}
\end{definition}

In the conservative case relevant for stochastic covariance modeling in finance, the infinitesimal generator of $\bm\Sigma$ is determined by an admissible parameter set $(\bm\alpha,\bm b,\mathcal B,m,\mu)$, where $\bm\alpha,\bm b\in\mathbb{S}_+^d$, $\mathcal B:\mathbb{S}^d\to\mathbb{S}^d$ is linear, $m$ is a state-independent jump measure on $\mathbb{S}_+^d\setminus\{\bm 0\}$, and $\mu$ is a state-dependent jump kernel on $\mathbb{S}_+^d\setminus\{\bm 0\}$; see \cite[Section~2]{Cuchiero2011}. We restrict throughout to the no-killing case.

\begin{theorem}[Affine transform formula on $\mathbb{S}_+^d$, cf. \cite{Cuchiero2011}]
\label{thm:AffineRiccati}
Let $\bm\Sigma$ be a conservative affine process on $\mathbb{S}_+^d$ with admissible parameters $(\bm\alpha,\bm b,\mathcal B,m,\mu)$. Then the functions $\phi$ and $\bm\psi$ in \eqref{eq:exponentialaffineform} are the unique solutions of the generalized Riccati system
\begin{align}
\label{eq:RiccatiODE_affinecone}
    \partial_t \phi(t,\bm u)
    &=
    \mathrm F(\bm\psi(t,\bm u)),
    & \phi(0,\bm u)&=0, \\
    \partial_t \bm\psi(t,\bm u)
    &=
    \mathrm R(\bm\psi(t,\bm u)),
    & \bm\psi(0,\bm u)&=\bm u,
\end{align}
where, in the no-killing case,
\begin{align}
\label{eq:F_affinecone}
    \mathrm F(\bm v)
    &=
    \Tr(\bm b\,\bm v)
    -
    \int_{\mathbb{S}_+^d\setminus\{\bm 0\}}
    \big(\e^{-\Tr(\bm v\bm\xi)}-1\big)\,m(\D\bm\xi), \\
\label{eq:R_affinecone}
    \mathrm R(\bm v)
    &=
    -2\bm v \bm\alpha \bm v
    +\mathcal B^\ast(\bm v)
    -
    \int_{\mathbb{S}_+^d\setminus\{\bm 0\}}
    \frac{\e^{-\Tr(\bm v\bm\xi)}-1+\Tr(\chi(\bm\xi)\bm v)}{1\wedge \|\bm\xi\|^2}\,\mu(\D\bm\xi),
\end{align}
and $\mathcal B^\ast$ denotes the adjoint of $\mathcal B$ with respect to the Hilbert--Schmidt inner product. Conversely, every admissible parameter set defines a unique conservative affine process on $\mathbb{S}_+^d$.
\end{theorem}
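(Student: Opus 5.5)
The plan is the standard route of Cuchiero et al.: first derive necessary conditions from the affine property, then use the martingale problem to show these conditions also characterise $\phi,\bm\psi$.

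\emph{Step 1: regularity and the semi-flow.} Write $P_t f(\bm x):=\mathbb E_{\bm x}[f(\bm\Sigma_t)]$ and $f_{\bm u}(\bm x):=\exp(-\Tr(\bm u\bm x))$. The Markov property applied to \eqref{eq:exponentialaffineform} gives the semi-flow identities
\[
\bm\psi(t+s,\bm u)=\bm\psi(t,\bm\psi(s,\bm u)),\qquad \phi(t+s,\bm u)=\phi(s,\bm u)+\phi(t,\bm\psi(s,\bm u)).
\]
Stochastic continuity gives continuity of $t\mapsto(\phi,\bm\psi)(t,\bm u)$. The next target is differentiability at $t=0$, which is the regularity result of Keller-Ressel--Schachermayer--Teichmann type extended to $\mathbb{S}_+^d$ in \cite{Cuchiero2011}; I would take it as given from that reference. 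Once it holds, set $\mathrm F(\bm u):=\partial_t\phi(0,\bm u)$ and $\mathrm R(\bm u):=\partial_t\bm\psi(0,\bm u)$. Differentiating the semi-flow identities in $s$ at $s=0$ then yields the Riccati system \eqref{eq:RiccatiODE_affinecone} with $\phi(0,\bm u)=0$ and $\bm\psi(0,\bm u)=\bm u$.

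\emph{Step 2: identification of $\mathrm F$ and $\mathrm R$.} The generator applied to $f_{\bm u}$ must be exponential-affine:
\[
\mathcal A f_{\bm u}(\bm x)=-\big(\mathrm F(\bm u)+\Tr(\mathrm R(\bm u)\bm x)\big)f_{\bm u}(\bm x).
\]
By Courrège-type arguments the generator is of Lévy--Khintchine form with characteristics that are affine in $\bm x$: a diffusion matrix linear in $\bm x$ (parametrised by $\bm\alpha$), a drift $\bm b+\mathcal B(\bm x)$, and jump measures $m(\D\bm\xi)+\Tr(\bm x\,\mu(\D\bm\xi))$. Applying this generator to $f_{\bm u}$ and collecting the constant and linear-in-$\bm x$ terms gives \eqref{eq:F_affinecone}--\eqref{eq:R_affinecone}. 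The quadratic term $-2\bm v\bm\alpha\bm v$ comes from the second derivative of $f_{\bm u}$ contracted against the linear diffusion coefficient. The adjoint $\mathcal B^\ast$ arises from the identity $\Tr(\bm u\,\mathcal B(\bm x))=\Tr(\mathcal B^\ast(\bm u)\bm x)$.

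\emph{Step 3: uniqueness and the converse.} For uniqueness, $\mathrm R$ is locally Lipschitz on $\mathbb{S}_+^d$, since the jump integrand is controlled by $1\wedge\|\bm\xi\|^2$. Local solutions are therefore unique, and invariance of $\mathbb{S}_+^d$ together with global existence follow from quasi-monotonicity of $\mathrm R$ on the cone, which gives global uniqueness. For the converse, solve the Riccati system for admissible parameters and define $P_t f_{\bm u}$ through \eqref{eq:exponentialaffineform}. One then shows that $\bm x\mapsto\exp(-\phi(t,\bm u)-\Tr(\bm\psi(t,\bm u)\bm x))$ is the Laplace transform of a sub-probability measure, using an approximation by compound-Poisson and Wishart-type building blocks. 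Conservativeness holds in the no-killing case. The Chapman--Kolmogorov property follows from the semi-flow identities, and Kolmogorov extension then produces the process.

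I expect the main obstacle to be the regularity step (differentiability at $t=0$) together with the positivity argument in the converse. Both are genuinely hard, and I would cite \cite{Cuchiero2011} for them rather than reprove them.
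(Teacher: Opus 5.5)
Your outline follows the same route as the result the paper invokes; the paper gives no proof of its own beyond citing Cuchiero et al. Deferring the regularity at $t=0$ and the positivity part of the converse to that reference is therefore in line with the paper. The step you argue yourself, however, has a genuine gap.

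\textbf{The Lipschitz claim in Step~3 is false.} You assert that $\mathrm R$ is locally Lipschitz on all of $\mathbb{S}_+^d$ because the jump integrand is controlled by $1\wedge\|\bm\xi\|^2$. That bound only handles small jumps. For $\|\bm\xi\|>1$, admissibility requires $\mu$ to have finite mass, not a finite first moment. For $\bm v,\bm w\in\mathbb{S}_+^d$ you only get $|\e^{-\Tr(\bm v\bm\xi)}-\e^{-\Tr(\bm w\bm\xi)}|\le\min\bigl(1,\|\bm v-\bm w\|\,\|\bm\xi\|\bigr)$. This yields Lipschitz bounds on $\mathbb{S}_{++}^d$, where $\Tr(\bm v\bm\xi)\ge\lambda_{\min}(\bm v)\|\bm\xi\|$ damps the factor $\|\bm\xi\|$, but not at points of $\partial\mathbb{S}_+^d$.

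Concretely, take $d=1$, $\bm\alpha=\bm b=0$, $m=0$, $\mathcal B=0$, $\chi$ vanishing outside the unit ball, and $\mu(\D\xi)=\xi^{-1-\beta}\mathbbm{1}_{\{\xi>1\}}\D\xi$ with $\beta\in(0,1)$. These parameters are admissible. Then $\mathrm R(v)=\int_1^\infty(1-\e^{-v\xi})\xi^{-1-\beta}\D\xi=\beta^{-1}\Gamma(1-\beta)\,v^\beta+O(v)$ as $v\downarrow 0$. Since $\int_{0+}\D v/\mathrm R(v)<\infty$, the equation $\partial_t\psi=\mathrm R(\psi)$, $\psi(0)=0$, has both the zero solution and a strictly positive one.

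\textbf{No killing does not give conservativeness.} The same example refutes your sentence that conservativeness holds in the no-killing case. The process is a continuous-state branching process with infinite-mean jumps, and it explodes with positive probability by Grey's criterion. Its transform at $u=0$ is the limit of the solutions started at $v\downarrow 0$, i.e.\ the positive solution. Hence $\mathbb P_x(\text{no explosion by } t)=\e^{-x\psi(t,0)}<1$.

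\textbf{What is missing} is a separate treatment of the boundary of the cone:
\begin{itemize}
\item Obtain uniqueness from local Lipschitz continuity on $\mathbb{S}_{++}^d$, together with invariance of $\mathbb{S}_{++}^d$ under the flow.
\item Identify $\bm\psi(t,\bm u)$ for $\bm u\in\partial\mathbb{S}_+^d$ as the limit of interior solutions, using continuity of the Laplace transform up to the boundary.
\item Then either prove uniqueness at boundary points by a separate argument, or impose a large-jump moment condition such as $\int_{\{\|\bm\xi\|>1\}}\|\bm\xi\|\,\Tr\mu(\D\bm\xi)<\infty$, which makes $\mathrm R$ locally Lipschitz on the whole cone. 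The example shows that at $\bm u=\bm 0$ non-uniqueness goes hand in hand with explosion, so the conservativeness hypothesis must enter the separate argument.
\end{itemize}
In particular, the converse as phrased in the statement (every admissible parameter set gives a \emph{conservative} process) cannot be proved without such an extra condition.

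\textbf{Smaller slips.}
\begin{itemize}
\item To get the forward Riccati equations, differentiate $\bm\psi(s+t,\bm u)=\bm\psi(s,\bm\psi(t,\bm u))$ in $s$. Differentiating your ordering in $s$ gives the backward equation, which needs differentiability in $\bm u$.
\item In the converse, what must be shown to be a Laplace transform is $\bm u\mapsto\exp(-\phi(t,\bm u)-\Tr(\bm\psi(t,\bm u)\bm x))$ for fixed $\bm x$, not a map in $\bm x$.
\item In Step~2, the Wishart form of the diffusion part comes from the boundary analysis in the reference, not from the L\'evy--Khintchine form alone. This means a single $\bm\alpha$ entering through $x_{ik}\alpha_{jl}+\cdots$, together with $\bm b\succeq(d-1)\bm\alpha$.
\end{itemize}
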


In particular, if $m=\mu=0$ and $\mathcal B(\bm x)=\bm M\bm x+\bm x\bm M^\top$ for some $\bm M\in M_d(\mathbb R)$, then $\bm\Sigma$ is a continuous affine diffusion of Wishart type. Writing $\bm\alpha=\bm A^\top\bm A$ with $\bm A\in M_d(\mathbb R)$, its dynamics take the form
\begin{equation}
\label{eq:wishartSigma}
    \D\bm\Sigma_t
    =
    \big(\bm\Omega+\bm M\bm\Sigma_t+\bm\Sigma_t\bm M^\top\big)\,\D t
    +
    \sqrt{\bm\Sigma_t}\,\D\bm W_t\,\bm A
    +
    \bm A^\top \D\bm W_t^\top \sqrt{\bm\Sigma_t},
\end{equation}
where $\bm\Omega=\bm b\in\mathbb{S}_+^d$ and $\bm W$ is a $d\times d$ matrix Brownian motion.

\subsection{Affine stochastic covariance models}

We now embed the affine covariance process $\bm\Sigma$ into a multivariate asset-pricing model. Let $\bm S=(S^1,\dots,S^d)^\top$ denote the vector of discounted asset prices and let $\bm Y=\log \bm S$ be the log-price process. We say that $(\bm Y,\bm\Sigma)$ is an \emph{Affine Stochastic Covariance} model if the joint process is affine on the state space

\begin{equation*}
    D:=\mathbb{R}^d\times\mathbb{S}_+^d.
\end{equation*}

\begin{assumption}[ASC transform structure]
\label{ass:affineSV}
There exist functions

\begin{equation*}
    \phi:[0,T]\times\mathbb{C}^d\times(\mathbb{S}^d+i\mathbb{S}^d)\to\mathbb{C},
    \qquad
    \bm\Psi:[0,T]\times\mathbb{C}^d\times(\mathbb{S}^d+i\mathbb{S}^d)\to \mathbb{S}^d+i\mathbb{S}^d,
\end{equation*}
such that, for every $0\le t\le T$, every $\bm u\in\mathbb{C}^d$, every $\bm V\in \mathbb{S}^d+i\mathbb{S}^d$ in the domain of the transform, and $\tau:=T-t$,
\begin{equation}
\label{exponentialaffine}
    \mathbb E\left[
        \exp\big(
            \bm u^\top \bm Y_T + \Tr(\bm V\bm\Sigma_T)
        \big)
        \,\middle|\,
        \mathcal F_t
    \right]
    =
    \exp\big(
        \phi(\tau,\bm u,\bm V)
        + \bm u^\top \bm Y_t
        + \Tr(\bm\Psi(\tau,\bm u,\bm V)\bm\Sigma_t)
    \big).
\end{equation}
\end{assumption}

The exponential basis martingales introduced in Section~\ref{sec:fourierrepresentation} are therefore given by
\begin{equation}
\label{affinerepresentation}
    H_t(\bm u)
    :=
    \mathbb E\left[\e^{\bm u^\top \bm Y_T}\mid\mathcal F_t\right]
    =
    \exp\big(
        \phi(T-t,\bm u,\bm 0)
        + \bm u^\top \bm Y_t
        + \Tr(\bm\Psi(T-t,\bm u,\bm 0)\bm\Sigma_t)
    \big).
\end{equation}
Hence, once $\phi$ and $\bm\Psi$ are known, the Fourier objects of Section~\ref{sec:fourierrepresentation} become explicit.

Two important examples are the continuous Wishart model and the jump-driven multivariate BNS model.

\begin{example}[Wishart affine stochastic covariance model]
\label{WASCExample}
The continuous Wishart affine stochastic covariance model is defined by
\begin{align}
\label{eq:WASC_Y}
    \D\bm Y_t
    &=
    -\frac12 \diag(\bm\Sigma_t)\,\D t
    +
    \sqrt{\bm\Sigma_t}\,\D\bm B_t, \\
\label{eq:WASC_Sigma}
    \D\bm\Sigma_t
    &=
    \big(\bm\Omega+\bm M\bm\Sigma_t+\bm\Sigma_t\bm M^\top\big)\,\D t
    +
    \sqrt{\bm\Sigma_t}\,\D\bm W_t\,\bm A
    +
    \bm A^\top \D\bm W_t^\top \sqrt{\bm\Sigma_t},
\end{align}
where $\bm\Omega\in\mathbb{S}_+^d$, $\bm M,\bm A\in M_d(\mathbb R)$, and the Brownian drivers satisfy the vector-correlation specification
\begin{equation}
\label{eq:WASCcorr}
    \D\bm B_t
    =
    \D\bm W_t\,\bm\rho
    +
    \sqrt{1-\bm\rho^\top\bm\rho}\,\D\bm Z_t,
    \qquad
    \bm\rho\in\mathbb R^d,
    \qquad
    \bm\rho^\top\bm\rho\le 1,
\end{equation}
with $\bm Z$ a $d$-dimensional Brownian motion independent of $\bm W$. In order to ensure that $\bm\Sigma_t\in\mathbb{S}_{++}^d$ for all $t$, one imposes the standard Wishart admissibility condition $\bm\Omega\succeq (d-1)\bm A^\top\bm A$; see \cite{Fonseca2007}.

For this model, the transform coefficients in \eqref{exponentialaffine} solve the Riccati system
\begin{align}
\label{eq:WASC_Riccati_Psi}
    \partial_\tau \bm\Psi(\tau,\bm u,\bm V)
    &=
    \bm\Psi(\tau,\bm u,\bm V)\,\bm A^\top\bm A\,\bm\Psi(\tau,\bm u,\bm V)
    +\bm\Psi(\tau,\bm u,\bm V)\big(\bm M+\bm A^\top\bm\rho\,\bm u^\top\big) \nonumber\\
    &\quad
    +\big(\bm M^\top+\bm u\,\bm\rho^\top\bm A\big)\bm\Psi(\tau,\bm u,\bm V)
    +\frac12\big(\bm u\bm u^\top-\diag(\bm u)\big), \\
\label{eq:WASC_Riccati_phi}
    \partial_\tau \phi(\tau,\bm u,\bm V)
    &=
    \Tr\big(\bm\Omega\,\bm\Psi(\tau,\bm u,\bm V)\big),
\end{align}
with initial conditions $\bm\Psi(0,\bm u,\bm V)=\bm V$ and $\phi(0,\bm u,\bm V)=0$.

A key advantage of the Wishart model is that \eqref{eq:WASC_Riccati_Psi} can be linearized. Define the $2d\times 2d$ Hamiltonian matrix
\begin{equation}
\label{eq:WASC_Hamiltonian}
    \bm{\mathcal H}(\bm u)
    :=
    \begin{pmatrix}
        \bm M+\bm A^\top\bm\rho\,\bm u^\top
        &
        -2\bm A^\top\bm A
        \\[1mm]
        \frac12\big(\bm u\bm u^\top-\diag(\bm u)\big)
        &
        -\big(\bm M^\top+\bm u\,\bm\rho^\top\bm A\big)
    \end{pmatrix}.
\end{equation}
If

\begin{align*}
    \exp\big(\tau \bm{\mathcal H}(\bm u)\big)
    =
    \begin{pmatrix}
        \bm\Theta_{11}(\tau,\bm u) & \bm\Theta_{12}(\tau,\bm u) \\
        \bm\Theta_{21}(\tau,\bm u) & \bm\Theta_{22}(\tau,\bm u)
    \end{pmatrix},
\end{align*}
then
\begin{equation}
\label{eq:WASC_Psi_closedform}
    \bm\Psi(\tau,\bm u,\bm V)
    =
    \big(
        \bm\Theta_{21}(\tau,\bm u)+\bm\Theta_{22}(\tau,\bm u)\bm V
    \big)
    \big(
        \bm\Theta_{11}(\tau,\bm u)+\bm\Theta_{12}(\tau,\bm u)\bm V
    \big)^{-1},
\end{equation}
whenever the inverse exists. In particular, for the basis martingales $H(\bm u)$, one has

\begin{equation*}
    \bm\Psi(\tau,\bm u,\bm 0)
    =
    \bm\Theta_{21}(\tau,\bm u)\bm\Theta_{11}(\tau,\bm u)^{-1}.
\end{equation*}
\end{example}

\begin{example}[Multivariate BNS model]
\label{BNSExample}
An affine jump specification is provided by the multivariate Barndorff--Nielsen and Shephard model. Let $\bm L$ be a matrix subordinator on $\mathbb{S}_+^d$, let $\mathcal R:\mathbb{S}^d\to\mathbb R^d$ be a linear leverage map, and let $\bm B$ be a $d$-dimensional Brownian motion independent of $\bm L$. The model is defined by
\begin{align}
\label{eq:BNS_Y}
    \D\bm Y_t
    &=
    -\frac12 \diag(\bm\Sigma_t)\,\D t
    -\bm\kappa\,\D t
    +
    \sqrt{\bm\Sigma_t}\,\D\bm B_t
    +
    \mathcal R(\D\bm L_t), \\
\label{eq:BNS_Sigma}
    \D\bm\Sigma_t
    &=
    \big(\bm M\bm\Sigma_t+\bm\Sigma_t\bm M^\top\big)\,\D t
    +
    \D\bm L_t,
\end{align}
where $\bm\kappa\in\mathbb R^d$ is chosen so that $\bm S=\e^{\bm Y}$ is a local martingale.
For the variance-optimal hedging problem we strengthen this requirement and assume throughout the BNS applications that $\bm S$ is in fact a square-integrable true $\mathbb Q$-martingale on $[0,T]$, i.e.
\[
    \mathbb E_{\mathbb Q}\!\left[\sup_{0\le t\le T}\|\bm S_t\|^2\right]<\infty,
    \qquad
    \mathbb E_{\mathbb Q}\!\left[\|\bm S_T\|^2\right]<\infty,
\]
and that the Laplace strips used later lie in the domain where the affine transform \eqref{eq:BNS_phi} is finite. Without these $L^2$-martingale assumptions the GKW decomposition and the variance-optimal objective are not well posed.

In the common compound-Poisson Wishart specification, $\bm L$ has Lévy measure

\begin{equation*}
    m_{\bm L}(\D\bm X)
    =
    \lambda\,F_{W_d(n,\bm\Theta)}(\D\bm X),
\end{equation*}
where $F_{W_d(n,\bm\Theta)}$ denotes the Wishart law with shape parameter $n$ and scale matrix $\bm\Theta\in\mathbb S_{++}^d$. Then, for every matrix $\bm R$ such that $\bm I_d-2\bm R\bm\Theta\in\mathbb S_{++}^d$,
\begin{equation}
\label{eq:WishartMGF}
    \int_{\mathbb S_+^d} \e^{\Tr(\bm R\bm X)}\,m_{\bm L}(\D\bm X)
    =
    \lambda\,
    \det(\bm I_d-2\bm R\bm\Theta)^{-n/2}.
\end{equation}

In this model, the matrix Riccati equation is linear:
\begin{equation}
\label{eq:BNS_Psi}
    \partial_\tau \bm\Psi(\tau,\bm u,\bm V)
    =
    \bm\Psi(\tau,\bm u,\bm V)\bm M
    +
    \bm M^\top \bm\Psi(\tau,\bm u,\bm V)
    +
    \frac12\big(\bm u\bm u^\top-\diag(\bm u)\big),
    \qquad
    \bm\Psi(0,\bm u,\bm V)=\bm V.
\end{equation}
The scalar function $\phi$ is then obtained from the Lévy exponent of $\bm L$:
\begin{equation}
\label{eq:BNS_phi}
    \partial_\tau \phi(\tau,\bm u,\bm V)
    =
    \int_{\mathbb S_+^d}
    \left(
        \e^{\bm u^\top \mathcal R(\bm X) + \Tr(\bm\Psi(\tau,\bm u,\bm V)\bm X)}
        -1
        -\bm u^\top \mathcal R(\bm X)
    \right)
    m_{\bm L}(\D\bm X)
    -
    \bm u^\top \bm\kappa.
\end{equation}
Hence the transform remains fully explicit once \eqref{eq:WishartMGF} is available.
\end{example}

\subsection{Semi-static hedging in affine stochastic covariance models}

We now specialize the semi-static variance-optimal hedging formulas of Section~\ref{sec:fourierrepresentation} to affine stochastic covariance models. Let

\begin{equation*}
    \bm X_t
    :=
    \big(\bm Y_t,\operatorname{vec}(\bm\Sigma_t)\big)
    \in
    \mathbb R^d\times\mathbb R^{d^2},
\end{equation*}
and let $\tau:=T-t$. For $\bm u\in\mathcal S(\bm R)$, define the basis martingale

\begin{equation*}
    H_t(\bm u)
    =
    \exp\big(
        \phi(\tau,\bm u,\bm 0)
        + \bm u^\top \bm Y_t
        + \Tr(\bm\Psi(\tau,\bm u,\bm 0)\bm\Sigma_t)
    \big).
\end{equation*}
In order to compute the GKW integrands $\bm\theta(\bm u)$ and the residual brackets entering $A$, $\bm B$, and $\bm C$, it is enough to identify the local covariation kernels of $\bm S$, $H(\bm u)$, and $\bm\Sigma$.

Let $\bm C_t$ denote the predictable covariance matrix of the continuous martingale part of $\bm X$, written in block form as
\begin{equation}
\label{eq:Ct_block}
    \bm C_t
    =
    \begin{pmatrix}
        \bm C_t^{\bm Y\bm Y} & \bm C_t^{\bm Y\Sigma} \\
        (\bm C_t^{\bm Y\Sigma})^\top & \bm C_t^{\Sigma\Sigma}
    \end{pmatrix},
\end{equation}
where the blocks are defined by
\begin{align}
\label{eq:QV_Y}
    \D\llangle \bm Y,\bm Y\rrangle_t
    &=
    \bm C_t^{\bm Y\bm Y}\,\D t, \\
\label{eq:QV_Cross}
    \D\langle Y_t^r,(\operatorname{vec}\bm\Sigma_t)_{(ij)}\rangle
    &=
    \big(\bm C_t^{\bm Y\Sigma}\big)_{r,(ij)}\,\D t,
    \qquad r,i,j\in\{1,\dots,d\}, \\
\label{eq:QV_Sigma}
    \D\llangle \operatorname{vec}(\bm\Sigma),\operatorname{vec}(\bm\Sigma)\rrangle_t
    &=
    \bm C_t^{\Sigma\Sigma}\,\D t.
\end{align}
If jumps are present, let $\nu_t(\D\bm\xi_Y,\D\bm\xi_\Sigma)\,\D t$ denote the predictable compensator of the jump measure of $(\bm Y,\bm\Sigma)$.

For $\bm u,\bm u_1,\bm u_2\in\mathcal S(\bm R)$, define

\begin{equation*}
    \mathcal E_t(\bm u;\bm\xi_Y,\bm\xi_\Sigma)
    :=
    \bm u^\top \bm\xi_Y
    +
    \Tr\big(\bm\Psi(\tau,\bm u,\bm 0)\bm\xi_\Sigma\big).
\end{equation*}
Then It\^o's formula yields the following local covariation kernels:
\begin{align}
\label{eq:K_HH_general}
    \mathcal K_t^{HH}(\bm u_1,\bm u_2)
    &:=
    \frac{\D}{\D t}\langle H(\bm u_1),H(\bm u_2)\rangle_t \nonumber\\
    &=
    H_t(\bm u_1)H_t(\bm u_2)
    \Bigg[
        \Gamma_t^c(\bm u_1,\bm u_2)
        +
        \int_D
        \big(\e^{\mathcal E_t(\bm u_1;\bm\xi_Y,\bm\xi_\Sigma)}-1\big)
        \big(\e^{\mathcal E_t(\bm u_2;\bm\xi_Y,\bm\xi_\Sigma)}-1\big)
        \nu_t(\D\bm\xi_Y,\D\bm\xi_\Sigma)
    \Bigg], \\
\label{eq:Gamma_c_general}
    \Gamma_t^c(\bm u_1,\bm u_2)
    &=
    \begin{pmatrix}
        \bm u_1 \\
        \operatorname{vec}\big(\bm\Psi(\tau,\bm u_1,\bm 0)\big)
    \end{pmatrix}^{\top}
    \bm C_t
    \begin{pmatrix}
        \bm u_2 \\
        \operatorname{vec}\big(\bm\Psi(\tau,\bm u_2,\bm 0)\big)
    \end{pmatrix}, \\
\label{eq:K_SH_general}
    \bm{\mathcal K}_t^{\bm S H}(\bm u)
    &:=
    \frac{\D}{\D t}\langle \bm S,H(\bm u)\rangle_t \nonumber\\
    &=
    \diag(\bm S_{t-})\,H_{t-}(\bm u)
    \Bigg[
        \bm C_t^{\bm Y\bm Y}\bm u
        +
        \bm C_t^{\bm Y\Sigma}\operatorname{vec}\big(\bm\Psi(\tau,\bm u,\bm 0)\big)
        \\
        &\quad +
        \int_D
        \big(\e^{\bm\xi_Y}-\bm 1\big)
        \big(\e^{\mathcal E_t(\bm u;\bm\xi_Y,\bm\xi_\Sigma)}-1\big)
        \nu_t(\D\bm\xi_Y,\D\bm\xi_\Sigma)
    \Bigg], \\
\label{eq:K_SS_general}
    \bm{\mathcal K}_t^{\bm S\bm S}
    &:=
    \frac{\D}{\D t}\llangle \bm S,\bm S\rrangle_t \nonumber\\
    &=
    \diag(\bm S_{t-})\,\bm C_t^{\bm Y\bm Y}\,\diag(\bm S_{t-})
    +
    \diag(\bm S_{t-})
    \left(
        \int_D\!
        \big(\!\e^{\bm\xi_Y}-\!\bm 1\big)\big(\e^{\bm\xi_Y}-\bm 1\!\big)^\top
        \nu_t(\D\bm\xi_Y,\D\bm\xi_\Sigma)
    \right)\!
    \diag(\bm S_{t-}).
\end{align}
Consequently, the GKW hedge ratio of the basis martingale $H(\bm u)$ is
\begin{equation}
\label{eq:theta_u_explicit}
    \bm\theta_t(\bm u)
    =
    \big(\bm{\mathcal K}_t^{\bm S\bm S}\big)^{\dagger}\,
    \bm{\mathcal K}_t^{\bm S H}(\bm u).
\end{equation}
The corresponding residual-bracket density is therefore
\begin{equation}
\label{eq:C_explicit}
    \mathcal C_t(\bm u_1,\bm u_2)
    :=
    \frac{\D}{\D t}\langle L(\bm u_1),L(\bm u_2)\rangle_t
    =
    \mathcal K_t^{HH}(\bm u_1,\bm u_2)
    -
    \bm\theta_t(\bm u_1)^\top
    \bm{\mathcal K}_t^{\bm S\bm S}
    \bm\theta_t(\bm u_2).
\end{equation}
Integrating $\mathcal C_t(\bm u_1,\bm u_2)$ over time and over the Fourier contours yields exactly the quantities appearing in \eqref{eq:importantFourier}.

We now record the resulting formulas in the two benchmark affine models introduced above.

\begin{example}[Wishart affine stochastic covariance: explicit hedging kernels]
\label{ex:WASC_Hedging}
Consider the continuous Wishart model \eqref{eq:WASC_Y}--\eqref{eq:WASCcorr}. Since the model is continuous, the jump kernel vanishes identically. The continuous covariance blocks in \eqref{eq:Ct_block} are then given by
\begin{align}
\label{eq:WASC_CYY}
    \bm C_t^{\bm Y\bm Y}
    &=
    \bm\Sigma_t, \\
\label{eq:WASC_CYSigma}
    \big(\bm C_t^{\bm Y\Sigma}\big)_{r,(ij)}
    &=
    (\bm\Sigma_t)_{ri}(\bm A^\top\bm\rho)_j
    +
    (\bm\Sigma_t)_{rj}(\bm A^\top\bm\rho)_i,
    \qquad r,i,j\in\{1,\dots,d\}, \\
\label{eq:WASC_CSigmaSigma}
    \big(\bm C_t^{\Sigma\Sigma}\big)_{(ij),(kl)}
    &=
    (\bm\Sigma_t)_{ik}\bm Q_{jl}
    +
    (\bm\Sigma_t)_{il}\bm Q_{jk}
    +
    (\bm\Sigma_t)_{jk}\bm Q_{il}
    +
    (\bm\Sigma_t)_{jl}\bm Q_{ik},
    \qquad
    \bm Q:=\bm A^\top\bm A.
\end{align}

Let $\bm\Psi_t(\bm u):=\bm\Psi(T-t,\bm u,\bm 0)$. Since $\bm\Psi_t(\bm u)\in\mathbb S^d$, the cross block simplifies to
\begin{equation}
\label{eq:WASC_cross_simplified}
    \bm C_t^{\bm Y\Sigma}\operatorname{vec}\big(\bm\Psi_t(\bm u)\big)
    =
    2\bm\Sigma_t \bm\Psi_t(\bm u)\bm A^\top\bm\rho.
\end{equation}
Hence
\begin{equation}
\label{eq:WASC_KSH}
    \bm{\mathcal K}_t^{\bm S H}(\bm u)
    =
    \diag(\bm S_t)\,H_t(\bm u)\,
    \bm\Sigma_t\big(\bm u+2\bm\Psi_t(\bm u)\bm A^\top\bm\rho\big),
\end{equation}
while
\begin{equation}
\label{eq:WASC_KSS}
    \bm{\mathcal K}_t^{\bm S\bm S}
    =
    \diag(\bm S_t)\,\bm\Sigma_t\,\diag(\bm S_t).
\end{equation}
Assuming $\bm\Sigma_t\in\mathbb S_{++}^d$, \eqref{eq:theta_u_explicit} yields
\begin{equation}
\label{eq:WASC_theta}
    \bm\theta_t(\bm u)
    =
    H_t(\bm u)\,\diag(\bm S_t)^{-1}
    \big(\bm u+2\bm\Psi_t(\bm u)\bm A^\top\bm\rho\big).
\end{equation}

For $\bm u_1,\bm u_2\in\mathcal S(\bm R)$, let $\bm\Psi_{1,t}:=\bm\Psi_t(\bm u_1)$ and $\bm\Psi_{2,t}:=\bm\Psi_t(\bm u_2)$. Then
\begin{align}
\label{eq:WASC_Gamma}
    \Gamma_t^c(\bm u_1,\bm u_2)
    &=
    \bm u_1^\top \bm\Sigma_t \bm u_2
    +
    2\bm u_1^\top \bm\Sigma_t \bm\Psi_{2,t}\bm A^\top\bm\rho
    +
    2\bm u_2^\top \bm\Sigma_t \bm\Psi_{1,t}\bm A^\top\bm\rho \nonumber\\
    &\quad
    +
    4\Tr\big(\bm\Psi_{1,t}\bm\Sigma_t\bm\Psi_{2,t}\bm A^\top\bm A\big).
\end{align}
Substituting \eqref{eq:WASC_theta} into \eqref{eq:C_explicit}, all terms involving only the asset-spanned component cancel, and one obtains
\begin{equation}
\label{eq:WASC_residual_density}
    \mathcal C_t(\bm u_1,\bm u_2)
    =
    4 H_t(\bm u_1)H_t(\bm u_2)\,
    \Tr\Big(
        \bm\Psi_{1,t}\bm\Sigma_t\bm\Psi_{2,t}\,
        \bm A^\top(\bm I_d-\bm\rho\bm\rho^\top)\bm A
    \Big).
\end{equation}
Thus the unhedgeable component is precisely the part of the covariance noise carried by directions orthogonal to the return driver $\bm\rho$.
\end{example}

\begin{example}[Multivariate BNS model: explicit hedging kernels]
\label{ex:BNS_Hedging}
Consider the multivariate BNS model \eqref{eq:BNS_Y}--\eqref{eq:BNS_Sigma}. In this case, $\bm\Sigma$ has no continuous martingale part, so the continuous covariance blocks satisfy
\begin{equation}
\label{eq:BNS_cont_blocks}
    \bm C_t^{\bm Y\bm Y}=\bm\Sigma_t,
    \qquad
    \bm C_t^{\bm Y\Sigma}=\bm 0,
    \qquad
    \bm C_t^{\Sigma\Sigma}=\bm 0.
\end{equation}
All additional covariance contributions therefore come from jumps. Applying It\^o's formula with jumps and the predictable compensator of the jump measure yields explicit local predictable-covariation kernels in the BNS model.

Let

\begin{equation*}
    \mathcal E_t(\bm u;\bm X)
    :=
    \bm u^\top \mathcal R(\bm X)
    +
    \Tr\big(\bm\Psi(T-t,\bm u,\bm 0)\bm X\big),
    \qquad
    \bm X\in\mathbb S_+^d.
\end{equation*}
Since the joint jump measure of $(\bm Y,\bm\Sigma)$ is supported on the graph of $\mathcal R$, the kernel formulas reduce to integrals with respect to $m_{\bm L}$. Define
\begin{align}
\label{eq:BNS_Vjump}
    \bm V_{\mathrm{jump}}
    &:=
    \int_{\mathbb S_+^d}
    \big(\e^{\mathcal R(\bm X)}-\bm 1\big)\big(\e^{\mathcal R(\bm X)}-\bm 1\big)^\top
    \,m_{\bm L}(\D\bm X), \\
\label{eq:BNS_Ju}
    \bm J_t(\bm u)
    &:=
    \int_{\mathbb S_+^d}
    \big(\e^{\mathcal R(\bm X)}-\bm 1\big)
    \big(\e^{\mathcal E_t(\bm u;\bm X)}-1\big)
    \,m_{\bm L}(\D\bm X).
\end{align}
Then
\begin{align}
\label{eq:BNS_KSS}
    \bm{\mathcal K}_t^{\bm S\bm S}
    &=
    \diag(\bm S_{t-})
    \big(\bm\Sigma_t+\bm V_{\mathrm{jump}}\big)
    \diag(\bm S_{t-}), \\
\label{eq:BNS_KSH}
    \bm{\mathcal K}_t^{\bm S H}(\bm u)
    &=
    \diag(\bm S_{t-})\,H_{t-}(\bm u)
    \big(\bm\Sigma_t\bm u+\bm J_t(\bm u)\big), \\
\label{eq:BNS_KHH}
    \mathcal K_t^{HH}(\bm u_1,\bm u_2)
    &=
    H_{t-}(\bm u_1)H_{t-}(\bm u_2)
    \Bigg[
        \bm u_1^\top \bm\Sigma_t \bm u_2
        +
        \int_{\mathbb S_+^d}
        \big(\e^{\mathcal E_t(\bm u_1;\bm X)}-1\big)
        \big(\e^{\mathcal E_t(\bm u_2;\bm X)}-1\big)
        m_{\bm L}(\D\bm X)
    \Bigg].
\end{align}
Therefore,
\begin{equation}
\label{eq:BNS_theta}
    \bm\theta_t(\bm u)
    =
    H_{t-}(\bm u)\,
    \diag(\bm S_{t-})^{-1}
    \big(\bm\Sigma_t+\bm V_{\mathrm{jump}}\big)^{-1}
    \big(\bm\Sigma_t\bm u+\bm J_t(\bm u)\big),
\end{equation}
and the residual covariance density is
\begin{equation}
\label{eq:BNS_residual}
    \mathcal C_t(\bm u_1,\bm u_2)
    =
    \mathcal K_t^{HH}(\bm u_1,\bm u_2)
    -
    \bm\theta_t(\bm u_1)^\top
    \bm{\mathcal K}_t^{\bm S\bm S}
    \bm\theta_t(\bm u_2).
\end{equation}
All jump integrals in \eqref{eq:BNS_Vjump}--\eqref{eq:BNS_residual} are computable in closed form under the Wishart jump law by repeated use of \eqref{eq:WishartMGF}, evaluated at appropriately shifted complex arguments. Hence the Fourier representation of $A$, $\bm B$, and $\bm C$ remains fully tractable in the BNS setting as well.
\end{example}

Combining \eqref{eq:theta_u_explicit} and \eqref{eq:C_explicit} with the Fourier formulas of Section~\ref{sec:fourierrepresentation} yields rigorous explicit semi-static hedging formulas in affine stochastic covariance models. In the continuous Wishart case, once $\phi$ and $\bm\Psi$ are known, the basis hedges $\bm\theta(\bm u)$ and the residual-bracket kernel $\mathcal C_t(\bm u_1,\bm u_2)$ are explicit, and the objects $A$, $\bm B$, and $\bm C$ follow by the contour integrations in \eqref{eq:importantFourier}. In jump-driven affine models, the same affine structure yields explicit predictable bracket kernels and therefore rigorous representations of the outer inputs, although some of the resulting expectations may still require numerical evaluation.

\section{Applications to Semi-Static Hedging of Covariance Swaps}
\label{sec:covswap-semistatic}

The preliminary numerical results of Section~\ref{preliminaryresults} already indicate the central structural message of the present paper: once the auxiliary instruments are chosen in a manner consistent with the spanning formulas of Section~\ref{sec:replicationtheory}, the residual hedging error can be reduced substantially relative to the purely dynamic benchmark. We now turn to explicit stochastic covariance models and show how the general Fourier--GKW machinery developed in Sections~\ref{sec:VO-framework}--\ref{sec:fourierrepresentation} yields tractable formulas for covariance swaps. In particular, we specialize the abstract semi-static variance-optimal hedging problem to the two model classes introduced in Section~\ref{sec:affinecase}, namely the continuous Wishart affine stochastic covariance model and the multivariate Barndorff--Nielsen and Shephard model with matrix-valued jump covariance. Let $i,j\in\{1,\dots,d\}$, and let $\bm Y=\log \bm S$. A covariance swap written on the pair $(S^i,S^j)$ pays at maturity $T$
\begin{equation}
\label{eq:app_covswap_payoff}
    H_T^{\mathrm{cov}}
    :=
    \langle Y^i,Y^j\rangle_{0,T}
    -
    K_{\mathrm{cov}},
\end{equation}
where $K_{\mathrm{cov}}$ is the strike. Under the pricing measure $\mathbb Q$, the fair strike is
\begin{equation}
\label{eq:app_covswap_strike}
    K_{\mathrm{cov}}
    :=
    \mathbb E\big[\langle Y^i,Y^j\rangle_{0,T}\big].
\end{equation}
As shown in Example~\ref{cov-swapreplicationQuanto} and Example~\ref{cov-swapreplicationlogspreads}, the floating leg $\langle Y^i,Y^j\rangle_{0,T}$ admits semi-static decompositions involving dynamic trading in the underlying assets together with static positions in log-contracts, product options, or spread-type instruments. The present subsection complements those structural replication identities by computing the corresponding variance-optimal hedging quantities in explicit stochastic covariance models.

For notational convenience, we introduce the symmetric matrix
\begin{equation}
\label{eq:app_Eij}
    \bm \e^{ij}
    :=
    \frac{1}{2}\big(\bm \e^{ij}+\bm \e^{ji}\big)\in\mathbb S^d,
\end{equation}
so that

\begin{equation*}
    \langle Y^i,Y^j\rangle_{0,T}
    =
    \int_0^T \Tr\big(\bm \e^{ij}\bm\Sigma_t\big)\,\D t
\end{equation*}
whenever the model is continuous. We also write

\begin{equation*}
    \bm a^{ij}:=\operatorname{vec}(\bm \e^{ij})\in\mathbb R^{d^2}.
\end{equation*}

\subsubsection{Covariance swaps in the Wishart affine stochastic covariance model}
\label{app:wasc-covswap}

We first consider the continuous Wishart affine stochastic covariance model of Example~\ref{WASCExample}. Under discounted prices, the dynamics are
\begin{align}
\label{eq:app_WASC_dynamics}
    \D\bm Y_t
    &=
    -\frac12 \diag(\bm\Sigma_t)\,\D t
    +
    \sqrt{\bm\Sigma_t}\,\D\widetilde{\bm B}_t, \nonumber\\
    \D\bm\Sigma_t
    &=
    \big(
        \bm\Omega+\bm M\bm\Sigma_t+\bm\Sigma_t\bm M^\top
    \big)\,\D t
    +
    \sqrt{\bm\Sigma_t}\,\D\bm W_t\,\bm A
    +
    \bm A^\top \D\bm W_t^\top \sqrt{\bm\Sigma_t},
\end{align}
with
\begin{equation}
\label{eq:app_WASC_corr}
    \D\widetilde{\bm B}_t
    =
    \D\bm W_t\,\bm\rho
    +
    \sqrt{1-\bm\rho^\top\bm\rho}\,\D\bm B_t,
    \qquad
    \bm\rho\in\mathbb R^d,
    \qquad
    \bm\rho^\top\bm\rho\le 1.
\end{equation}
Since the model is continuous,
\begin{equation}
\label{eq:app_WASC_qcov}
    \D\langle Y^i,Y^j\rangle_t
    =
    (\bm\Sigma_t)_{ij}\,\D t
    =
    \Tr\big(\bm \e^{ij}\bm\Sigma_t\big)\,\D t.
\end{equation}
Usually, $K_{\mathrm{cov}}$ is chosen such that the value
of the contract is zero at inception (par mean), and hence the corresponding strike is chosen as
\begin{equation}
\label{eq:app_WASC_strike}
    K_{\mathrm{cov}}
    =
    \int_0^T \mathbb E\left[\Tr\big(\bm \e^{ij}\bm\Sigma_t\big)\right]\,\D t.
\end{equation}

Let
\begin{equation}
\label{eq:app_WASC_M}
    \bm{\mathcal M}
    :=
    \bm I_d\otimes \bm M + \bm M\otimes \bm I_d
    \in \mathbb R^{d^2\times d^2}.
\end{equation}
For $s\ge t$, the conditional mean $\bm\mu_s:=\mathbb E[\bm\Sigma_s\mid\mathcal F_t]$ solves

\begin{equation*}
    \frac{\D}{\D s}\bm\mu_s
    =
    \bm\Omega+\bm M\bm\mu_s+\bm\mu_s\bm M^\top,
    \qquad
    \bm\mu_t=\bm\Sigma_t,
\end{equation*}
and therefore
\begin{equation}
\label{eq:app_WASC_ESigma}
    \operatorname{vec}(\bm\mu_s)
    =
    \e^{\bm{\mathcal M}(s-t)}\operatorname{vec}(\bm\Sigma_t)
    +
    \int_t^s \e^{\bm{\mathcal M}(s-u)}\operatorname{vec}(\bm\Omega)\,\D u.
\end{equation}
Assuming $\bm{\mathcal M}$ is invertible, define
\begin{align}
\label{eq:app_WASC_Acal}
    \bm{\mathcal A}(t,T)
    &:=
    \int_t^T \e^{\bm{\mathcal M}(s-t)}\,\D s
    =
    \bm{\mathcal M}^{-1}\big(\e^{\bm{\mathcal M}(T-t)}-\bm I_{d^2}\big), \\
\label{eq:app_WASC_bcal}
    \bm{ b}(t,T)
    &:=
    \int_t^T \int_t^s \e^{\bm{\mathcal M}(s-u)}\,\D u\,\D s \, \operatorname{vec}(\bm\Omega) \nonumber\\
    &=
    \Big(
        \bm{\mathcal M}^{-2}\big(\e^{\bm{\mathcal M}(T-t)}-\bm I_{d^2}\big)
        -(T-t)\bm{\mathcal M}^{-1}
    \Big)\operatorname{vec}(\bm\Omega).
\end{align}
Then
\begin{equation}
\label{eq:app_WASC_future_integrated_cov}
    \int_t^T \mathbb E\left[\operatorname{vec}(\bm\Sigma_s)\mid\mathcal F_t\right]\,\D s
    =
    \bm{\mathcal A}(t,T)\operatorname{vec}(\bm\Sigma_t)
    +
    \bm{ b}(t,T).
\end{equation}
We now introduce the deterministic matrix-valued coefficient
\begin{equation}
\label{eq:app_WASC_Gij}
    \bm G_{ij}(t)
    :=
    \operatorname{mat}\Big(
        \bm{\mathcal A}(t,T)^\top \bm a^{ij}
    \Big)\in\mathbb S^d,
\end{equation}
and the deterministic scalar
\begin{equation}
\label{eq:app_WASC_cij}
    c_{ij}(t)
    :=
    (\bm a^{ij})^\top \bm{ b}(t,T).
\end{equation}
It follows that the value process of the covariance swap can be written as
\begin{equation}
\label{eq:app_WASC_H0}
    H_t^0
    :=
    \mathbb E\left[H_T^{\mathrm{cov}}\mid\mathcal F_t\right]
    =
    \int_0^t \Tr\big(\bm \e^{ij}\bm\Sigma_s\big)\,\D s
    +
    \Tr\big(\bm G_{ij}(t)\bm\Sigma_t\big)
    +
    c_{ij}(t)
    -
    K_{\mathrm{cov}}.
\end{equation}
In particular, $H^0$ is affine in the state variable $\bm\Sigma_t$, up to the already-realized covariance.

The continuous martingale part of $H^0$ is generated entirely by the continuous martingale part of $\bm\Sigma$. Using the explicit Wishart covariance structure from Example~\ref{ex:WASC_Hedging}, one obtains
\begin{align}
\label{eq:app_WASC_H0_qv}
    \frac{\D}{\D t}\langle H^0,H^0\rangle_t
    &=
    4\Tr\big(
        \bm G_{ij}(t)\bm\Sigma_t\bm G_{ij}(t)\bm A^\top\bm A
    \big), \\
\label{eq:app_WASC_YH0}
    \frac{\D}{\D t}\langle \bm Y,H^0\rangle_t
    &=
    2\bm\Sigma_t \bm G_{ij}(t)\bm A^\top\bm\rho.
\end{align}
Consequently,
\begin{equation}
\label{eq:app_WASC_SH0}
    \frac{\D}{\D t}\langle \bm S,H^0\rangle_t
    =
    2\diag(\bm S_t)\bm\Sigma_t \bm G_{ij}(t)\bm A^\top\bm\rho,
\end{equation}
while
\begin{equation}
\label{eq:app_WASC_SS}
    \frac{\D}{\D t}\llangle \bm S,\bm S\rrangle_t
    =
    \diag(\bm S_t)\bm\Sigma_t\diag(\bm S_t).
\end{equation}
Therefore, the variance-optimal dynamic hedge of the covariance swap is
\begin{equation}
\label{eq:app_WASC_theta0}
    \bm\theta_t^0
    =
    \left(
        \frac{\D}{\D t}\llangle \bm S,\bm S\rrangle_t
    \right)^{-1}
    \frac{\D}{\D t}\langle \bm S,H^0\rangle_t
    =
    2\diag(\bm S_t)^{-1}\bm G_{ij}(t)\bm A^\top\bm\rho.
\end{equation}
The crucial feature of \eqref{eq:app_WASC_theta0} is that the hedge does not depend on $\bm\Sigma_t$: the covariance state enters the claim value process $H^0$, but the GKW projection eliminates it from the final hedge ratio because both $\langle \bm S,H^0\rangle$ and $\llangle \bm S,\bm S\rrangle$ carry the common factor $\bm\Sigma_t$.

To identify the unhedgeable component, define
\begin{equation}
\label{eq:app_WASC_Vperp}
    \bm V^\perp
    :=
    \bm A^\top
    \big(\bm I_d-\bm\rho\bm\rho^\top\big)
    \bm A.
\end{equation}
Then
\begin{equation}
\label{eq:app_WASC_residual_density}
    \frac{\D}{\D t}\langle L^0,L^0\rangle_t
    =
    4\Tr\big(
        \bm G_{ij}(t)\bm\Sigma_t\bm G_{ij}(t)\bm V^\perp
    \big),
\end{equation}
and therefore
\begin{equation}
\label{eq:app_WASC_A}
    A
    =
    \mathbb E\big[\langle L^0,L^0\rangle_T\big]
    =
    4\int_0^T
    \Tr\big(
        \bm G_{ij}(t)\mathbb E[\bm\Sigma_t]\bm G_{ij}(t)\bm V^\perp
    \big)\,\D t.
\end{equation}
Since $\mathbb E[\bm\Sigma_t]$ is explicitly available from \eqref{eq:app_WASC_ESigma}, the baseline mean-squared hedging error reduces to a deterministic time integral.

\subsubsection{Static portfolio inputs in the Wishart model}
\label{app:wasc-static-inputs}

Let $\eta^k=h^k(\bm Y_T)$, $k=1,\dots,n$, be auxiliary European claims admitting Laplace representations
\begin{equation}
\label{eq:app_WASC_aux_payoff}
    \eta^k
    =
    \int_{\mathcal S(\bm R^k)}
    \e^{\bm u^\top \bm Y_T}\,
    \zeta^k(\D\bm u),
\end{equation}
and let

\begin{equation*}
    H_t(\bm u)
    :=
    \mathbb E\left[\e^{\bm u^\top \bm Y_T}\mid\mathcal F_t\right],
    \qquad
    \bm\Psi_t(\bm u)
    :=
    \bm\Psi(T-t,\bm u,\bm 0).
\end{equation*}
From Example~\ref{ex:WASC_Hedging}, the basis GKW integrand is
\begin{equation}
\label{eq:app_WASC_theta_basis}
    \bm\theta_t(\bm u)
    =
    H_t(\bm u)\,\diag(\bm S_t)^{-1}
    \big(
        \bm u + 2\bm\Psi_t(\bm u)\bm A^\top\bm\rho
    \big).
\end{equation}
Therefore, the dynamic hedge of the $k$-th auxiliary claim is
\begin{equation}
\label{eq:app_WASC_theta_k}
    \bm\theta_t^k
    =
    \diag(\bm S_t)^{-1}
    \int_{\mathcal S(\bm R^k)}
    H_t(\bm u)
    \big(
        \bm u + 2\bm\Psi_t(\bm u)\bm A^\top\bm\rho
    \big)
    \zeta^k(\D\bm u).
\end{equation}

The residual cross-covariation between the covariance swap and a basis claim $H(\bm u)$ is obtained exactly as in \eqref{eq:app_WASC_residual_density}. One finds
\begin{equation}
\label{eq:app_WASC_L0Lu}
    \frac{\D}{\D t}\langle L^0,L(\bm u)\rangle_t
    =
    4 H_t(\bm u)\,
    \Tr\big(
        \bm G_{ij}(t)\bm\Sigma_t\bm\Psi_t(\bm u)\bm V^\perp
    \big).
\end{equation}
Hence the $k$-th component of $\bm B$ is
\begin{equation}
\label{eq:app_WASC_B}
    B_k
    =
    4\int_{\mathcal S(\bm R^k)}\int_0^T
    \mathbb E\left[
        H_t(\bm u)\,
        \Tr\big(
            \bm G_{ij}(t)\bm\Sigma_t\bm\Psi_t(\bm u)\bm V^\perp
        \big)
    \right]
    \D t\,
    \zeta^k(\D\bm u).
\end{equation}
Similarly, for two basis claims $H(\bm u_1)$ and $H(\bm u_2)$,
\begin{equation}
\label{eq:app_WASC_LuLv}
    \frac{\D}{\D t}\langle L(\bm u_1),L(\bm u_2)\rangle_t
    =
    4H_t(\bm u_1)H_t(\bm u_2)\,
    \Tr\big(
        \bm\Psi_t(\bm u_1)\bm\Sigma_t\bm\Psi_t(\bm u_2)\bm V^\perp
    \big),
\end{equation}
and therefore
\begin{equation}
\label{eq:app_WASC_C}
    C_{k\ell}
    =
    4\int_{\mathcal S(\bm R^k)}
    \int_{\mathcal S(\bm R^\ell)}
    \int_0^T
    \mathbb E\left[
        H_t(\bm u_1)H_t(\bm u_2)\,
        \Tr\big(
            \bm\Psi_t(\bm u_1)\bm\Sigma_t\bm\Psi_t(\bm u_2)\bm V^\perp
        \big)
    \right]
    \D t\,
    \zeta^k(\D\bm u_1)\zeta^\ell(\D\bm u_2).
\end{equation}
Equations~\eqref{eq:app_WASC_theta0}, \eqref{eq:app_WASC_A}, \eqref{eq:app_WASC_B}, and \eqref{eq:app_WASC_C} provide the complete variance-optimal semi-static hedging system for covariance swaps in the Wishart model.

A useful structural consequence is immediate: if $\bm V^\perp=\bm 0$, equivalently if the covariance noise is fully spanned by the traded asset noise, then $A=0$, $\bm B=\bm 0$, and $\bm C=\bm 0$. In that case, the market is dynamically complete for the covariance swap, and the semi-static problem collapses to exact dynamic replication.

\subsubsection{Covariance swaps in the multivariate BNS model}
\label{app:bns-covswap}

We now turn to the multivariate Barndorff--Nielsen and Shephard model of Example~\ref{BNSExample}. The dynamics are
\begin{align}
\label{eq:app_BNS_dyn}
    \D\bm Y_t
    &=
    -\frac12 \diag(\bm\Sigma_t)\,\D t
    -\bm\kappa\,\D t
    +
    \sqrt{\bm\Sigma_t}\,\D\bm B_t
    +
    \mathcal R(\D\bm L_t), \nonumber\\
    \D\bm\Sigma_t
    &=
    \big(
        \bm b\bm\Sigma_t+\bm\Sigma_t\bm b^\top
    \big)\,\D t
    +
    \D\bm L_t,
\end{align}
where $\bm L$ is a matrix subordinator on $\mathbb S_+^d$ with Lévy measure $m_{\bm L}$, and $\mathcal R:\mathbb S^d\to\mathbb R^d$ is a linear leverage map. In contrast to the Wishart diffusion, the realized covariance now contains both a continuous and a jump contribution:
\begin{equation}
\label{eq:app_BNS_realized_cov}
    [Y^i,Y^j]_T
    =
    \int_0^T (\bm\Sigma_t)_{ij}\,\D t
    +
    \sum_{0<s\le T}\Delta Y_s^i\,\Delta Y_s^j.
\end{equation}
Since $\Delta\bm Y_s=\mathcal R(\Delta\bm L_s)$, this can be written as
\begin{equation}
\label{eq:app_BNS_realized_cov_measure}
    [Y^i,Y^j]_T
    =
    \int_0^T \Tr\big(\bm \e^{ij}\bm\Sigma_t\big)\,\D t
    +
    \int_0^T \int_{\mathbb S_+^d}
    \mathcal R_i(\bm X)\mathcal R_j(\bm X)\,
    \mu_{\bm L}(dt,\D\bm X).
\end{equation}
Accordingly, the fair covariance swap strike is
\begin{equation}
\label{eq:app_BNS_strike_general}
    K_{\mathrm{cov}}
    =
    \mathbb E\left[\int_0^T \Tr\big(\bm \e^{ij}\bm\Sigma_t\big)\,\D t\right]
    +
    T\int_{\mathbb S_+^d}\mathcal R_i(\bm X)\mathcal R_j(\bm X)\,m_{\bm L}(\D\bm X).
\end{equation}

To make the first term explicit, define the linear operator
\begin{equation}
\label{eq:app_BNS_Bop}
    \mathcal B(\bm X)
    :=
    \bm b\bm X+\bm X\bm b^\top,
    \qquad
    \bm X\in\mathbb S^d.
\end{equation}
Let
\begin{equation}
\label{eq:app_BNS_ML}
    \bm M_{\bm L}
    :=
    \int_{\mathbb S_+^d} \bm X\,m_{\bm L}(\D\bm X).
\end{equation}
Then
\begin{equation}
\label{eq:app_BNS_ESigmaT}
    \mathbb E[\bm\Sigma_T]
    =
    \e^{T\bm b}\bm\Sigma_0 \e^{T\bm b^\top}
    +
    \int_0^T \e^{s\bm b}\bm M_{\bm L}\e^{s\bm b^\top}\,\D s,
\end{equation}
and, assuming $\mathcal B$ is invertible,
\begin{equation}
\label{eq:app_BNS_intSigma}
    \mathbb E\left[\int_0^T \bm\Sigma_s\,\D s\right]
    =
    \mathcal B^{-1}\big(
        \mathbb E[\bm\Sigma_T]-\bm\Sigma_0-T\bm M_{\bm L}
    \big).
\end{equation}
Substituting into \eqref{eq:app_BNS_strike_general} yields
\begin{equation}
\label{eq:app_BNS_strike}
    K_{\mathrm{cov}}
    =
    \Tr\left(
        \bm \e^{ij}
        \mathcal B^{-1}\big(
            \mathbb E[\bm\Sigma_T]-\bm\Sigma_0-T\bm M_{\bm L}
        \big)
    \right)
    +
    T\int_{\mathbb S_+^d}\mathcal R_i(\bm X)\mathcal R_j(\bm X)\,m_{\bm L}(\D\bm X).
\end{equation}

Define the deterministic matrix
\begin{equation}
\label{eq:app_BNS_Gij}
    \bm G_{ij}^{\mathrm{BNS}}(t)
    :=
    \int_0^{T-t}
    \e^{u\bm b^\top}\bm \e^{ij}\e^{u\bm b}\,\D u
    \in \mathbb S^d,
\end{equation}
and the scalar
\begin{equation}
\label{eq:app_BNS_Jij}
    \mathcal J_{ij}
    :=
    \int_{\mathbb S_+^d}\mathcal R_i(\bm X)\mathcal R_j(\bm X)\,m_{\bm L}(\D\bm X).
\end{equation}
Then the value process of the covariance swap admits the affine decomposition
\begin{equation}
\label{eq:app_BNS_H0}
    H_t^0
    =
    [Y^i,Y^j]_t
    +
    \Tr\big(
        \bm G_{ij}^{\mathrm{BNS}}(t)\bm\Sigma_t
    \big)
    +
    c_{ij}^{\mathrm{BNS}}(t)
    +
    (T-t)\mathcal J_{ij}
    -
    K_{\mathrm{cov}},
\end{equation}
where $c_{ij}^{\mathrm{BNS}}(t)$ is a deterministic scalar collecting the contribution of the future drift of $\bm\Sigma$.

Since the BNS covariance state has no continuous martingale part, the baseline hedge is driven by jumps. A jump $\Delta \bm L_t=\bm X\in\mathbb S_+^d$ produces the claim jump
\begin{equation}
\label{eq:app_BNS_DeltaH0}
    \Delta H_t^0(\bm X)
    =
    \mathcal R_i(\bm X)\mathcal R_j(\bm X)
    +
    \Tr\big(
        \bm G_{ij}^{\mathrm{BNS}}(t)\bm X
    \big).
\end{equation}
We further define
\begin{align}
\label{eq:app_BNS_Xi}
    \bm\Xi_t
    &:=
    \bm\Sigma_t
    +
    \int_{\mathbb S_+^d}
    \big(\e^{\mathcal R(\bm X)}-\bm 1\big)
    \big(\e^{\mathcal R(\bm X)}-\bm 1\big)^\top
    m_{\bm L}(\D\bm X), \\
\label{eq:app_BNS_kappa0}
    \bm\kappa_t^0
    &:=
    \int_{\mathbb S_+^d}
    \big(\e^{\mathcal R(\bm X)}-\bm 1\big)
    \Delta H_t^0(\bm X)\,
    m_{\bm L}(\D\bm X).
\end{align}
Here $\bm\Xi_t$ is the instantaneous covariance matrix of the traded asset returns, including both diffusive and jump contributions, and $\bm\kappa_t^0$ is the instantaneous covariance between the asset returns and the covariance swap jump.

The relevant GKW kernels are therefore
\begin{align}
\label{eq:app_BNS_H0H0kernel}
    \mathcal K_t^{H^0H^0}
    &:=
    \int_{\mathbb S_+^d}
    \big(\Delta H_t^0(\bm X)\big)^2\,
    m_{\bm L}(\D\bm X), \\
\label{eq:app_BNS_SH0kernel}
    \frac{\D}{\D t}\langle \bm S,H^0\rangle_t
    &=
    \diag(\bm S_{t-})\bm\kappa_t^0, \\
\label{eq:app_BNS_SSkernel}
    \frac{\D}{\D t}\llangle \bm S,\bm S\rrangle_t
    &=
    \diag(\bm S_{t-})\bm\Xi_t\diag(\bm S_{t-}).
\end{align}
Assuming $\bm\Xi_t$ is invertible, the variance-optimal dynamic hedge is
\begin{equation}
\label{eq:app_BNS_theta0}
    \bm\theta_t^0
    =
    \diag(\bm S_{t-})^{-1}\bm\Xi_t^{-1}\bm\kappa_t^0.
\end{equation}
The minimal baseline hedging error is
\begin{equation}
\label{eq:app_BNS_A}
    A
    =
    \int_0^T
    \mathbb E\left[
        \mathcal K_t^{H^0H^0}
        -
        (\bm\kappa_t^0)^\top \bm\Xi_t^{-1}\bm\kappa_t^0
    \right]dt.
\end{equation}

\subsubsection{Explicit Wishart-jump formulas in the OU--Wishart BNS specification}
\label{app:bns-ou-wishart}

We now specialize to the compound-Poisson Wishart specification of Example~\ref{BNSExample}. Assume that $\bm L$ has Lévy measure

\begin{equation*}
    m_{\bm L}(\D\bm X)
    =
    \lambda\,F_{W_d(n,\bm\Theta)}(\D\bm X),
\end{equation*}
and that the leverage map is diagonal:
\begin{equation}
\label{eq:app_BNS_diag_leverage}
    \mathcal R(\bm X)
    =
    (\rho_1 X_{11},\dots,\rho_d X_{dd})^\top.
\end{equation}
For $k\in\{1,\dots,d\}$, define
\begin{equation}
\label{eq:app_BNS_Rk}
    \bm R_k
    :=
    \rho_k \bm \e^{kk},
\end{equation}
and let
\begin{equation}
\label{eq:app_BNS_MTheta}
    \mathcal M_{\bm\Theta}(\bm R)
    :=
    \det(\bm I_d-2\bm R\bm\Theta)^{-n/2},
    \qquad
    \bm I_d-2\bm R\bm\Theta\in\mathbb S_{++}^d.
\end{equation}
Then the jump covariance correction $\bm J:=\bm\Xi_t-\bm\Sigma_t$ is deterministic, and its entries are
\begin{equation}
\label{eq:app_BNS_Jmatrix}
    J_{k\ell}
    =
    \lambda\Big(
        \mathcal M_{\bm\Theta}(\bm R_k+\bm R_\ell)
        -
        \mathcal M_{\bm\Theta}(\bm R_k)
        -
        \mathcal M_{\bm\Theta}(\bm R_\ell)
        +
        1
    \Big).
\end{equation}
Hence
\begin{equation}
\label{eq:app_BNS_Xi_explicit}
    \bm\Xi_t
    =
    \bm\Sigma_t+\bm J.
\end{equation}

To evaluate $\bm\kappa_t^0$, define
\begin{equation}
\label{eq:app_BNS_ThetaR}
    \bm\Theta_{\bm R}
    :=
    \big(\bm\Theta^{-1}-2\bm R\big)^{-1},
\end{equation}
and let
\begin{equation}
\label{eq:app_BNS_aij}
    a_{ij}
    :=
    \rho_i\rho_j.
\end{equation}
Using tilted Wishart moments, one obtains for the $k$-th component of $\bm\kappa_t^0$
\begin{align}
\label{eq:app_BNS_kappa0_explicit}
    (\bm\kappa_t^0)_k
    &=
    \lambda a_{ij} n
    \Big[
        \mathcal M_{\bm\Theta}(\bm R_k)
        \Big(
            n(\bm\Theta_{\bm R_k})_{ii}(\bm\Theta_{\bm R_k})_{jj}
            +
            2(\bm\Theta_{\bm R_k})_{ij}^2
        \Big)
        -
        \Big(
            n\Theta_{ii}\Theta_{jj}
            +
            2\Theta_{ij}^2
        \Big)
    \Big] \nonumber\\
    &\quad
    +
    \lambda n
    \Big[
        \mathcal M_{\bm\Theta}(\bm R_k)
        \Tr\big(
            \bm G_{ij}^{\mathrm{BNS}}(t)\bm\Theta_{\bm R_k}
        \big)
        -
        \Tr\big(
            \bm G_{ij}^{\mathrm{BNS}}(t)\bm\Theta
        \big)
    \Big].
\end{align}
Similarly, the jump-variance kernel $\mathcal K_t^{H^0H^0}$ reduces to explicit Wishart moments. Writing $\bm G_t:=\bm G_{ij}^{\mathrm{BNS}}(t)$,
\begin{equation}
\label{eq:app_BNS_H0H0structure}
    \mathcal K_t^{H^0H^0}
    =
    \lambda\,
    \mathbb E\left[
        \big(
            a_{ij}J_{ii}J_{jj}
            +
            \Tr(\bm G_t J)
        \big)^2
    \right],
    \qquad
    J\sim W_d(n,\bm\Theta).
\end{equation}
The required moments using Isserlis theorem \cite{Isserlis} are
\begin{align}
\label{eq:app_BNS_quartic}
    \mathbb E\big[(J_{ii}J_{jj})^2\big]
    &=
    n(n+2)
    \Big(
        n(n+2)\Theta_{ii}^2\Theta_{jj}^2
        +
        8(n+2)\Theta_{ii}\Theta_{jj}\Theta_{ij}^2
        +
        8\Theta_{ij}^4
    \Big), \\
\label{eq:app_BNS_cubic}
    \mathbb E\big[J_{ii}J_{jj}\Tr(\bm G_t J)\big]
    &=
    n^3\Theta_{ii}\Theta_{jj}\Tr(\bm G_t\bm\Theta)
    +
    8n\Theta_{ij}(\bm\Theta\bm G_t\bm\Theta)_{ij} \nonumber\\
    &\quad
    +
    2n^2
    \Big(
        \Theta_{ij}^2\Tr(\bm G_t\bm\Theta)
        +
        \Theta_{ii}(\bm\Theta\bm G_t\bm\Theta)_{jj}
        +
        \Theta_{jj}(\bm\Theta\bm G_t\bm\Theta)_{ii}
    \Big), \\
\label{eq:app_BNS_quadratic_trace}
    \mathbb E\big[(\Tr(\bm G_t J))^2\big]
    &=
    2n\Tr(\bm\Theta\bm G_t\bm\Theta\bm G_t)
    +
    n^2\big(\Tr(\bm G_t\bm\Theta)\big)^2.
\end{align}
Substituting \eqref{eq:app_BNS_quartic}--\eqref{eq:app_BNS_quadratic_trace} into \eqref{eq:app_BNS_H0H0structure} yields a fully explicit form of the baseline BNS hedging error $A$.

\subsubsection{Fourier--Laplace auxiliary claims in the BNS model}
\label{app:bns-fourier-static}

Let $\eta^k=h^k(\bm Y_T)$, $k=1,\dots,n$, be auxiliary claims with Laplace representations

\begin{equation*}
    \eta^k
    =
    \int_{\mathcal S(\bm R^k)}
    \e^{\bm u^\top \bm Y_T}\,
    \zeta^k(\D\bm u).
\end{equation*}
Write

\begin{equation*}
    H_t(\bm u)
    :=
    \mathbb E\left[\e^{\bm u^\top \bm Y_T}\mid\mathcal F_t\right]
    =
    \exp\Big(
        \phi(T-t,\bm u,\bm 0)
        +
        \Tr\big(\bm\Psi_t(\bm u)\bm\Sigma_t\big)
        +
        \bm u^\top \bm Y_t
    \Big),
\end{equation*}
with $\bm\Psi_t(\bm u):=\bm\Psi(T-t,\bm u,\bm 0)$. Under the diagonal leverage specification \eqref{eq:app_BNS_diag_leverage}, define the effective jump-transform matrix
\begin{equation}
\label{eq:app_BNS_Ru}
    \bm R_t(\bm u)
    :=
    \bm\Psi_t(\bm u) + \operatorname{Diag}(\bm\rho\odot \bm u),
\end{equation}
so that

\begin{equation*}
    \bm u^\top \mathcal R(\bm X)
    +
    \Tr\big(
        \bm\Psi_t(\bm u)\bm X
    \big)
    =
    \Tr\big(
        \bm R_t(\bm u)\bm X
    \big).
\end{equation*}

The asset--basis covariance vector is
\begin{equation}
\label{eq:app_BNS_kappau}
    \bm\kappa_t(\bm u)
    :=
    \bm\Sigma_t\bm u
    +
    \int_{\mathbb S_+^d}
    \big(
        \e^{\mathcal R(\bm X)}-\bm 1
    \big)
    \big(
        \e^{\Tr(\bm R_t(\bm u)\bm X)}-1
    \big)
    m_{\bm L}(\D\bm X),
\end{equation}
and therefore
\begin{equation}
\label{eq:app_BNS_theta_u}
    \bm\theta_t(\bm u)
    =
    \diag(\bm S_{t-})^{-1}
    H_{t-}(\bm u)\,
    \bm\Xi_t^{-1}\bm\kappa_t(\bm u).
\end{equation}
The dynamic hedge of the $k$-th auxiliary claim is then
\begin{equation}
\label{eq:app_BNS_theta_k}
    \bm\theta_t^k
    =
    \diag(\bm S_{t-})^{-1}
    \int_{\mathcal S(\bm R^k)}
    H_{t-}(\bm u)\,
    \bm\Xi_t^{-1}\bm\kappa_t(\bm u)\,
    \zeta^k(\D\bm u).
\end{equation}

Next, the covariance between the covariance swap and a basis claim is
\begin{equation}
\label{eq:app_BNS_KH0Hu}
    \mathcal K_t^{H^0H}(\bm u)
    :=
    H_{t-}(\bm u)
    \int_{\mathbb S_+^d}
    \Delta H_t^0(\bm X)
    \big(
        \e^{\Tr(\bm R_t(\bm u)\bm X)}-1
    \big)
    m_{\bm L}(\D\bm X),
\end{equation}
and the residual cross-covariation density is
\begin{equation}
\label{eq:app_BNS_L0Lu}
    \frac{\D}{\D t}\langle L^0,L(\bm u)\rangle_t
    =
    \mathcal K_t^{H^0H}(\bm u)
    -
    H_{t-}(\bm u)\,
    (\bm\kappa_t^0)^\top \bm\Xi_t^{-1}\bm\kappa_t(\bm u).
\end{equation}
Therefore,
\begin{equation}
\label{eq:app_BNS_B}
    B_k
    =
    \int_0^T
    \int_{\mathcal S(\bm R^k)}
    \mathbb E\left[
        \mathcal K_t^{H^0H}(\bm u)
        -
        H_{t-}(\bm u)\,
        (\bm\kappa_t^0)^\top \bm\Xi_t^{-1}\bm\kappa_t(\bm u)
    \right]
    \zeta^k(\D\bm u)\,\D t.
\end{equation}

Similarly, for two basis claims $H(\bm u_1)$ and $H(\bm u_2)$,
\begin{align}
\label{eq:app_BNS_KHH}
    \mathcal K_t^{HH}(\bm u_1,\bm u_2)
    &=
    H_{t-}(\bm u_1)H_{t-}(\bm u_2)
    \Bigg[
        \bm u_1^\top \bm\Sigma_t \bm u_2 \nonumber\\
    &\qquad\qquad
        +
        \int_{\mathbb S_+^d}
        \big(
            \e^{\Tr(\bm R_t(\bm u_1)\bm X)}-1
        \big)
        \big(
            \e^{\Tr(\bm R_t(\bm u_2)\bm X)}-1
        \big)
        m_{\bm L}(\D\bm X)
    \Bigg],
\end{align}
and the corresponding residual kernel is
\begin{equation}
\label{eq:app_BNS_Ckernel}
    \mathcal C_t(\bm u_1,\bm u_2)
    :=
    \mathcal K_t^{HH}(\bm u_1,\bm u_2)
    -
    H_{t-}(\bm u_1)H_{t-}(\bm u_2)\,
    \bm\kappa_t(\bm u_1)^\top \bm\Xi_t^{-1}\bm\kappa_t(\bm u_2).
\end{equation}
Hence
\begin{equation}
\label{eq:app_BNS_C}
    C_{k\ell}
    =
    \int_0^T
    \int_{\mathcal S(\bm R^k)}
    \int_{\mathcal S(\bm R^\ell)}
    \mathbb E\left[
        \mathcal C_t(\bm u_1,\bm u_2)
    \right]
    \zeta^k(\D\bm u_1)\zeta^\ell(\D\bm u_2)\,\D t.
\end{equation}

In the OU--Wishart jump specification, all terms in \eqref{eq:app_BNS_kappau} and \eqref{eq:app_BNS_KH0Hu} admit closed forms. Indeed, for each auxiliary claim $k$ we get,
\begin{equation}
\label{eq:app_BNS_kappau_explicit}
    \big(\bm\kappa_t(\bm u)\big)_k
    =
    (\bm\Sigma_t\bm u)_k
    +
    \lambda\Big(
        \mathcal M_{\bm\Theta}\big(\bm R_t(\bm u)+\bm R_k\big)
        -
        \mathcal M_{\bm\Theta}\big(\bm R_t(\bm u)\big)
        -
        \mathcal M_{\bm\Theta}(\bm R_k)
        +
        1
    \Big),
\end{equation}
while, with $\bm\Theta_{\bm R_t(\bm u)}:=(\bm\Theta^{-1}-2\bm R_t(\bm u))^{-1}$,
\begin{align}
\label{eq:app_BNS_KH0Hu_explicit}
    \int_{\mathbb S_+^d}
    &\Delta H_t^0(\bm X)
    \big(
        \e^{\Tr(\bm R_t(\bm u)\bm X)}-1
    \big)
    m_{\bm L}(\D\bm X) \nonumber\\
    &=
    \lambda
    \Big[
        \mathcal M_{\bm\Theta}\big(\bm R_t(\bm u)\big)\,
        n\Tr\big(
            \bm G_{ij}^{\mathrm{BNS}}(t)\bm\Theta_{\bm R_t(\bm u)}
        \big)
        -
        n\Tr\big(
            \bm G_{ij}^{\mathrm{BNS}}(t)\bm\Theta
        \big)
    \Big] \nonumber\\
    &\quad
    +
    \lambda a_{ij}
    \Big[
        \mathcal M_{\bm\Theta}\big(\bm R_t(\bm u)\big)\,
        n\Big(
            2(\bm\Theta_{\bm R_t(\bm u)})_{ij}^2
            +
            n(\bm\Theta_{\bm R_t(\bm u)})_{ii}(\bm\Theta_{\bm R_t(\bm u)})_{jj}
        \Big)
        -
        n\Big(
            2\Theta_{ij}^2+n\Theta_{ii}\Theta_{jj}
        \Big)
    \Big].
\end{align}
Thus the complete semi-static hedging problem for covariance swaps in the BNS model reduces, exactly as in the Wishart case, to deterministic time integrals and Fourier contour integrals of explicit affine transform quantities.

\section{Numerical results and robustness}
\label{sec:numerics-and-examples}

In this section we report numerical experiments for variance--optimal hedging of multi--asset derivatives in the two--dimensional Wishart affine stochastic covariance model introduced in Section~\ref{sec:affinecase}, with particular emphasis on the continuous WASC specification of Example~\ref{WASCExample}. This section complements the preliminary portfolio diagnostics in Section~\ref{sec:replicationtheory}: there the focus was on the qualitative structure of the semi--static hedge and on the role of the spanning formulas, whereas here we work under a fully specified stochastic covariance model and quantify hedging performance under Monte Carlo simulation.

The numerical study is organized around two blocks of experiments. 

\begin{itemize}
    \item First, we study dynamic variance--optimal hedging of European product/quanto options and compare against a misspecified bivariate GBM $\Delta$--hedge benchmark.
    \item Second, we study semi--static hedging of covariance swaps, where the dynamic component is the variance--optimal strategy of Section~\ref{sec:covswap-semistatic} and the static component is a buy--and--hold portfolio of auxiliary European options.
\end{itemize}
  In both blocks we measure performance through terminal hedging errors and variance reduction statistics, and we keep the numerical implementation aligned with the Fourier--Laplace methodology developed in Section~\ref{sec:fourierrepresentation}.

\subsection{Model specification}
\label{subsec:model-numerics}

Under the risk--neutral measure $\mathbb{Q}$ we consider two assets

\begin{equation*}
    \bm{S}_t=(S_t^1,S_t^2)^\top,
    \qquad
    \bm{Y}_t=\log \bm{S}_t=(\log S_t^1,\log S_t^2)^\top,
\end{equation*}
with stochastic covariance matrix $\bm{\Sigma}_t\in\mathbb{S}_+^2$. The joint dynamics follow the continuous WASC model described in Section~\ref{sec:affinecase}. Since the present section is purely numerical, we only recall the ingredients needed for implementation. The covariance process satisfies

\begin{equation*}
    \D\bm{\Sigma}_t
    =
    \big(
        \alpha\,\bm{A}^\top\bm{A}
        +
        \bm{b}\bm{\Sigma}_t
        +
        \bm{\Sigma}_t\bm{b}^\top
    \big)\,\D t
    +
    \bm{\Sigma}_t^{1/2}\,\D\bm{W}_t\,\bm{A}
    +
    \bm{A}^\top \D\bm{W}_t^\top \bm{\Sigma}_t^{1/2},
\end{equation*}
where $\bm{W}$ is a $2\times2$ matrix Brownian motion, $\bm{A}\in\mathbb{R}^{2\times2}$ is the volatility--of--volatility matrix, $\bm{b}\in\mathbb{R}^{2\times2}$ is the mean--reversion matrix, and $\alpha>d-1$ is the Wishart shape parameter. The log--prices evolve according to

\begin{equation*}
    \D\bm{Y}_t
    =
    \Big(
        r\bm{1}
        -
        \tfrac12 \diag(\bm{\Sigma}_t)
    \Big)\,\D t
    +
    \bm{\Sigma}_t^{1/2}\,\D\bm{Z}_t,
\end{equation*}
with leverage specification

\begin{equation*}
    \D\bm{Z}_t
    =
    \sqrt{1-\|\bm{\rho}\|^2}\,\D\bm{B}_t
    +
    \D\bm{W}_t\,\bm{\rho},
\end{equation*}
where $r$ is the constant short rate, $\bm{B}$ is a two--dimensional Brownian motion independent of $\bm{W}$, and $\bm{\rho}\in\mathbb{R}^2$ controls the instantaneous correlation between return shocks and covariance shocks.

Throughout the experiments we fix
\begin{align}
\label{paramswasc}
    \bm{A}
    &=
    \begin{pmatrix}
        0.21 & 0.14\\
        0.14 & 0.21
    \end{pmatrix},
    &
    \bm{b}
    &=
    \begin{pmatrix}
        -2.5 & -1.5\\
        -1.5 & -2.5
    \end{pmatrix},\\
    \bm{\rho}
    &=
    \begin{pmatrix}
        -0.6\\
        -0.3
    \end{pmatrix},
    &
    \alpha &= 7.14283,
    &
    r &= 0.
\end{align}
The initial asset levels are $S_0^1=S_0^2=100$, and the initial covariance matrix is

\begin{align*}
    \bm{\Sigma}_0
    =
    \begin{pmatrix}
        0.10 & 0.07\\
        0.07 & 0.10
    \end{pmatrix}.
\end{align*}

We work on the one--year horizon $T=1$ and discretize $[0,T]$ on the equidistant grid $t_k=k\Delta t$, $k=0,\dots,N$, with $N=250$ and $\Delta t = 1/250$, corresponding to daily rebalancing on a $250$--day trading year. All hedging statistics are estimated by Monte Carlo using $P=50{,}000$ independent paths. Paths for $(\bm{\Sigma},\bm{Y})$ are generated with the splitting scheme of \cite[Section~5.5]{Alfonsi2015}.

\subsection{Variance--optimal hedging of product and spread options}

We consider European payoffs with maturity $T=1$. For product/quanto options we use
\begin{equation}
\label{eq:prod-payoff-num}
    H^{\mathrm{prod}}_{K_1,K_2}
    :=
    (S_T^1-K_1)^+(S_T^2-K_2)^+,
    \qquad K_1,K_2>0,
\end{equation}

and all the variations of them (e.g product of call and puts). Throughout this subsection the hedging instruments are restricted to the two underlyings. A predictable strategy $\bm{\theta}_t=(\theta_t^1,\theta_t^2)^\top$ generates the self--financing wealth process
\begin{equation}
\label{eq:hedge-portfolio}
    V_t
    =
    V_0 + \int_0^t \bm{\theta}_u^\top\,\D\bm{S}_u,
\end{equation}
implemented on the grid by

\begin{equation*}
    V_{t_{k+1}}
    =
    V_{t_k}
    +
    \bm{\theta}_{t_k}^\top(\bm{S}_{t_{k+1}}-\bm{S}_{t_k}).
\end{equation*}

Let $H$ denote the discounted payoff. The variance--optimal (VO) strategy $\bm{\theta}^{\mathrm{VO}}$ minimizes

\begin{equation*}
    \mathbb{E}_{\mathbb{Q}}
    \Big[
        \big(
            H - V_0 - \textstyle\int_0^T \bm{\theta}_u^\top \D\bm{S}_u
        \big)^2
    \Big]
\end{equation*}
over all admissible strategies. In our implementation $V_0^{\mathrm{VO}}=H_0=\mathbb{E}_{\mathbb{Q}}[H]$, and the continuous--time VO hedge ratios are evaluated from the Fourier--Laplace formulas of Section~\ref{sec:fourierrepresentation} specialized to the WASC model in Section~\ref{sec:affinecase}, then applied pathwise on the discrete grid:

\begin{equation*}
    V^{\mathrm{VO}}_{t_{k+1}}
    =
    V^{\mathrm{VO}}_{t_k}
    +
    \big(\bm{\theta}^{\mathrm{VO}}_{t_k}\big)^\top(\bm{S}_{t_{k+1}}-\bm{S}_{t_k}),
    \qquad
    V_0^{\mathrm{VO}} = H_0.
\end{equation*}
The terminal hedging error on path $p$ is $L_T^{\mathrm{VO},(p)}:=H^{(p)}-V_T^{\mathrm{VO},(p)}$.

All Fourier integrals entering prices and hedge ratios are evaluated numerically by tensorized Gauss--Laguerre quadrature on the imaginary part of the relevant complex strips, combined with orthant symmetrization. The same quadrature nodes are used consistently across times and simulation paths. In the experiments we use between $16$ and $20$ quadrature points per dimension, which proved sufficient for stable prices and hedge ratios.

For comparison we also implement a misspecified bivariate GBM $\Delta$--hedging benchmark. Under the proxy model,

\begin{equation*}
    \frac{\D S_t^{(m)}}{S_t^{(m)}}
    =
    r\,\D t + \sigma_m\,\D W_t^{(m)},
    \qquad
    \D\langle W^{(1)},W^{(2)}\rangle_t
    =
    \rho_{\mathrm{GBM}}\,\D t,
\end{equation*}
with constants $(\sigma_1,\sigma_2,\rho_{\mathrm{GBM}})$. The proxy volatilities are chosen as Black--Scholes implied volatilities matching near--ATM WASC vanilla prices at maturity $T$, and $\rho_{\mathrm{GBM}}$ is chosen to match the empirical correlation of WASC log--returns. In the experiments this leads to

\begin{equation*}
    \sigma_1=0.27,
    \qquad
    \sigma_2=0.27,
    \qquad
    \rho_{\mathrm{GBM}}=0.69.
\end{equation*}
For product options we compute proxy prices and deltas from the closed--form bivariate lognormal formulas; for spread options we compute proxy prices and deltas by Fourier inversion under the GBM characteristic function. The resulting proxy hedge is

\begin{equation*}
    \bm{\Delta}_t^{\mathrm{GBM}}
    :=
    \nabla_{\bm{s}} C^{\mathrm{GBM}}(t,\bm{S}_t),
\end{equation*}
and its self--financing implementation on the WASC paths is\begin{equation}
\label{eq:gbm-delta-recursion}
    V^{\Delta,\mathrm{GBM}}_{t_{k+1}}
    =
    V^{\Delta,\mathrm{GBM}}_{t_k}
    +
    \big(\bm{\Delta}^{\mathrm{GBM}}_{t_k}\big)^\top
    \big(\bm{S}_{t_{k+1}}-\bm{S}_{t_k}\big),
    \qquad k=0,\dots,N-1,
\end{equation}
with initial capital
\begin{equation}
\label{eq:gbm-delta-initial}
    V^{\Delta,\mathrm{GBM}}_{0}
    =
    C^{\mathrm{WASC}}(0,\bm{S}_0,\bm{\Sigma}_0).
\end{equation}
We thus center all hedging P\&L comparisons on the same reference prices, namely the WASC model values at inception. For brevity, we refer to \eqref{eq:gbm-delta-recursion}--\eqref{eq:gbm-delta-initial} as the $\Delta$--hedge. We also report the unhedged benchmark with initial capital $H_0$, whose terminal error is $L_T^{\mathrm{unh},(p)}:=H^{(p)}-H_0$. For any strategy $X\in\{\mathrm{unh},\Delta,\mathrm{VO}\}$ we define

\begin{equation*}
    \mathrm{RMSE}^X
    :=
    \Bigg(
        \frac1P
        \sum_{p=1}^P
        \big(L_T^{X,(p)}\big)^2
    \Bigg)^{1/2},
    \qquad
    \mathrm{RHE}^X
    :=
    100\cdot
    \frac{\mathrm{RMSE}^X}{\mathrm{RMSE}^{\mathrm{unh}}},
\end{equation*}
The gain of the VO hedge over the $\Delta$--hedge is measured by

\begin{equation*}
    \mathrm{Gain}
    :=
    100\cdot\Bigg(1-\frac{\mathrm{SSE}^{\mathrm{VO}}}{\mathrm{SSE}^{\Delta}}\Bigg).
\end{equation*}

Tables~\ref{tab:prod-allkinds-combined} reports representative variance--optimal hedging results for bivariate product options across payoff types and moneyness levels. Across all contracts shown, dynamic trading in the two underlyings produces a substantial reduction in quadratic hedging risk relative to the unhedged benchmark, and the variance--optimal (VO) strategy uniformly improves upon the misspecified bivariate GBM $\Delta$--hedge in terms of $\widehat{\mathrm{Var}}(L_T^X)$, with variance reductions ranging from $94$--$98\%$ for PP contracts and $94$--$96\%$ for CC contracts. The distributional diagnostics reveal that unhedged terminal errors are markedly non--Gaussian, with pronounced asymmetry and heavy tails — excess kurtosis exceeding $100$ for CP and PC products and skewness above $20$ in several cases. Both dynamic hedges materially compress dispersion and typically attenuate tail risk, while the VO hedge produces the tightest terminal error distributions across all payoff types, most notably halving the excess kurtosis of the GBM $\Delta$--hedge for PP contracts and reducing skewness toward zero.

\begin{table}[!htbp]
  \centering
  \caption{Representative hedging performance and distributional diagnostics for bivariate product options across payoff types and moneyness levels. Variance reduction is computed relative to the unhedged benchmark. All quantities have been computed on terminal hedging errors $L_T^X = H_T - V_T^X$.}
  \label{tab:prod-allkinds-combined}
  \begingroup
  \small
  \setlength{\tabcolsep}{3.8pt}
  \renewcommand{\arraystretch}{1.12}
  \resizebox{\textwidth}{!}{%
  \begin{tabular}{@{}ccccc l rrrr rr@{}}
    \toprule
    & & & & & &
      \multicolumn{4}{c}{$\mathrm{Hedging \ performance}$} &
      \multicolumn{2}{c}{$\mathrm{Tail\  Risk}$} \\
    \cmidrule(lr){7-10}\cmidrule(lr){11-12}
    kind & $K_1$ & $K_2$ & $m_1$ & $m_2$ & Strategy $X$
      & $\widehat{\mathrm{Var}}(L_T^X)$
      & $\mathrm{RMSE}^X$
      & $\mathrm{SSE}^X$
      & Variance Reduction [\%]
      & $\mathrm{Skew}(L_T^X)$
      & $\mathrm{ExKurt}(L_T^X)$ \\
    \midrule

    \multicolumn{12}{@{}l}{\textbf{CC (call--call) products}}\\
    \midrule
    CC & 116 & 128 & 1.16 & 1.28 & Unhedged
      & $2.19\times 10^{4}$ & $1.48\times 10^{2}$ & --                  & --      & $12.014$ & $248.441$ \\
    CC & 116 & 128 & 1.16 & 1.28 & GBM $\Delta$--hedge
      & $4.58\times 10^{3}$ & $6.77\times 10^{1}$ & $2.32\times 10^{7}$ & $79.11$ & $-2.591$ & $8.552$   \\
    CC & 116 & 128 & 1.16 & 1.28 & VO hedge
      & $9.24\times 10^{2}$ & $3.04\times 10^{1}$ & $4.79\times 10^{6}$ & $95.78$ & $-1.430$ & $8.710$   \\
    \addlinespace
    CC & 123 & 128 & 1.23 & 1.28 & Unhedged
      & $1.54\times 10^{4}$ & $1.24\times 10^{2}$ & --                  & --      & $13.913$ & $325.847$ \\
    CC & 123 & 128 & 1.23 & 1.28 & GBM $\Delta$--hedge
      & $3.83\times 10^{3}$ & $6.19\times 10^{1}$ & $1.94\times 10^{7}$ & $75.11$ & $-2.838$ & $10.452$  \\
    CC & 123 & 128 & 1.23 & 1.28 & VO hedge
      & $7.22\times 10^{2}$ & $2.69\times 10^{1}$ & $3.74\times 10^{6}$ & $95.31$ & $-1.554$ & $10.759$  \\

    \midrule
    \multicolumn{12}{@{}l}{\textbf{CP (call--put) products}}\\
    \midrule
    CP & 110 & 81 & 1.10 & 0.81 & Unhedged
      & $1.36\times 10^{2}$ & $1.17\times 10^{1}$ & --                  & --      & $30.975$ & $1103.308$ \\
    CP & 110 & 81 & 1.10 & 0.81 & GBM $\Delta$--hedge
      & $1.22\times 10^{1}$ & $3.50\times 10^{0}$ & $6.12\times 10^{4}$ & $91.01$ & $9.560$  & $248.446$  \\
    CP & 110 & 81 & 1.10 & 0.81 & VO hedge
      & $5.87\times 10^{0}$ & $2.42\times 10^{0}$ & $2.94\times 10^{4}$ & $95.69$ & $5.338$  & $119.511$  \\
    \addlinespace
    CP & 116 & 81 & 1.16 & 0.81 & Unhedged
      & $6.06\times 10^{1}$ & $7.78\times 10^{0}$ & --                  & --      & $34.651$ & $1319.605$ \\
    CP & 116 & 81 & 1.16 & 0.81 & GBM $\Delta$--hedge
      & $5.78\times 10^{0}$ & $2.40\times 10^{0}$ & $2.89\times 10^{4}$ & $90.47$ & $12.740$ & $346.580$  \\
    CP & 116 & 81 & 1.16 & 0.81 & VO hedge
      & $2.68\times 10^{0}$ & $1.64\times 10^{0}$ & $1.34\times 10^{4}$ & $95.58$ & $6.008$  & $172.104$  \\

    \midrule
    \multicolumn{12}{@{}l}{\textbf{PC (put--call) products}}\\
    \midrule
    PC & 88 & 122 & 0.88 & 1.22 & Unhedged
      & $5.71\times 10^{1}$ & $7.56\times 10^{0}$ & --                  & --      & $29.030$ & $922.223$ \\
    PC & 88 & 122 & 0.88 & 1.22 & GBM $\Delta$--hedge
      & $3.96\times 10^{0}$ & $1.99\times 10^{0}$ & $1.99\times 10^{4}$ & $93.07$ & $0.238$  & $165.112$ \\
    PC & 88 & 122 & 0.88 & 1.22 & VO hedge
      & $3.23\times 10^{0}$ & $1.80\times 10^{0}$ & $1.62\times 10^{4}$ & $94.34$ & $-5.981$ & $81.927$  \\
    \addlinespace
    PC & 94 & 122 & 0.94 & 1.22 & Unhedged
      & $1.79\times 10^{2}$ & $1.34\times 10^{1}$ & --                  & --      & $22.631$ & $581.141$ \\
    PC & 94 & 122 & 0.94 & 1.22 & GBM $\Delta$--hedge
      & $1.19\times 10^{1}$ & $3.44\times 10^{0}$ & $5.98\times 10^{4}$ & $93.37$ & $-1.554$ & $77.940$  \\
    PC & 94 & 122 & 0.94 & 1.22 & VO hedge
      & $9.76\times 10^{0}$ & $3.12\times 10^{0}$ & $4.92\times 10^{4}$ & $94.55$ & $-4.430$ & $70.384$  \\

    \midrule
    \multicolumn{12}{@{}l}{\textbf{PP (put--put) products}}\\
    \midrule
    PP & 69 & 69 & 0.69 & 0.69 & Unhedged
      & $3.62\times 10^{3}$ & $6.02\times 10^{1}$ & --                  & --      & $9.919$ & $116.746$ \\
    PP & 69 & 69 & 0.69 & 0.69 & GBM $\Delta$--hedge
      & $1.25\times 10^{2}$ & $1.12\times 10^{1}$ & $6.34\times 10^{5}$ & $96.54$ & $5.338$ & $41.160$  \\
    PP & 69 & 69 & 0.69 & 0.69 & VO hedge
      & $7.20\times 10^{1}$ & $8.48\times 10^{0}$ & $3.73\times 10^{5}$ & $98.01$ & $-0.470$ & $15.606$ \\
    \addlinespace
    PP & 76 & 69 & 0.76 & 0.69 & Unhedged
      & $6.14\times 10^{3}$ & $7.83\times 10^{1}$ & --                  & --      & $8.747$ & $91.610$  \\
    PP & 76 & 69 & 0.76 & 0.69 & GBM $\Delta$--hedge
      & $1.85\times 10^{2}$ & $1.36\times 10^{1}$ & $9.40\times 10^{5}$ & $96.99$ & $4.502$ & $30.009$  \\
    PP & 76 & 69 & 0.76 & 0.69 & VO hedge
      & $1.10\times 10^{2}$ & $1.05\times 10^{1}$ & $5.78\times 10^{5}$ & $98.20$ & $-0.348$ & $11.270$  \\

    \bottomrule
  \end{tabular}%
  }
  \endgroup
\end{table}

Figure~\ref{fig:prod-CC} illustrates these findings for a representative call--call product option. Relative to the unhedged benchmark, both dynamic strategies markedly tighten the error distribution and reduce the probability of large replication losses. The misspecified GBM $\Delta$--hedge still exhibits a visible location shift and heavier tails, whereas the variance--optimal hedge remains most sharply concentrated around zero.

\begin{figure}[!htbp]
    \centering
    \includegraphics[width=0.9\linewidth]{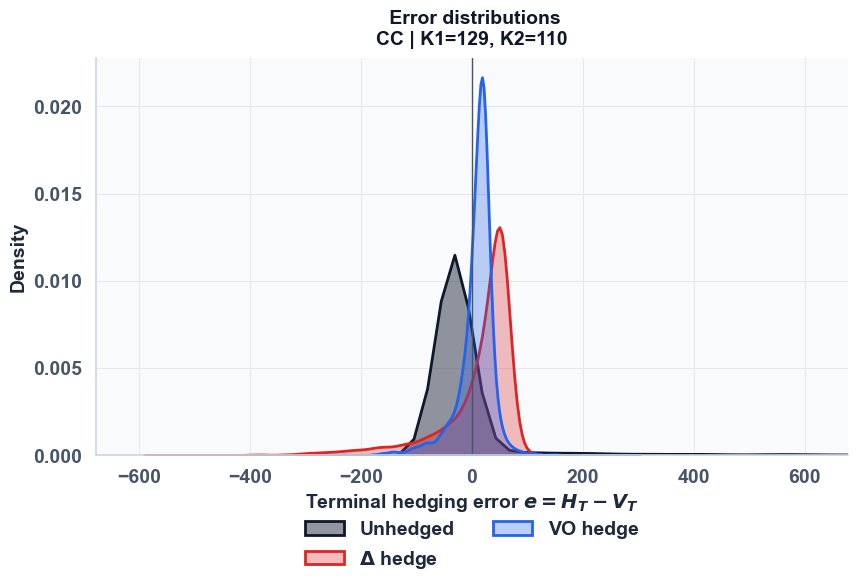}
    \caption{Terminal hedging error distributions for a representative call--call product option.}
    \label{fig:prod-CC}
\end{figure}

\begin{figure}[!htbp]
    \centering
    \includegraphics[width=0.9\linewidth]{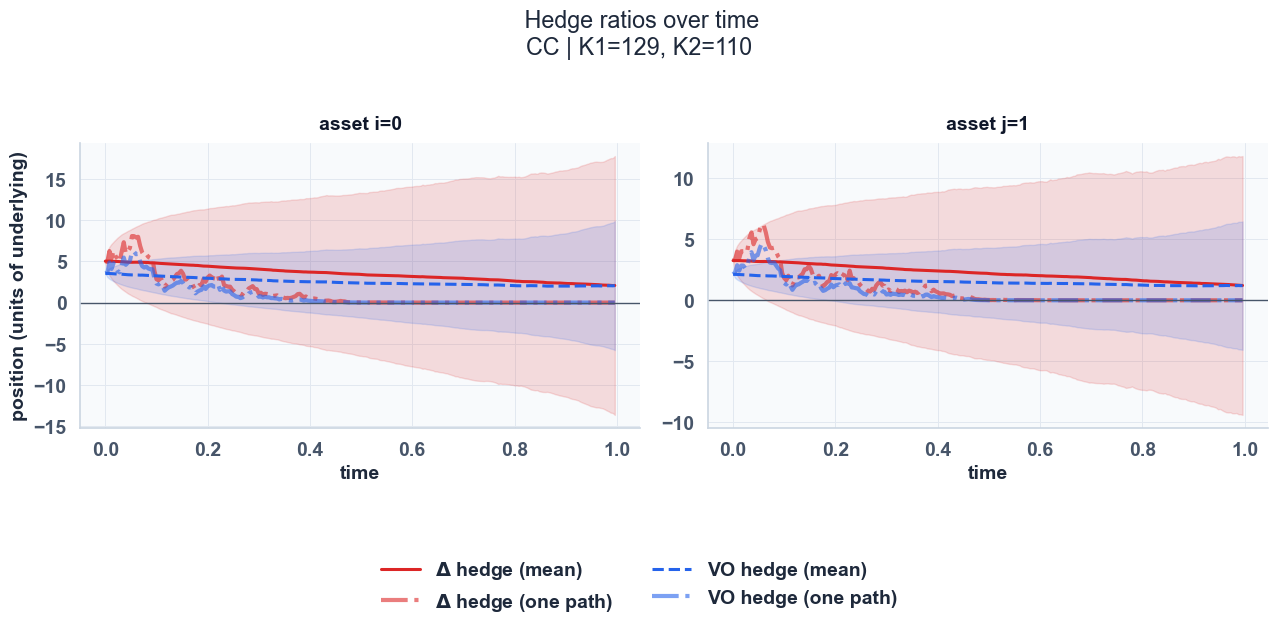}
    \caption{Representative mark-to-market trajectories for the dynamically hedged product-option portfolio under the WASC simulation setup.}
    \label{fig:MTM-product}
    \end{figure}

\begin{figure}[!htbp]
  \centering
  \begin{subfigure}[t]{0.98\linewidth}
    \centering
    \includegraphics[width=0.9\linewidth]{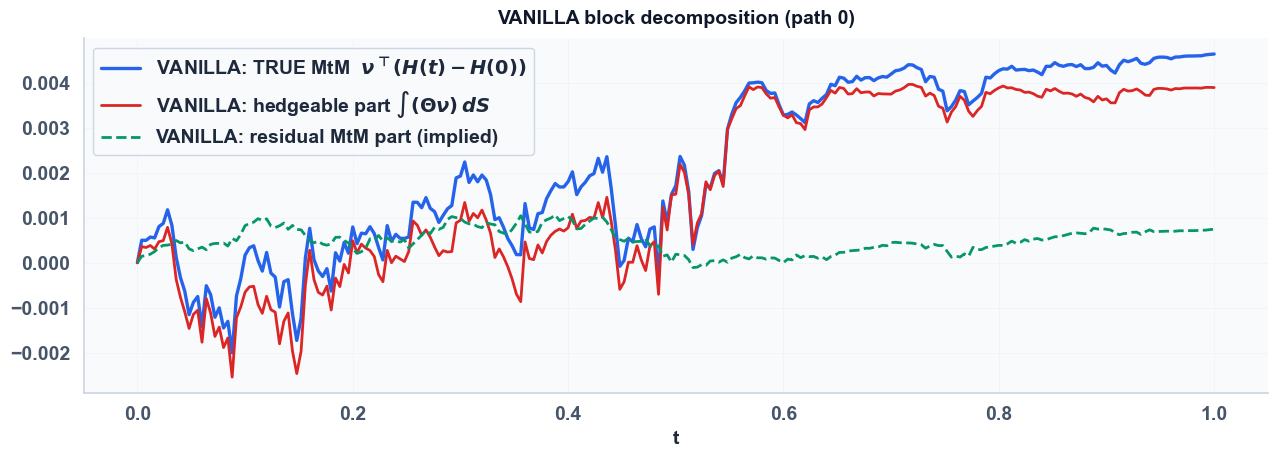}
    \subcaption{GKW decomposition Vanilla options}
    \label{fig:gkw-vanilla}
  \end{subfigure}

  \vspace{0.9em}

  \begin{subfigure}[t]{0.98\linewidth}
    \centering
    \includegraphics[width=0.9\linewidth]{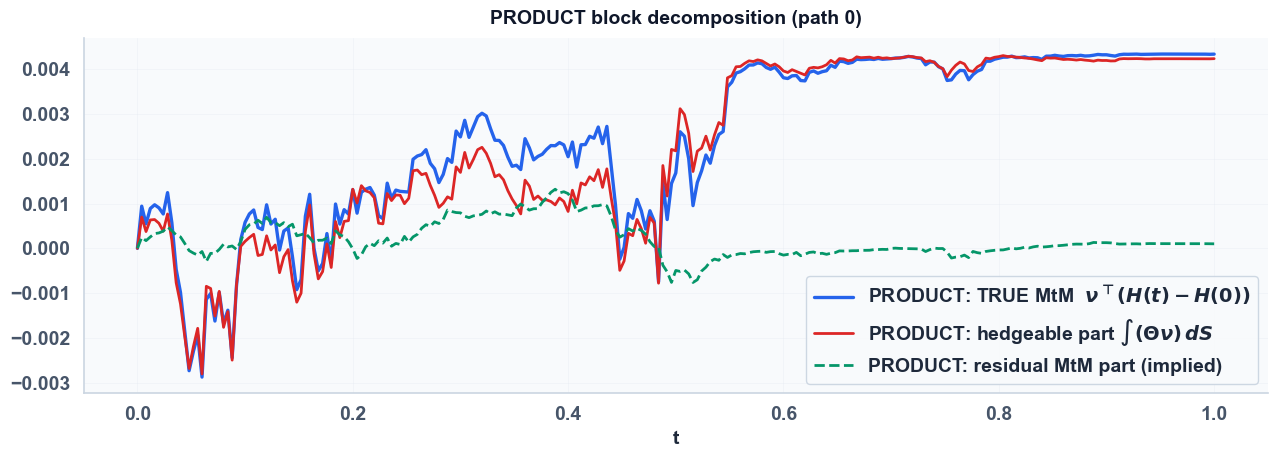}
    \subcaption{GKW decomposition Quanto options}
    \label{fig:gkw-quanto}
  \end{subfigure}
  \caption{Representative mark-to-market trajectories for the dynamically hedged product-option portfolio under the WASC simulation setup.}
  \label{fig:GKW}
\end{figure}

\vspace{0.1cm}
We emphasise that all numerical results reported above are obtained under an \emph{idealised} information set: the hedger is assumed to know the true data-generating dynamics of $(\bm S,\bm\Sigma)$ and to compute prices and hedge ratios under the correct model. In practice, this assumption is rarely justified. Calibration delivers at best an approximate fit to observed option prices, and the resulting hedge is inevitably exposed to \emph{model risk}. In particular, it is exposed to misspecification of key dependence parameters such as instantaneous covariance and correlation. To assess the robustness of our conclusions, we therefore study hedging performance in a misspecified setting in which the data are generated under the WASC dynamics, but the hedger constructs a proxy hedge without knowledge of the true model.

A classical benchmark for volatility misspecification is provided by the one-sided super-/sub-hedging results of \cite{ElKaroui}: for sufficiently regular markets, a hedging strategy computed under an assumed volatility that \emph{dominates} the true volatility yields a one-sided hedge for European and American options; analogous statements hold when the true volatility dominates the misspecified volatility. In our multivariate setting, an equally important channel of model risk is \emph{correlation} misspecification. Dispersion trading intuition expresses this risk through the spread between \emph{realised} and \emph{implied} (average) correlation: the P\&L of a zero-cost variance-dispersion position is, to leading order, proportional to the realised--implied correlation spread times an average realised variance of the constituents. 

In the same spirit, the misspecified bivariate GBM $\Delta$--hedge for the product options considered here exhibits an analogous mechanism: the leading contribution to the tracking error is \emph{linear} in the correlation misspecification and \emph{proportional} to an integrated \emph{cross-$\Gamma$ exposure}\footnote{In the misspecified setting, the option is priced under the true WASC dynamics whereas the hedge is computed from the proxy bivariate GBM. The leading contribution to the delta-hedged portfolio P\&L is
\begin{equation}
\label{eq:misspecified-corr-cross-gamma}
\D \Pi_t^{\mathrm{corr}}
\approx
\,S_t^1S_t^2\,\partial_{s_1s_2}^2 C^{\mathrm{WASC}}(t,\bm S_t,\bm\Sigma_t)\,
\Big((\bm\Sigma_t)_{12}-\rho_{\mathrm{GBM}}\sigma_1^{\mathrm{GBM}}\sigma_2^{\mathrm{GBM}}\Big)\,\D t,
\end{equation}
so that the tracking error is first order in the covariance (equivalently, correlation) misspecification and weighted by the true cross-$\Gamma$ exposure.} \cite{JacquierSlaoui}. This explains the systematic deterioration of the proxy $\Delta$--hedge as the correlation input is moved away from the effective dependence generated by the WASC model, as illustrated in Figure~\ref{fig:miss-specified-correlation}. Moreover, within the semi-static framework the mixed second-order terms, namely the cross-$\Gamma$ exposures, can be controlled by including covariance swaps among the auxiliary hedging instruments.

\begin{figure}[!htbp]
  \centering
  \includegraphics[width=0.9\linewidth]{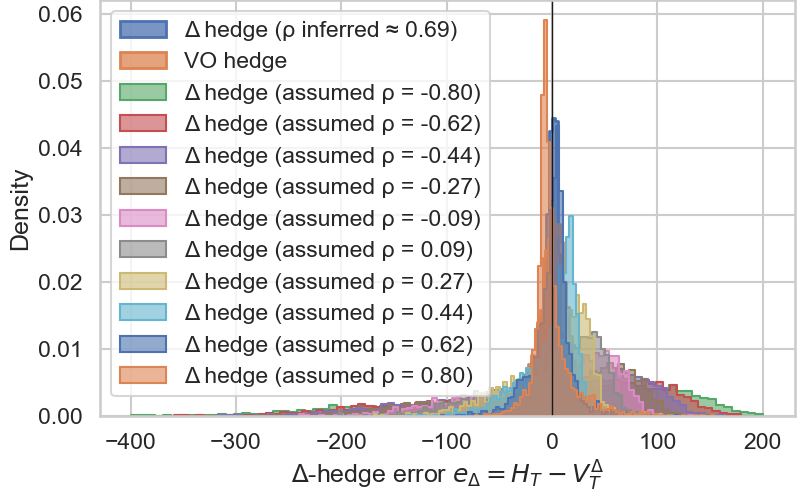}
  \caption{Sensitivity of the proxy GBM $\Delta$--hedge to correlation misspecification under the WASC data-generating dynamics.}
  \label{fig:miss-specified-correlation}
\end{figure}
\FloatBarrier

\subsection{Semi-static hedging of covariance swaps}

We now apply the semi--static variance--optimal framework of Section~\ref{sec:covswap-semistatic} to geometric covariance swaps under the same WASC dynamics. The target claim is the discounted covariance swap payoff

\begin{equation*}
    H_T^0
    =
    \langle \log S^1,\log S^2\rangle_T
    -
    K_{\mathrm{cov}},
    \qquad
    K_{\mathrm{cov}}
    =
    \mathbb{E}_{\mathbb{Q}}\big[\langle \log S^1,\log S^2\rangle_T\big],
\end{equation*}
and the dynamic component of the hedge is the variance-optimal trading strategy in the underlyings derived in Section~\ref{sec:covswap-semistatic}. We augment this dynamic hedge by a static portfolio of auxiliary European options with maturity $T$, chosen from the three instrument families motivated by the spanning identities in Section~\ref{sec:replicationtheory}: (i) vanilla options on $S^1$ and $S^2$, (ii) options on the log-ratio $\log(S^1/S^2)$ or simple ratio $S^1/S^2$, and (iii) product/quanto options. 

For a given auxiliary family $\{\eta^1,\dots,\eta^n\}$, the optimal static weights $\bm{\nu}^\star\in\mathbb{R}^n$ are computed from the finite-dimensional outer problem defined in Section~\ref{sec:VO-framework}. We emphasize a critical computational detail here: while Section~\ref{sec:fourierrepresentation} established explicit Fourier-analytic representations for the GKW components, computing the $n \times n$ residual covariance matrix $\bm{C}$ and the cross-covariance vector $\bm{B}$ via multidimensional Fourier inversion becomes computationally prohibitive and numerically unstable for large static portfolios due to highly oscillatory integrands. Consequently, we employ a hybrid numerical scheme. We utilize the exact Fourier formulas to compute the continuous-time dynamic hedge ratios (which depend only on low-dimensional marginals), while employing standard Monte Carlo simulation to efficiently and robustly estimate the residual covariances $A$, $\bm{B}$, and $\bm{C}$ for the outer static optimization. 

For each portfolio size $n$ and each auxiliary family we report the terminal semi--static hedging error

\begin{equation*}
    L_T^{\nu}
    :=
    H_T^0
    -
    \Big(
        H_0^0
        +
        \int_0^T (\bm{\theta}_t^{0}-\sum_{k=1}^n \nu_k \bm{\theta}_t^{k})^\top \D\bm{S}_t
        +
        \sum_{k=1}^n \nu_k \eta^k
    \Big),
\end{equation*}
implemented on the discrete grid, and we summarize performance by the variance reduction relative to the dynamic--only benchmark. Consistent with the preliminary diagnostics in Section~\ref{sec:replicationtheory}, adding static instruments chosen in line with the spanning formulas produces substantial additional variance reduction beyond the purely dynamic hedge, and product/quanto families typically deliver the strongest improvements among the tested auxiliary sets. We additionally study the dependence of the variance reduction on the leverage vector $\bm{\rho}$ by re-running the semi--static procedure under different leverage magnitudes while keeping the remaining model parameters fixed; this isolates how the correlation between return shocks and covariance shocks affects the attainable reduction in unhedgeable risk.

\subsection{Leverage vector and the effect on the hedging error}
\label{subsec:leverage_rho_rhe}

We study the dependence of the covariance-swap hedging performance on the leverage vector $\bm{\rho}\in\mathbb{R}^d$, where $d\mathbf Z_t=\sqrt{1-\|\bm{\rho}\|_2^2}\,\D\mathbf B_t+d\mathbf W_t\bm{\rho}$ and $\|\bm{\rho}\|_2<1$ is imposed for admissibility. The structural connection to the inner--outer problem is via the dynamic-only error term $A=\mathbb{E}[\langle L^0,L^0\rangle_T]$: in the affine WASC case, the volatility-direction residual after projection onto traded assets is the Schur complement $Q_{\mathrm{aff}}(t)=C_t^{\Sigma}-C_t^{\Sigma Y}(C_t^Y)^{-1}C_t^{Y\Sigma}$, and in the full-rank regime its dependence on leverage factors through $V^\perp:=A^\top(I-\bm{\rho}\bm{\rho}^\top)A$. In particular, $V^\perp$ is positive semidefinite and decreases in the Loewner order when the rank-one projector $\bm{\rho}\bm{\rho}^\top$ increases along directions that overlap with the column space of $A$; economically, this corresponds to transferring covariance-factor innovations from the orthogonal driver $d\mathbf B_t$ into the traded-asset driver $d\mathbf Z_t$, thereby reducing the non-spanned component of covariance risk.

\begin{figure}[!htbp]
    \centering
    \includegraphics[width=0.8\linewidth]{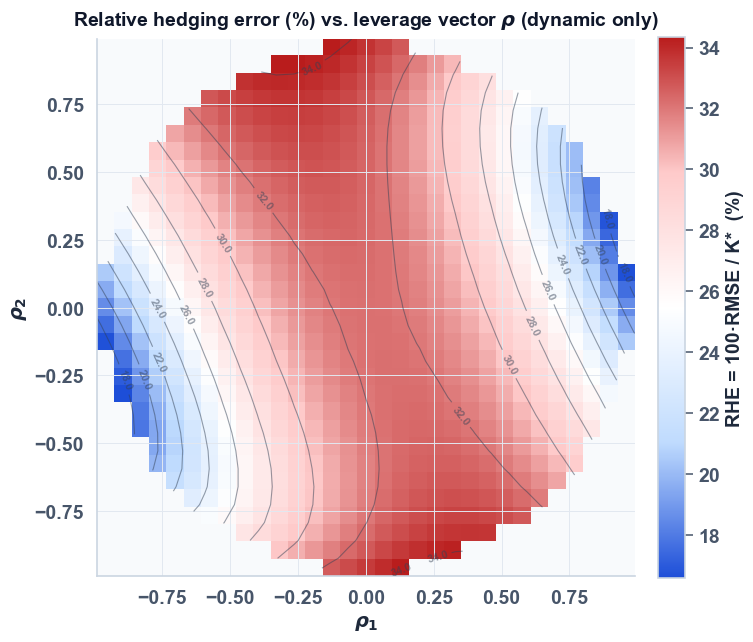}
    \caption{Relative hedging error as a function of the leverage vector $\bm{\rho}\in\mathbb{R}^d$ with $\|\bm{\rho}\|_2\leq 1$.}
    \label{fig:7}
\end{figure}
In the numerical experiment we vary $\bm{\rho}$ while holding fixed all other parameters, the monitoring grid, and the simulation budget, and we reuse common random numbers across $\bm{\rho}$ so that cross-$\bm{\rho}$ differences reflect leverage effects rather than Monte Carlo noise. For each $\bm{\rho}$ we compute the terminal error of the dynamic-only variance-optimal hedge and report the relative hedging error (RHE), i.e.\ the RMSE normalized by the absolute fair swap strike returned by the same experiment. The two-dimensional sweep over the admissible disk (Figure~\ref{fig:7}) exhibits pronounced anisotropy: the RHE is not a function of $\|\bm{\rho}\|_2$ alone but depends materially on the \emph{direction} of $\bm{\rho}$. This is consistent with the theoretical dependence through $V^\perp=A^\top(I-\bm{\rho}\bm{\rho}^\top)A$, because changing direction modifies the projector $\bm{\rho}\bm{\rho}^\top$ and therefore which covariance-factor directions are declared hedgeable via the traded assets.

To isolate the geometry, we consider one-dimensional slices. Figure~\ref{fig:rho_slices_dyn} reports the dynamic-only RHE for (i) a coordinate slice $\bm{\rho}=(x,0)$ with $x\in[-1,1]$ and (ii) a diagonal slice $\bm{\rho}=(t,t)$ with $|t|<1/\sqrt{2}$ (to maintain $\|\bm{\rho}\|_2<1$). Two features are robust. First, the curves are essentially even in the slice parameter (approximately symmetric under $x\mapsto-x$ and $t\mapsto-t$), which is consistent with the fact that the rank-one matrix $\bm{\rho}\bm{\rho}^\top$ (and hence $V^\perp$ and the projected quadratic variation driving $A$) is invariant under $\bm{\rho}\mapsto-\bm{\rho}$; in economic terms, flipping the sign of instantaneous correlation changes the direction of co-movements but not the amount of covariance-factor variance that is transmitted into traded-asset shocks. Second, the diagonal slice dominates the coordinate slice for matched correlation coordinate: for $|x|$ near one the coordinate slice achieves markedly lower RHE, while the diagonal slice cannot reach such extreme values because admissibility restricts $|t|<1/\sqrt{2}$; this is a purely multivariate constraint absent in $d=1$ and illustrates that leverage ``strength'' is limited by the unit-ball condition.

\begin{figure}[!htbp]
    \centering
    \includegraphics[width=0.8\linewidth]{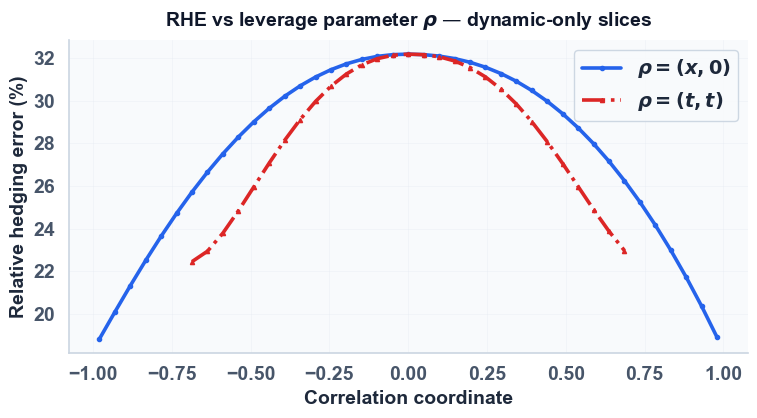}
    \caption{Dynamic-only RHE along leverage slices. The coordinate slice $\bm{\rho}=(x,0)$ is defined on $x\in[-1,1]$, while the diagonal slice $\bm{\rho}=(t,t)$ must satisfy $|t|<1/\sqrt{2}$ to enforce $\|\bm{\rho}\|_2<1$. The near-even symmetry is consistent with invariance of $\bm{\rho}\bm{\rho}^\top$ under $\bm{\rho}\mapsto-\bm{\rho}$, and differences across slices reflect directional dependence through $V^\perp=A^\top(I-\bm{\rho}\bm{\rho}^\top)A$.}
    \label{fig:rho_slices_dyn}
\end{figure}

Finally, we connect these slice observations to the effective leverage parametrization used in the semi-static comparisons. Since both the dynamic hedge $\bm{\theta}^0$ and the residual term $Q_{\mathrm{aff}}(t)$ depend on $\bm{\rho}$ through $A^\top\bm{\rho}$ and $\bm{\rho}\bm{\rho}^\top$ (hence through $A^\top\bm{\rho}\bm{\rho}^\top A$), it is natural to summarize leverage by $\rho_{\mathrm{eff}}=\|A^\top\bm{\rho}\|_2/\|A^\top\|_{\mathrm{op}}$. Along the diagonal $\bm{\rho}=(t,t)$, $\rho_{\mathrm{eff}}$ increases monotonically with $|t|$ (until the admissible boundary), and plotting RHE against $\rho_{\mathrm{eff}}$ collapses part of the directional variation because it measures alignment with the dominant covariance-loading directions encoded by $A$. In the semi-static extension (Figure~\ref{fig:8}), the same structural driver persists: the factor $V^\perp$ enters the quadratic objects $A,B,C$ and thus shifts both the baseline dynamic-only risk and the achievable variance reduction $B^\top C^{-1}B$. Hence, increasing leverage alignment reduces the irreducible orthogonal component (smaller $V^\perp$), while static product payoffs further span residual directions that remain unhedgeable by dynamic trading alone unless the degenerate complete-boundary case of the preceding remark is attained.

\begin{figure}[!htbp]
    \centering
    \includegraphics[width=0.75\linewidth]{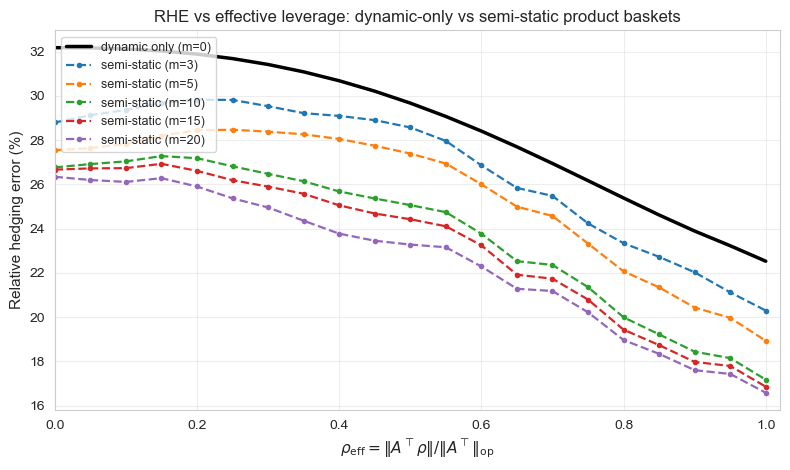}
    \caption{Relative hedging error attainable with portfolios of different effective sizes as a function of the scalar $\rho_{\mathrm{eff}}$.}
    \label{fig:8}
\end{figure}
\FloatBarrier

\section{Conclusion}In this paper, we developed a semi-static variance-optimal hedging framework for multi-asset contingent claims in incomplete markets driven by stochastic covariance processes. The global mean-variance problem was shown to decompose into an inner Galtchouk--Kunita--Watanabe projection onto dynamic trading in the underlying assets and an outer finite-dimensional quadratic optimization over static auxiliary instruments. The multidimensional spanning formulas provided a systematic instrument-selection principle and a continuous-path decomposition for covariance-sensitive claims, clarifying in particular that exact replication of a continuous covariance swap requires additional variance-linked instruments, while jump-inclusive covariance claims must be treated as distinct products. In the continuous affine setting, and especially in the Wishart model, the Fourier--GKW machinery yielded explicit representations for the dynamic hedge and for the residual covariance kernel governing the outer problem. The numerical experiments showed that suitably chosen static cross-asset instruments can materially reduce the mean-squared hedging error beyond the purely dynamic hedge. Taken together, these results support semi-static hedging as a tractable and effective framework for managing correlation and covariance risk in incomplete multi-asset markets.

\bibliographystyle{acm}
\bibliography{Citations}
\end{document}